\definecolor{DarkBlue}{RGB}{0,0,150}
\theoremstyle{plain}
\newtheorem{theorem}{Theorem}
\newtheorem{proposition}[theorem]{Proposition}
\newtheorem{lemma}[theorem]{Lemma}
\newtheorem{corollary}[theorem]{Corollary}
\newtheorem{conjecture}{Conjecture}
\newtheorem{openq}{Open Question}
\theoremstyle{definition}
\newtheorem{definition}[theorem]{Definition}
\newtheorem{construction}[theorem]{Construction}
\theoremstyle{remark}
\newtheorem{remark}{Remark}
\newcommand{\NN}{\mathbb{N}}
\newcommand{\ZZ}{\mathbb{Z}}
\newcommand{\RR}{\mathbb{R}}
\newcommand{\eps}{\varepsilon}
\renewcommand{\mod}{\bmod}
\DeclareMathOperator{\poly}{poly}
\newcommand{\Enc}{\mathsf{Enc}}
\newcommand{\Dec}{\mathsf{Dec}}
\newcommand{\cA}{{\cal A}}
\newcommand{\cB}{{\cal B}}
\newcommand{\Index}{\mathsf{Index}}
\newcommand{\Query}{\mathsf{Query}}
\DeclareMathAlphabet{\mathbfsf}{\encodingdefault}{\sfdefault}{bx}{n}
\newcommand{\sdist}[1]{\| #1 \|_{\sf s}}
\newcommand{\zo}{\{0,1\}}
\newcommand{\ThreeXOR}{\textsf{3XOR}\xspace}
\newcommand{\ThreeSUM}{\textsf{3SUM}\xspace}
\newcommand{\ThreeXORInd}{\textsf{3XOR-Indexing}\xspace}
\newcommand{\ThreeSUMInd}{\textsf{3SUM-Indexing}\xspace}
\newcommand{\kSUM}{\textsf{kSUM}\xspace}
\newcommand{\kSUMInd}{\textsf{kSUM-Indexing}\xspace}
\newcommand{\X}{\mathcal{X}}
\def\SUM{\ThreeSUM}
\def\SUMInd{\ThreeSUMInd}
\title{Data Structures Meet Cryptography: \\ 3SUM with Preprocessing}
\author{Alexander Golovnev\\Harvard\\\texttt{alexgolovnev@gmail.com} \and Siyao Guo\\NYU Shanghai\\\texttt{sg191@nyu.edu} \and Thibaut Horel\\MIT\\\texttt{thibauth@mit.edu} \and Sunoo Park\\MIT \& Harvard\\\texttt{sunoo@csail.mit.edu} \and Vinod Vaikuntanathan\\MIT\\\texttt{vinodv@csail.mit.edu}}
\date{}
\begin{document}
\maketitle

\begin{abstract}
  This paper shows several connections between
  data structure problems and cryptography against preprocessing attacks.
  Our results span data structure upper bounds,
  cryptographic applications, and data structure lower bounds,
  as summarized next.

  First, we apply Fiat--Naor inversion, a technique with cryptographic origins,
  to obtain a data structure upper bound.
  In particular, our technique yields a suite of algorithms with space $S$ and (online) time $T$
  for a preprocessing version of the $N$-input $\SUM$ problem
  where $S^3\cdot T = \widetilde{O}(N^6)$.
  This disproves a strong conjecture (Goldstein \emph{et al.}, WADS 2017)
  that there is no data structure that solves this problem
  for $S=N^{2-\delta}$ and $T = N^{1-\delta}$ for any constant $\delta>0$.

  Secondly, we show equivalence between lower bounds for a broad class of (static) data structure problems and
one-way functions in the random oracle model
  that resist a very strong form of preprocessing attack.
  Concretely, given a random function
  $F: [N] \to [N]$ (accessed as an oracle) we show how to {\em compile} it into
  a function $G^F: [N^2] \to [N^2]$ which resists $S$-bit preprocessing attacks
  that run in query time $T$ where $ST=O(N^{2-\eps})$
  (assuming a corresponding data structure lower bound on $\SUM$).
  In contrast, a classical result of Hellman tells us that $F$ itself can be
  more easily inverted, say with $N^{2/3}$-bit preprocessing in $N^{2/3}$ time.
  We also show that much stronger lower bounds follow from the hardness of
  $\kSUM$.
  Our results can be equivalently interpreted as
  security against adversaries that are very non-uniform, or have large auxiliary input,
  or as security in the face of a powerfully backdoored
  random oracle.

  Thirdly, we give non-adaptive lower bounds for $\SUM$
  which match the best known lower bounds for static data structure problems.
  Moreover, we show that our lower bound generalizes to a range of geometric problems,
  such as three points on a line, polygon containment, and others.
\end{abstract}

\vfill
\thispagestyle{empty}
\newpage
\tableofcontents
\thispagestyle{empty}
\newpage
\pagenumbering{arabic}

\section{Introduction}

\def\Omegat{\widetilde{\Omega}}

Cryptography and data structures have long enjoyed a productive
relationship~\cite{Hel80,fn00,LN93,NSW08,BHKN13,BN16,ANSS16,NY15,LN18,JLN19}: indeed,
the relationship has been referred to as a ``match made in heaven''~\cite{N13}.
In this paper, we initiate the study of a new connection between the two
fields, which allows us to construct novel cryptographic objects starting from
data structure lower bounds, and vice versa. Our results are three-fold. Our
first result is a new upper bound for a data structure version of the classical
$\SUM$ problem (called $\ThreeSUMInd$) using Fiat--Naor inversion~\cite{fn00}, a technique with cryptographic origins. This result refutes a strong conjecture due to Goldstein, Kopelowitz, Lewenstein and Porat~\cite{GKLP17}.
In our second and main result, we turn this connection around, and show a framework for constructing {\em one-way functions} in the random oracle model whose security bypasses known time/space tradeoffs, relying on any of a broad spectrum of (conjectured) data structure lower bounds (including for $\ThreeSUMInd$). As a third result, we show new lower bounds for a variety of data structure problems (including for $\ThreeSUMInd$) which match the state of the art in the field of static data structure lower bounds.

Next, we describe our results, focusing on the important special case of $\ThreeSUMInd$;
all of our results and methods extend to the more general $\kSUMInd$
problem where pairwise sums are replaced with $(k-1)$-wise sums for an
arbitrary constant integer $k$ independent of the input length.
Section~\ref{sec:threesum-bg} gives background on $\ThreeSUMInd$,
then Section~\ref{sec:intro-results} discusses our contributions.

\subsection{$\ThreeSUM$ and $\ThreeSUMInd$}\label{sec:threesum-bg}

One of the many equivalent formulations of the \ThreeSUM problem is the
following: given a set $A$ of $N$ integers, output $a_1,a_2,a_3\in A$ such that $a_1+a_2=a_3$.
There is an easy $O(N^2)$ time deterministic algorithm for $\ThreeSUM$. Conversely, the popular $\ThreeSUM$ conjecture states that there are no sub-quadratic algorithms for this problem~\cite{go95,e99}.

\begin{conjecture}[The ``Modern \ThreeSUM conjecture'']
\label{conj:3sum}
\ThreeSUM cannot be solved in time $O(N^{2-\delta})$ for any constant $\delta> 0$.
\end{conjecture}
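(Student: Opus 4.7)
The final statement is Conjecture~\ref{conj:3sum}, the celebrated 3SUM hardness assumption, which is a longstanding open problem rather than a theorem with a known proof. Accordingly, the honest plan is not to produce a proof but to outline what one would have to look like and why the obstacles are substantial; any unconditional $\Omega(N^{2-o(1)})$ lower bound for 3SUM on the word-RAM would be a spectacular breakthrough in complexity theory, and the authors clearly intend to use this conjecture as an axiom throughout the rest of the paper.

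If I were to attack the statement directly, the most natural first step would be to restrict attention to a structured computational model where lower bounds are tractable. In the $3$-linear decision tree model one might hope to inherit $\Omega(N^2)$ bounds known for more restricted algebraic decision trees, but the Gr{\o}nlund--Pettie $\tildeO{N^{3/2}}$ upper bound, later refined in the linear decision tree model by Kane--Lovett--Moran, essentially kills this route. One would instead need either a non-uniform cell-probe style argument or a circuit lower bound for an arithmetic formulation of 3SUM, both of which look out of reach with current techniques, since we do not even know how to prove super-linear lower bounds against general word-RAM algorithms for any explicit problem in $\np$.

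The more realistic path is a reduction: exhibit a more basic hardness assumption $X$ such that $X \Rightarrow$ subquadratic hardness of 3SUM. Natural candidates are APSP or SETH, but no such reduction is known in that direction, since 3SUM sits roughly as a peer of those problems in the fine-grained complexity hierarchy. The main obstacle is therefore conceptual: one must either identify a new, more fundamental conjecture from which 3SUM hardness provably follows, or develop word-RAM lower-bound techniques well beyond the current state of the art. Absent either of these, the best one can do is to adopt Conjecture~\ref{conj:3sum} as stated and build on it, which is exactly the stance the present paper takes; the supporting ``evidence'' to record is the long history of failed attempts at subquadratic algorithms~\cite{go95,e99} and the fact that almost all known algorithms either match or barely shave logarithmic factors off the $O(N^2)$ bound.
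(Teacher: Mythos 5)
You correctly recognize that Conjecture~\ref{conj:3sum} is an unproven hardness assumption, not a theorem, and the paper accordingly offers no proof of it — it is stated as a conjecture and used as an axiom. Your discussion of why an unconditional proof is out of reach and of the surrounding evidence is a fair account of the state of the art, and nothing further is expected here.
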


Conjecture~\ref{conj:3sum} has been helpful for understanding the precise hardness of many geometric problems~\cite{go95,dbdgo97,bvkt98,achmssy98,e99,bhp01,ahikllps01,seo03,aek05,ehpm06,cehp07,ahp08,aadhru12}. Furthermore, starting with the works of \cite{vww09,p10}, the \ThreeSUM conjecture has also been used for conditional lower bounds for many combinatorial~\cite{avw14,gklp14,kpp16} and string search problems~\cite{chc09,bcpps13,avww14,acll14,aklpps16,kpp16}.

Our main results relate to a preprocessing variant of $\ThreeSUM$ known as $\ThreeSUMInd$, which was first defined by Demaine and Vadhan~\cite{DV01} in an unpublished note and then by Goldstein, Kopelowitz, Lewenstein and Porat~\cite{GKLP17}. 
In $\ThreeSUMInd$, there is an offline phase where a computationally {\em unbounded} algorithm receives $A = \{a_1,\ldots,a_N\}$ and produces a data structure with $S$ words of $w$ bits each; and an online phase which is given the target $b$ and needs to find a pair $(a_i,a_j)$  such that $a_i+a_j=b$ by probing only $T$ memory cells of the data structure (\emph{i.e.}, taking ``query time'' $T$). The online phase does not receive the set $A$ directly, and there is no bound on the computational complexity of the online phase, only the number of queries it makes.

There are two simple algorithms that solve $\ThreeSUMInd$.
The first stores a sorted version of $A$ as the data structure (so $S=N$) and in the online phase, solves $\ThreeSUMInd$ in $T=O(N)$ time using the standard two-finger algorithm for $\ThreeSUM$. The second stores all pairwise sums of $A$, sorted, as the data structure (so $S=O(N^2)$) and in the online phase, looks up the target $b$ in $T=\widetilde{O}(1)$ time.\footnote{The notation $\widetilde{O}(f(N))$ suppresses poly-logarithmic factors in $f(N)$.} There were no other algorithms known prior to this work. This led \cite{DV01,GKLP17} to formulate the following three conjectures.

\begin{conjecture}[\cite{GKLP17}]
	\label{conj:weakintro}
	If there exists an algorithm which solves $\SUMInd$ with
	preprocessing space $S$ and $T = \widetilde O(1)$ probes
	then $S=\widetilde\Omega(N^2)$.
\end{conjecture}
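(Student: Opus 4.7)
The plan is to prove this cell-probe lower bound via an encoding / information-theoretic argument. I would fix a hard distribution $\cD$ on $N$-element sets $A$ --- for instance, $A$ chosen uniformly from $\binom{[U]}{N}$ with $U = \poly(N)$ large enough that the $\binom{N}{2}$ pairwise sums of $A$ are almost surely all distinct --- and suppose a deterministic data structure $D$ with $S$ cells of $w$ bits answers every $\ThreeSUMInd$ query with $T = (\log N)^{O(1)}$ cell probes. The goal is to build an encoder/decoder pair using $D(A)$ and a small amount of side information that recovers enough answers about $A$ to contradict a large entropy lower bound, ultimately forcing $Sw = \widetilde\Omega(N^2)$.

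I would proceed in three steps. First, sample a uniformly random subset $C$ of the $S$ cells of size roughly $S/(2T)$; a standard cell-sampling argument shows that for any fixed query $b$, all $T$ probes on $b$ land inside $C$ with probability at least $1/(2T)^T$, which is non-negligible for $T = (\log N)^{O(1)}$. Second, use this to argue that a decoder given the contents of $C$, the addresses of the sampled cells, and the probe sequences for recovered queries can correctly output a pair $(a_i, a_j)$ for a large random subset $Q$ of valid query targets --- roughly a $(2T)^{-T}$ fraction of $b \in A + A$. Third, compare the encoding length $\widetilde O(|C| \cdot w + |Q| \cdot T \log S)$ against the information content of those answers; if the answers on $Q$ carry $\widetilde\Omega(|Q| \cdot \log N)$ bits of entropy conditioned on the encoding, we recover the $\widetilde\Omega(N^2)$ bound on $S$.

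The main obstacle is the last step: establishing that the joint entropy of the answer set is genuinely $\widetilde\Omega(|Q| \log N)$ rather than the much smaller $\widetilde O(N \log N)$ one gets just from decoding $A$ itself. Individual answers $(a_i, a_j)$ are $O(\log N)$-bit strings and are highly correlated --- two queries that share a summand collectively reveal three elements rather than four --- so naive entropy bounds yield only $S = \widetilde\Omega(N)$. Closing this gap appears to require a new structural insight, for instance an argument that the ``answer graph'' on $Q$ contains $\Omega(N^2)$ pairs that are jointly incompressible even given $A$, or a tailored adaptation of the richness / asymmetric-communication framework of Miltersen--Nisan--Safra--Wigderson to the additive structure of $\ThreeSUM$. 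This is precisely where encoding techniques for $\ThreeSUMInd$ stall, which is why the conjecture --- despite being a natural target --- has so far resisted proof.
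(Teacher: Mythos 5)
The statement you were asked to prove is a \emph{conjecture} in the paper (Conjecture~\ref{conj:weakintro}, attributed to Goldstein et al.), not a theorem: the paper never proves it, and in fact explicitly states that proving anything stronger than $T \geq \Omega(\log N / \log(S/N))$ would break a long-standing barrier in static data structure lower bounds. So the correct answer here is exactly the honest one you arrived at --- the encoding approach stalls, and the conjecture remains open --- and your diagnosis of \emph{why} it stalls is precisely right. The incompressibility argument can recover at most the full input $A$, which carries only $\Theta(N \log N)$ bits of entropy, so it forces the number of sampled cells $v$ to be $\Omega(N)$ but cannot force $v = \Omega(N^2/\log N)$. Combined with the cell-sampling gap, this yields the paper's actual Theorem~\ref{thm:t2}, namely $S = \widetilde{\Omega}(N^{1+1/T})$, which for $T = \widetilde{O}(1)$ degenerates to barely superlinear and is nowhere near $N^2$. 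Your ``answer graph'' observation that decoded pairs share summands and hence jointly carry far less than $\Omega(N^2 \log N)$ bits is the right way to see the ceiling.

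One technical correction to the sketch itself: taking $|C| \approx S/(2T)$ so that each query's $T$ probes all land in $C$ with probability $(2T)^{-T}$ does not give a non-negligible fraction when $T = (\log N)^{c}$ for $c > 1$ --- that probability is $2^{-\Theta(T \log T)}$, which is superpolynomially small. The working parametrization (used in the paper's proof of Theorem~\ref{thm:t2}) is to take $|C| = v$ with $v \approx T + (S-T)/N^{1/T}$, so that $(v/S)^T \approx 1/N$, a polynomial fraction of queries survives, and one can hope to recover a constant fraction of the input. This does not change your main conclusion --- the entropy ceiling of $N \log N$ bits is the genuine obstacle --- but it is worth knowing that even the cell-sampling step needs to be tuned differently than in your sketch to be non-vacuous.
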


\begin{conjecture}[\cite{DV01}]
	\label{conj:dvintro}
	If there exists an algorithm which solves $\SUMInd$ with
	preprocessing space $S$ and $T$ probes, then
	$ST=\widetilde\Omega(N^2)$.
\end{conjecture}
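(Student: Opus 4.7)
The plan is to prove a matching cell-probe lower bound via an encoding argument against a hard input distribution. Fix word size $w = \Theta(\log N)$ and draw $A = \{a_1,\ldots,a_N\}$ uniformly from $[N^{10}]$; with overwhelming probability all $\binom{N}{2}$ pairwise sums are distinct, so there are $\Theta(N^2)$ yes-queries and each correct answer identifies a specific pair $(i,j)$. The starting point is a standard cell-sampling compression scheme: sample a random subset $R \subseteq [S]$ of cells of density $\beta$, let $\cQ_R$ be the set of yes-queries whose entire $T$-probe set lies in $R$, and observe that a decoder holding $R$, the contents of $R$, and the single value $a_1$ can simulate the online algorithm on every $b \in \cQ_R$, recovering the graph of covered pairs on $[N]$. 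If with positive probability over $R$ this graph is connected, the decoder recovers all of $A$, yielding a valid encoding and hence a cell-probe lower bound.

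The quantitative step is to choose $\beta$ so as to balance encoding length against connectivity. Under a clean independence heuristic, a fixed query is covered with probability $\beta^T$, so $|\cQ_R| \approx \beta^T N^2$; standard random-graph thresholds demand $|\cQ_R| \gtrsim N \log N$ for connectivity, giving $\beta \gtrsim (\log N / N)^{1/T}$. The encoding has length $\widetilde{O}(\beta S w)$, which must exceed $H(A) = \Theta(N w)$, forcing $\beta S \gtrsim N$. Combining the two inequalities yields $S \gtrsim N \cdot (N/\log N)^{1/T}$. At $T = O(1)$ this already matches the conjectured $ST = \widetilde\Omega(N^2)$, but as $T$ grows the factor $(N/\log N)^{1/T}$ collapses toward a constant, and the bound degrades to the Larsen-style $T = \Omega(\log N / \log(Sw/N))$ tradeoff rather than the full polynomial bound we need.

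The hard part, and the reason Conjecture~\ref{conj:dvintro} remains open, is precisely this exponential loss in $T$. No known static cell-probe technique delivers polynomial $T$ when $S$ is only modestly larger than $N$: cell-sampling is intrinsically loose because a single cell can be shared across many queries without contributing fresh information each time, and no general method is known to rule out such overloading. Closing the gap would require either (i) exploiting a structural feature specific to $\SUMInd$---each yes-query pins down two elements of $A$ simultaneously, and the additive action $b \mapsto b + a$ generates a rich symmetry on the query set---via an amortized or chaining argument that forbids cells from being reused across independent witnesses, or (ii) a fundamentally new non-adaptive framework, perhaps through a direct-sum or lifting theorem in asymmetric communication complexity. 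Absent such a breakthrough, the scheme above will reproduce only the polylogarithmic $T$ lower bound recoverable from Larsen's framework, falling short of the polynomial $ST$ tradeoff asserted by the conjecture.
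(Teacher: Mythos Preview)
The statement you are attempting to prove is Conjecture~\ref{conj:dvintro}, not a theorem; the paper does \emph{not} prove it and explicitly presents it as open. What the paper does prove is the weaker bound $S=\widetilde\Omega(N^{1+1/T})$ of Theorem~\ref{thm:t2}, and it remarks that any asymptotic improvement would break a long-standing barrier in static data structure lower bounds. So there is no ``paper's own proof'' to compare against.

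Your proposal is honest about this: you set up a cell-sampling encoding, derive $S\gtrsim N\cdot(N/\log N)^{1/T}$, and then correctly observe that this collapses to the Larsen-type tradeoff $T=\Omega(\log N/\log(S/N))$ rather than the conjectured $ST=\widetilde\Omega(N^2)$. That is exactly the bound the paper obtains (Theorem~\ref{thm:t2}), by an argument in the same family---an incompressibility/cell-sampling scheme, though the paper's version is tailored to non-adaptive probes and uses a slightly different recovery criterion (recovering a constant fraction of the input rather than arguing graph connectivity). Your third paragraph is not a proof step but a discussion of obstacles, and it accurately reflects the state of affairs the paper describes.

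In short: your write-up does not prove the conjecture, you acknowledge as much, and the paper does not prove it either. If the intent was to match the paper, you should be targeting Theorem~\ref{thm:t2}, not Conjecture~\ref{conj:dvintro}; your sketch already contains the right ingredients for that weaker statement, modulo making the encoding argument rigorous and noting the non-adaptivity restriction the paper imposes.
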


\begin{conjecture}[\cite{GKLP17}]
	\label{conj:strongintro}
	If there exists an algorithm which solves $\SUMInd$ with
	$T=\widetilde O(N^{1-\delta})$ probes for some $\delta>0$ then
	$S=\widetilde\Omega(N^2)$.
\end{conjecture}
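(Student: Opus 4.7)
To prove Conjecture~\ref{conj:strongintro}, I would work in the cell-probe model and aim to derive the space lower bound from an information-theoretic argument. Specifically, via the standard Miltersen--Nisan--Safra--Wigderson simulation, any $\ThreeSUMInd$ data structure with space $S$ (in words of $w = \Theta(\log N)$ bits) and query time $T$ yields an asymmetric communication protocol in which Alice (holding $A$) sends $O(T w)$ bits and Bob (holding $b$) sends $O(T \log S)$ bits, and together they decide whether there exist $a_i,a_j \in A$ with $a_i+a_j=b$. It would then suffice to show that this asymmetric game requires Alice to send $\widetilde\Omega(N^2)$ bits whenever Bob sends only $\widetilde O(N^{1-\delta} \log N)$ bits.

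To exhibit such a communication lower bound, I would focus on a hard-core family of inputs, most naturally Sidon sets $A \subseteq [U]$ for some large universe $U$, so that all $\binom{N}{2}$ pairwise sums are distinct. For such $A$ the ``accept set'' of queries $b$ is essentially an arbitrary $\Theta(N^2)$-size subset of a much larger universe, and $\ThreeSUMInd$ reduces to membership testing for that set. For this membership problem I would attempt a round-elimination or richness-style argument: if Bob's $T$ probes together with Alice's short message decide membership, then an averaging / encoding step should let us recover non-trivially many bits of the characteristic vector of the accept set from only the $Sw$-bit data structure, which (for a uniformly random Sidon set over a sufficiently large universe) would force $S = \widetilde\Omega(N^2)$.

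As a complementary route I would attempt a direct cell-sampling argument in the style of P\v{a}tra\c{s}cu--Thorup and Larsen: sample a random subset of $S' \ll S$ cells of the data structure, show that if $T$ is small then a non-negligible fraction of queries are decided by these sampled cells alone, and then use a counting bound on the number of distinct query-answer patterns realizable by so few cells to rule out $S \ll N^2$ whenever the accept sets of different valid inputs $A$ differ sufficiently. The additive structure of pair sums would be used to argue that the family of accept sets is ``rich'' enough to saturate this counting bound.

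The main obstacle is that the target statement asks for a polynomial-in-$N$ lower bound on $T$, whereas known cell-probe techniques for static problems peak at $T = \Omega(\log S / \log(Sw/N))$ (Larsen, FOCS 2012)---only a logarithmic bound. Establishing ``$T \leq N^{1-\delta} \Rightarrow S \geq \widetilde\Omega(N^2)$'' is far beyond the current state of the art and runs into a well-known barrier in cell-probe lower bounds. I would therefore expect each of the natural approaches above to hit exactly this wall, and any successful proof to require either a genuinely new information-theoretic tool tailored to the additive structure of $\ThreeSUM$, or a clever reduction from an additive-combinatorics problem that interacts nontrivially with unbounded preprocessing.
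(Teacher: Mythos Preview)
The conjecture you are trying to prove is \emph{false}, and the paper's contribution is precisely to refute it, not to prove it. Theorem~\ref{thm:intro-ub} (proved in detail as Theorem~\ref{thm:3sumindexing}) constructs, for every $0\leq\delta\leq 1$, an adaptive data structure for $\ThreeSUMInd$ with space $S=\widetilde{O}(N^{2-\delta})$ and query time $T=\widetilde{O}(N^{3\delta})$; taking $\delta=0.1$ yields $S=\widetilde{O}(N^{1.9})$ and $T=\widetilde{O}(N^{0.3})$, directly contradicting Conjecture~\ref{conj:strongintro}. The construction hashes the pairwise-sum function $f(i,j)=a_i+a_j$ down to a range of size $\widetilde{O}(N^2)$ and then applies the Fiat--Naor function-inversion data structure (Theorem~\ref{thm:fn}) with $|\mathcal{X}|=\widetilde{O}(N^2)$.

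Your proposal is thus aimed at an impossible target. To your credit, you correctly identify that the communication-complexity and cell-sampling routes you sketch would run into the Larsen barrier and could not possibly yield $S=\widetilde\Omega(N^2)$ for polynomial $T$; but the reason is not merely that the techniques are weak---the statement itself does not hold. The appropriate task here is to look for an \emph{upper bound}, not a lower bound.
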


These conjectures are in ascending order of strength:
\begin{center}
Conjecture~\ref{conj:strongintro} $\Rightarrow$ Conjecture~\ref{conj:dvintro}
$\Rightarrow$ Conjecture~\ref{conj:weakintro}.
\end{center}

In terms of lower bounds,
Demaine and Vadhan~\cite{DV01} showed that any $1$-probe
data structure for $\ThreeSUMInd$ requires space $S = \Omegat(N^2)$. They leave the case
of $T>1$ open. Goldstein \emph{et al.}~\cite{GKLP17} established connections between Conjectures~\ref{conj:weakintro} and~\ref{conj:strongintro} and the hardness of Set Disjointness, Set Intersection, Histogram Indexing and Forbidden Pattern Document Retrieval.

\subsection{Our Results}\label{sec:intro-results}

Our contributions are three-fold. First, we show better algorithms for
$\ThreeSUMInd$, refuting Conjecture~\ref{conj:strongintro}. Our construction
relies on combining the classical Fiat--Naor inversion algorithm, originally designed for cryptographic applications, with hashing. 
Secondly, we improve the lower bound of \cite{DV01} to arbitrary $T$.
Moreover, we generalize this lower bound to a range of geometric problems, such as
3 points on a line, polygon containment, and others.
As we argue later, any asymptotic improvement to our lower bound will result in a major breakthrough in static data structure lower bounds.

Finally, we show how to use the conjectured hardness of $\ThreeSUMInd$ for a new cryptographic application: namely, designing cryptographic functions that remain secure with massive amounts of preprocessing.
We show how to construct
one-way functions in this model assuming the hardness of a natural average-case variant of $\ThreeSUMInd$.
Furthermore, we prove that this construction generalizes to an \emph{explicit equivalence}
between certain types of hard data structure problems and OWFs in this preprocessing model.
This setting can also be interpreted as security against backdoored random oracles, a problem of grave concern in the modern world.

We describe these results in more detail below.

\subsubsection{Upper bound for $\ThreeSUMInd$}

\begin{theorem}\label{thm:intro-ub}
For every $0\leq\delta\leq 1$, there is an \emph{adaptive} data structure for \ThreeSUMInd with space $S=\widetilde{O}(N^{2-\delta})$ and query time $T=\widetilde{O}(N^{3\delta})$.
\end{theorem}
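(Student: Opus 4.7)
The plan is to recast $\SUMInd$ as an inversion problem for the pairwise-sum function and then apply the classical Fiat--Naor time--space tradeoff for function inversion. Given an instance $A = \{a_1, \ldots, a_N\}$, define $f_A : [N]^2 \to \ZZ$ by $f_A(i, j) = a_i + a_j$; a $\SUMInd$ query with target $b$ is precisely a request for any $(i,j) \in f_A^{-1}(b)$ when one exists. Fiat--Naor produces a data structure of size $S$ with online query time $T$ satisfying $S^3 T = \tildeO{M^3}$ for any function whose domain has size $M$, provided the online phase can evaluate the function in $\tildeO{1}$ time. Our domain has size $M = N^2$, so this gives $S^3 T = \tildeO{N^6}$; setting $S = N^{2-\delta}$ yields $T = \tildeO{N^{3\delta}}$, which is exactly the tradeoff claimed in the theorem.

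First I would store $A$ itself inside the data structure (using $N \le S$ cells), so that the online algorithm can evaluate $f_A(i,j)$ with two cell probes. I would then instantiate Fiat--Naor on $f_A$ regarded as a function with domain $[N]^2$. The resulting scheme is adaptive in the required sense, since Fiat--Naor's chain walks probe cells whose addresses depend on the contents of previously probed cells.

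A first technical complication is that the codomain of $f_A$ sits inside a potentially large integer universe, whereas Fiat--Naor is most cleanly stated when the codomain has size $O(M)$. I would address this by composing $f_A$ with a hash function $h$ drawn from a pairwise-independent family mapping into $[N^2]$ and inverting $g := h \circ f_A$. On query $b$, the online algorithm computes $h(b)$, invokes the inverter to obtain some $(i,j)$ with $g(i,j) = h(b)$, and verifies whether $a_i + a_j = b$. A pairwise-independence calculation shows that at any fixed point of $[N^2]$, the expected number of spurious $g$-preimages introduced by $h$-collisions is $O(1)$.

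The main obstacle is that Fiat--Naor's inverter succeeds on only a $1 - o(1)$ fraction of the image points and returns just one preimage of $h(b)$, which might correspond to an $h$-collision rather than a true preimage of $b$ under $f_A$. I would handle both issues at once by running $\mathrm{polylog}(N)$ independent Fiat--Naor instances, each built with a freshly sampled hash function and independent chain construction. For any fixed query $b$, with probability negligibly close to $1$, at least one instance both succeeds and uses a hash function that is injective at $b$, so its output must be a genuine preimage of $b$ under $f_A$; a union bound over the at most $N^2$ possible queries then yields a data structure correct on all queries with high probability. This amplification multiplies both $S$ and $T$ by only polylogarithmic factors, preserving the stated tradeoff up to the $\tildeO{\cdot}$ notation.
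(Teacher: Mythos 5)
Your proposal is correct and follows essentially the same route as the paper: view $\SUMInd$ as inverting $f_A(i,j)=a_i+a_j$, hash the range down to size $\widetilde{O}(N^2)$ so that Fiat--Naor (Theorem~\ref{thm:fn}) applies, and conclude the tradeoff $S^3T=\widetilde{O}(N^6)$. The only real divergence is in how the range is shrunk and how hash collisions are controlled. You invoke a generic pairwise-independent family $h:G\to[N^2]$, argue that for any fixed target $b$ a random $h$ separates $b$ from the rest of the sum-set $Z$ with constant probability, amplify by running $\mathrm{polylog}(N)$ fresh Fiat--Naor instances, and union-bound over the $|Z|\le N^2$ possible queries. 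The paper instead hashes via reduction modulo a prime: it picks $\log N$ random primes $p_1,\dots,p_{\log N}$ from an interval around $N^{2}$, bounds (via the prime number theorem and a prime-divisor count) the probability that some $z\in Z\setminus\{b\}$ collides with $b$ modulo a random prime by a small constant, and shows that with probability $\geq 1-1/N$ every query has a collision-free $p_i$ --- then derandomizes by existence, building one modular Fiat--Naor instance per prime (Lemma~\ref{lem:3sumindexing} and Theorem~\ref{thm:3sumindexing}). Both hash choices serve the same role and both incur only $\mathrm{polylog}(N)$ overhead. The paper's choice of modular reduction is motivated by the need (which it calls out explicitly) for a hash that is cheap to evaluate inside Fiat--Naor's chain walks; a $2$-universal family of the form $x\mapsto ax+b\bmod q$ satisfies this as well, so your version works, but you should say so rather than treat the hash family as a black box. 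Two small points worth making explicit: (i) the Fiat--Naor statement you want is the deterministic one that inverts \emph{every} point (Theorem~\ref{thm:fn}, obtained by brute-forcing the preprocessing randomness), which removes the ``$1-o(1)$ fraction'' issue you flag; and (ii) your final ``with high probability'' guarantee over the preprocessing randomness should be converted to a deterministic data structure by the same brute-force-over-randomness argument, since $\cA_1$ is computationally unbounded.
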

In particular, Theorem~\ref{thm:intro-ub} implies that by setting $\delta=0.1$, we get a data structure that solves $\ThreeSUMInd$ in space $S=\widetilde{O}(N^{1.9})$ and $T=\widetilde{O}(N^{0.3})$ probes, and thus refutes Conjecture~\ref{conj:strongintro}.

In a nutshell, the upper bound starts by considering the function $f(i,j)
= a_i+a_j$. This function has a domain of size $N^2$ but a potentially much
larger range. In a preprocessing step, we design a hashing procedure to convert
$f$ into a function $g$ with a range of size $O(N^2)$ as well, such that
inverting $g$ lets us invert $f$.  Once we have such a function, we use
Fiat and Naor~\cite{fn00}'s general space-time tradeoff
for inverting functions, which gives non-trivial data structures for
function \emph{inversion} as long as function \emph{evaluation} can be done
efficiently. Due to our definitions of the functions $f$ and $g$, we can
efficiently compute them at every input, which leads to efficient inversion of
$f$, and, therefore, an efficient solution to $\ThreeSUMInd$. For more details,
see Section~\ref{sec:ub}. We note that prior to this work, the result of Fiat
and Naor~\cite{fn00} was recently used by Corrigan-Gibbs and Kogan~\cite{cgk18}
for other algorithmic and complexity applications. In a concurrent work,
Kopelowitz and Porat~\cite{kp19} obtain a similar upper bound for
$\ThreeSUMInd$.

\subsubsection{Lower bound for $\ThreeSUMInd$ and beyond}

We show that any algorithm for $\SUMInd$ that uses a small number of probes requires large space, as expressed formally in Theorem~\ref{thm:lb-3sum}.
\begin{theorem}\label{thm:lb-3sum}
	For every \emph{non-adaptive} algorithm that uses space $S$ and query time $T$
	and solves $\SUMInd$, it holds that $S=\widetilde{\Omega}(N^{1+1/T})$.
\end{theorem}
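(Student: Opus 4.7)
The plan is to apply the cell-sampling / encoding technique (due to P\u{a}tra\c{s}cu, Thorup, Larsen and others) for static non-adaptive data structure lower bounds. The high-level idea is to fix a hard distribution over inputs, randomly sub-sample a tiny fraction of the data structure's cells, argue that this sub-sample still determines the answer to many queries, and finally use those answers to produce a short encoding of the input—which contradicts its entropy when $S$ is too small.

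For the hard distribution, take $A$ uniformly at random from the $N$-subsets of $[U]$ with $U=\Theta(N^2)$, so that $H(A)=\Theta(N\log N)$ and, with high probability, the $\binom{N}{2}$ pairwise sums of $A$ are almost all distinct. Since the data structure is non-adaptive, each query $b$ has a deterministic probe set $\Gamma(b)\subseteq[S]$ with $|\Gamma(b)|=T$. Sample $R\subseteq[S]$ uniformly with $|R|=S/K$, where $K=c(N/\log N)^{1/T}$ for a small constant $c>0$. Any single query $b$ satisfies $\Gamma(b)\subseteq R$ with probability $\approx K^{-T}=\Theta(\log N / N)$, so in expectation $\widetilde{\Omega}(N\log N)$ queries are \emph{captured} (entirely served by $R$), of which a constant fraction are valid sums in $A$.

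Given $R$ together with the $w$-bit contents of the cells of $R$, one can simulate the data structure on every captured query and thereby recover the witnessing pair for every captured valid query. I would then argue that essentially all of $A$ is recovered: each $a_i$ belongs to $N-1$ valid queries, $\Theta(\log N)$ of which are captured in expectation, so $\Pr[a_i \text{ not recovered}]\leq 1/\mathrm{poly}(N)$ and hence only $o(N)$ elements of $A$ remain uncovered on average; these can be listed explicitly using $o(N\log N)$ bits of advice. The resulting encoding of $A$ is the tuple (indices of $R$, contents of $R$, list of exceptional elements), of total length $\log\binom{S}{S/K}+(S/K)\cdot w+o(N\log N)$. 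Requiring this to exceed $H(A)=\Omega(N\log N)$ yields $(S/K)\cdot w=\widetilde{\Omega}(N\log N)$, i.e., $Sw=\widetilde{\Omega}(KN\log N)=\widetilde{\Omega}(N^{1+1/T})$, and hence $S=\widetilde{\Omega}(N^{1+1/T})$ under the standard cell-probe convention $w=\operatorname{polylog}(N)$.

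The main obstacle is controlling the dependency between the events ``$\Gamma(b)\subseteq R$'' for the different queries $b$ involving a fixed $a_i$: these are positively correlated through shared probes and a naive Chernoff bound fails. I would handle this either via Janson's inequality—bounding the pairwise overlap $|\Gamma(b)\cap\Gamma(b')|$ using the global probe budget $T\cdot|Q|$ distributed across $S$ cells—or via a direct second-moment computation; in either case, the key quantitative ingredient is that under the hypothetical $S=o(N^{1+1/T})$ regime the per-cell probe load is too small to allow pathological correlations, so coverage of $A$ (up to an $o(N)$ residue) goes through.
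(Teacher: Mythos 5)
Your high-level plan — cell-sampling plus an incompressibility argument — is indeed the shape of the paper's proof, and you correctly flag where your specific route becomes dicey. But the obstacle you identify is real and is not resolved by the sketch, so this is a genuine gap rather than a completed alternative.

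The problem is in the order of the randomness. You randomize the sampled cell-set $R$ and then ask, for a fixed $i$, whether at least one of the $N-1$ queries $a_i+a_j$ has its probe set inside $R$. Those events share the randomness of $R$ and are positively correlated, so you invoke Janson or a second-moment bound — but the required overlap control depends on how the algorithm distributes its probes, which you have no handle on in the worst case (probes could be heavily concentrated on a few cells). Moreover, you are aiming for \emph{almost all} of $A$ to be recovered (a $o(N)$ residue), which forces a concentration argument and makes the correlation issue load-bearing. The paper sidesteps both issues. It first \emph{fixes} a set $V$ of cells deterministically, using only a first-moment averaging argument over random $V$ to guarantee that $G_V=\{b:\Query(b)\subseteq V\}$ has density at least $1/N$ in $G$; no concentration is needed here because one only needs the existence of a single good $V$. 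Then it lets the \emph{input} $A$ be random. With $V$ and hence $G_V$ fixed, the events ``$a_i+a_j\in G_V$'' for $j\neq i$ are genuinely independent over the randomness of the $a_j$, so a direct computation shows $\Pr[g(i)\neq\bot]\geq 1-(1-|G_V|/|G|)^{N-1}$ with no Janson or Chernoff at all. And rather than ``almost all,'' the paper only needs a constant fraction of indices to be recoverable (expected $\geq N/4$, then Markov to get $\geq N/5$ with constant probability), which is exactly what linearity of expectation delivers. Finally, since $V$ is deterministic (it depends only on the algorithm, not on $A$), the encoding need not spend bits on describing $V$ at all, which is another small cleanup your scheme doesn't get.

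Two further things your sketch glosses over: (i) simulating $\cA_2$ on a captured query returns only \emph{indices}, not values; one must invoke the footnoted WLOG that $D_A$ contains a copy of $A$ so $\cA_2$ can return the values at the cost of $O(1)$ extra probes, and those probes must also be accounted for in $\Query(b)$; and (ii) an answer to a captured query need not mention $a_i$ at all if the target admits another decomposition — the paper explicitly conditions on uniqueness (``every decomposition of $g(i)$ contains $a_i$''), which costs a union bound over $\binom{N-1}{k-1}$ competing decompositions and is where the hypothesis $|G|\geq N^{k-1}$ enters. Your brief remark about pairwise sums being ``almost all distinct'' points at this but doesn't carry the load. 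If you adopt the paper's ordering of randomness (deterministic $V$ first, then random $A$) you could drop the Janson machinery entirely and the rest of your outline would go through essentially as written.
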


The lower bound gives us meaningful (super-linear) space bounds for nearly
logarithmic $T$. Showing super-linear space bounds for static data structures
for $T = \omega(\log N)$ probes is a major open question with significant
implications~\cite{Siegel04, patrascu08structures, PTW10, Lar12,dgw19}.

The standard way to prove super-linear space lower bounds for $T=O(\log N)$ is
the so-called \emph{cell-sampling} technique. Applying this technique amounts
to showing that one can recover a fraction of the input by storing a subset of
data structure cells and then using an incompressibility argument.  This
technique applies to data structure problems which have the property that one
can recover some fraction of the input given the answers to \emph{any}
sufficiently large subset of queries.

Unfortunately, the $\ThreeSUMInd$ problem does not have this property and the
cell-sampling technique does not readily apply. Instead we use a different
incompressibility argument, closer to the one introduced by Gennaro and
Trevisan in~\cite{GT00} and later developed in~\cite{DTT10,dod17}. We argue
that given a sufficiently large random subset of cells, with high probability
over a random choice of input, it is possible to recover a constant fraction of
the input. It is crucial for our proof that the input is chosen at random after
the subset of data structure cells, yielding a lower bound only for
\emph{non-adaptive} algorithms.

Next, we show how to extend our lower bound to other data structure problems.
For this, we define $\ThreeSUMInd$-hardness, the data structure analogue of
$\ThreeSUM$-hardness. In a nutshell, a data structure problem is
$\ThreeSUMInd$-hard if there exists an efficient data structure reduction from
$\ThreeSUMInd$ to it. We then show how to adapt known reductions from
$\ThreeSUM$ to many problems in computational geometry and obtain efficient
reductions from $\ThreeSUMInd$ to their data structure counterparts. This in
turns implies that the lower bound in Theorem~\ref{thm:lb-3sum} carries over to
these problems as well.

\subsubsection{Cryptography against massive preprocessing attacks}
\def\BD{\mathsf{BD}}

In a seminal 1980 paper, Hellman~\cite{Hel80} initiated the study of algorithms for inverting (cryptographic) functions with preprocessing. In particular, given a function $F:[N] \to [N]$ (accessed as an oracle), an adversary can run in unbounded time and produce a data structure of $S$ bits.\footnote{The unbounded preprocessing time is amortized over a large number of function inversions. Furthermore, typically the preprocessing time is $\widetilde{O}(N)$.} Later, given access to this data structure and (a possibly uniformly random) $y\in [N]$ as input, the goal of the adversary is to spend $T$ units of time and invert $y$, namely output an $x \in [N]$ such that $F(x) = y$.  It is easy to see that bijective functions $F$ can be inverted at all points $y$ with space $S$ and time $T$ where $ST=O(N)$. Hellman showed that a random function $F$ can be inverted in space $S$ and time $T$ where $S^2T = O(N^2)$, giving in particular a solution with $S =T=O(N^{2/3})$. Fiat and Naor~\cite{fn00} provided a rigorous analysis of Hellman's tradeoff and additionally showed that a worst-case function can be inverted on a worst-case input in space $S$ and time $T$ where $S^3T = O(N^3)$, giving in particular a solution with $S=T=O(N^{3/4})$.
A series of follow-up works~\cite{BBS06,DTT10,AACKPR17} studied time-space tradeoffs for inverting one-way permutations, one-way functions and pseudorandom generators. In terms of lower bounds, Yao~\cite{Yao90} showed that for random functions (and permutations) $ST=\Omega(N)$.
Sharper lower bounds, which also quantify over the success probability and work for other primitives such as pseudorandom generators and hash functions, are known from recent work~\cite{GGKT05,Unruh07,
dod17,CDGS18,CDG18,AACKPR17}.

Hellman's method and followups have been extensively used in practical cryptanalysis, for example in the form of so-called ``rainbow tables''~\cite{Oechslin}. With the increase in storage and available computing power (especially to large organizations and nation states), even functions that have no inherent weakness could succumb to preprocessing attacks.
In particular, when massive amounts of (possibly distributed) storage is at the adversary's disposal, $S$ could be $\Omega(N)$, and the preprocessed string could simply be the function table of the inverse function $F^{-1}$ which allows the adversary to invert $F$ by making a
single access to the $S$ bits of preprocessed string.

One way out of this scenario is
to re-design a new function $F$ with a larger domain.  This is a time-consuming and complex process~\cite{aes,sha}, taking several years, and is fraught with the danger that the new function, if it does not undergo sufficient cryptanalysis, has inherent weaknesses, taking us out of the frying pan and into the fire.

We consider an alternative method that {\em immunizes} the function $F$ against large amounts of preprocessing. In particular, we consider an adversary that can utilize $S \gg N$ bits of preprocessed advice, but can only access this advice by making a limited number of queries, in particular $T \ll N$. This restriction is reasonable when accessing the adversary's storage is expensive, for example when the storage consists of slow but massive memory, or when the storage is distributed across the internet, or when the adversary fields a stream of inversion requests. (We note that while we restrict the number of queries, we do not place any restrictions on the runtime.)

In particular, we seek to design an {\em immunizing compiler} that uses oracle access to $F$ to compute a function $G(x) = G^F(x)$. We want $G$ to remain
secure (for example, hard to invert)
	even against an adversary that
	can make $T$ queries to a preprocessed string of length $S$ bits. Both the preprocessing and the queries can depend on the design of the compiler $G$.
Let $G: [N'] \to [N']$.
	To prevent the inverse table
	attack (mentioned above), we require that $N' > S$.

\paragraph{From Data Structure Lower Bounds to Immunizing Compilers.}
We show how to use data structure lower bounds to construct immunizing compilers.
We illustrate such a compiler here assuming the hardness of the $\ThreeSUMInd$
problem. The compiler proceeds in two steps.
\begin{enumerate}
	\item First, given oracle access to a random
function $F:[2N]\to[2N]$, construct a new (random)
function $F':[N]\to[N^2]$
by letting $F'(x) = F(0,x) \| F(1,x)$.
\item Second, let $G^F(x,y) = F'(x) + F'(y)$ (where the addition is interpreted, e.g., over the integers).
	\end{enumerate}

Assuming the hardness of $\ThreeSUMInd$ for space $S$ and $T$ queries,
we show that this construction is one-way against adversaries with $S$ bits of
preprocessed advice and $T$ online queries. (As stated before, our result is actually stronger: the function remains uninvertible even if the adversary could run for unbounded time in the online phase, as long as it can make only $T$ queries.)
Conjecture~\ref{conj:dvintro}
of Demaine and Vadhan, for example, tells us that
this function is uninvertible as long as $ST = N^{2-\epsilon}$ for any constant $\epsilon>0$. In other words, assuming (the average-case version of) the $\ThreeSUMInd$ conjecture of \cite{DV01}, this function is as uninvertible as a {\em random} function with the same domain and range.

This highlights another advantage of the immunization approach: assume that we have several functions (modeled as independent random oracles) $F_1,F_2,\ldots,F_\ell$ all of which are about to be obsolete because of the increase in the adversary's space resources. Instead of designing $\ell$ independent new functions $F_1',\ldots,F_\ell'$, one could use our {\em immunizer} $G$ to obtain, in one shot, $F_i' = G^{F_i}$ that are as uninvertible as $\ell$ new random functions.

\paragraph{A General Connection.}
In fact, we show a much more general connection between (average-case) data structure lower bounds and immunizing compilers. In more detail, we formalize a data structure problem by a function $g$ that takes as input the data $d$ and a ``target'' $y$ and outputs a ``solution'' $q$.
In the case of $\ThreeSUMInd$, $d$ is the array of $n$ numbers $a_1,\ldots,a_n$, and $q$ is
a pair of indices $i$ and $j$ such that $a_i+a_j=y$. We identify a key property of the data structure problem, namely {\em efficient query generation}. The data structure problem has an efficient query generator if there is a function that, given $i$ and $j$, makes a few queries to $d$ and outputs $y$ such that $g(d,y) = (i,j)$. In the case of $\ThreeSUMInd$, this is just the function that looks up $a_i$ and $a_j$ and outputs their sum.

We then show that any (appropriately hard) data structure problem with an efficient query generator gives us a one-way function in the preprocessing model. In fact, in Section~\ref{sec:cryptods}, we show an equivalence between the two problems.

\paragraph{The Necessity of Unproven Assumptions.}
The one-wayness of our compiled functions rely on an unproven assumption, namely the hardness of the $\ThreeSUMInd$ problem with relatively large space and time (or more generally, the hardness of a data structure problem with an efficient query generator).
We show that unconditional constructions are likely hard to come by in that they would result in significant implications in circuit complexity.

In particular,
a long-standing open problem in computational complexity is to find a function $f : \{0, 1\}^n \to \{0, 1\}^n$ which cannot be computed by binary circuits of linear size $O(n)$ and depth $O(\log n)$~\cite[Frontier~3]{Valiant, AB2009}. We show that even a weak one-way function in the random oracle model with preprocessing (for specific settings of parameters) implies a super-linear circuit lower bound. Our proof in Section~\ref{sec:cktlb} employs the approach used in several recent works~\cite{dgw19,viola2018lower,cgk18,ramamoorthy2019equivalence}.

\paragraph{Relation to Immunizing Against Cryptographic Backdoors.}
Backdoors in cryptographic algorithms pose a grave concern~\cite{DBLP:conf/ccs/CheckowayMGFC0H16, DBLP:conf/uss/CheckowayNE0LRBMSF14,blogpost}, and a natural question is whether one can modify an entropic but imperfect (unkeyed) function, which a powerful adversary may have tampered with, into a function which is provably hard to invert even to such an adversary. In other words, can we use a ``backdoored'' random oracle to build secure cryptography? One possible formalization of a backdoor is one where an unbounded offline adversary may arbitrarily preprocess the random oracle into an exponentially large lookup table to which the (polynomial-time) online adversary has oracle access. It is easy to see that
this formalization is simply an alternative interpretation of (massive) preprocessing attacks.
Thus, our result shows how to construct one-way functions in this model assuming the hardness of a natural average-case variant of $\ThreeSUMInd$.

On immunizing against backdoors, a series of recent works~\cite{DGGJR15,BFM18,RTYZ18,FJM18} studied backdoored primitives including pseudorandom generators and hash functions.   In this setting, the attacker might be given some space-bounded backdoor related to a primitive, which could allow him to break the system more easily.
In particular, backdoored hash functions and random oracles are studied in~\cite{BFM18,FJM18}. Both of them observe that immunizing against a backdoor
for a single unkeyed hash function might be hard.  For this reason,
\cite{BFM18} considers the problem of combining two random oracles (with two
independent backdoors).
Instead, we look at the case of a single random oracle
but add a restriction on the size of the advice.
\cite{FJM18} considers the setting of keyed functions such as (weak) pseudorandom functions, which are easier to immunize than unkeyed functions of the type we consider in this work.

\paragraph{The BRO model and an alternative immunization strategy.}
As mentioned just above, the recent work of~\cite{BFM18} circumvents the problem of massive preprocessing
in a different way, by assuming the existence of at least two independent (backdoored)
random oracles. This allows them to use techniques from two-source extraction
and communication complexity to come up with an (unconditionally secure) immunization strategy. A feature of their approach is that they can tolerate unbounded preprocessing that is {\em separately} performed on the two (independent) random oracles.

\paragraph{Domain extension and indifferentiability.}
Our immunization algorithm is effectively a domain extender for the function
(random oracle) $F$. While it is too much to hope that $G^F$ is
indifferentiable from a random oracle~\cite{DGHM13}, we show that it could
still have interesting cryptographic properties such as one-wayness. We leave
it as an interesting open question to show that our compiler preserves other
cryptographic properties such as pseudorandomness, or alternatively, to come up
with other compilers that preserve such properties.

\subsection{Other related work}

\paragraph{Non-uniform security, leakage, and immunizing backdoors.}
A range of work on non-uniform security, preprocessing attacks,
leakage, and immunizing backdoors can all be seen as addressing
the common goal of achieving security against
powerful adversaries that attack a cryptographic primitive given
access to some ``advice'' (or ``leakage'' or ``backdoor information'')
that was computed in advance during an unbounded preprocessing phase.

On non-uniform security of hash functions, recent works \cite{Unruh07,
dod17,CDGS18} studied the \emph{auxiliary-input random-oracle model} in which an
attacker can compute arbitrary $S$ bits of leakage before attacking the system
and make $T$ additional queries to the random oracle.  Although our model is similar in
that it allows preprocessed leakage of a random oracle, we differ significantly
in two ways: the size of the leakage is larger, and the attacker only has
oracle access to the leakage. Specifically, their results and technical tools only apply to the setting where
the leakage is smaller than the random oracle truth table, whereas our
model deals with larger leakage.  Furthermore, the random
oracle model with auxiliary input allows the online adversary to access and
depend on the leakage arbitrarily while our model only allows a bounded
number of oracle queries to the leakage; our model is more realistic
for online adversaries with bounded time and which cannot read the entire
leakage at query time.

\paragraph{Kleptography.}
The study of backdoored primitives is also related to --- and sometimes falls within the field of --- \emph{kleptography},  originally introduced by Young and Yung \cite{YY97,YY96b,YY96a}.
A \emph{kleptographic attack} ``uses cryptography against
cryptography'' \cite{YY97}, by
changing the behavior of a cryptographic system in a fashion undetectable to an honest user with black-box access to the cryptosystem, such that the use of the modified system leaks some secret information (\emph{e.g.}, plaintexts or key material)
to the attacker who performed the modification.
An example of such an attack might be to modify
the key generation algorithm of an encryption scheme such that an adversary in possession of a
``backdoor'' can derive the private key from the public key, yet an honest user finds the
generated key pairs to be indistinguishable from correctly produced ones.

\paragraph{Data-structure versions of problems in fine-grained complexity.}
While the standard conjectures about the hardness of CNF-SAT, \SUM, OV and APSP concern \emph{algorithms}, the OMV conjecture claims a \emph{data structure} lower bound for the Matrix-Vector Multiplication problem. While algorithmic conjectures help to understand \emph{time} complexity of the problems, it is also natural to consider data structure analogues of the fine-grained conjectures in order to understand \emph{space} complexity of the corresponding problems. Recently, Goldstein \emph{et al.}~\cite{GKLP17,glp17} proposed data structure variants of many classical hardness assumptions (including \SUM and OV). Other data structure variants of the \SUM problem have also been studied in~\cite{DV01,bw09,cl15,ccilmo19}. In particular, Chan and Lewenstein~\cite{cl15} use techniques from additive combinatorics to give efficient data structures for solving \SUM on \emph{subsets} of the preprocessed sets.

\section{Preliminaries}

\subsection{Notation}

When an uppercase letter represents an integer, we use the convention that the
associated lowercase letter represents its base-$2$ logarithm: $N = 2^n,
S=2^s$, etc. $[N]$ denotes the set $\{1,\dots, N\}$ that we identify with
$\zo^n$. $x\|y$ denotes the concatenation of bit strings $x$ and $y$.
PPT stands for probabilistic polynomial time.

We do not consistently distinguish between random variables and their
realizations, but when the distinction is necessary or useful for clarity, we
denote random variables in boldface.

\subsection{\kSUMInd}

This paper focuses on the variant of \ThreeSUM known as \ThreeSUMInd,
formally defined in \cite{GKLP17}, which can be thought of as a preprocessing
or data structure variant of \ThreeSUM. In fact, all our results extend to the
more general \kSUM and \kSUMInd problems which consider $(k-1)$-wise sums
instead of pairwise sums. We also generalize the definition of \cite{GKLP17} by
allowing the input to be elements of an arbitrary abelian\footnote{This is for
	convenience and because our applications only involve abelian groups; our
results and techniques easily generalize to the non-abelian case.} group. We
use $+$  to denote the group operation.

\begin{definition}\label{def:3SUMInd}
	The problem $\kSUMInd(G, N)$, parametrized by an integer $N\geq k-1$ and an
	abelian group $G$, is defined to be solved by a two-part algorithm
	$\cA=(\cA_1,\cA_2)$ as follows.
	\begin{itemize}
		\item \textbf{Preprocessing phase.}
			$\cA_1$ receives as input a tuple $A=(a_1, \dots, a_N)$ of $N$
			elements from $G$ and outputs a data structure $D_A$ of
			size\footnote{The model of computation in this paper is the word
				RAM model where we assume that the word length is $\Theta(\log
				N)$.  Furthermore we assume that words are large enough to
				contain description of elements of $G$, \emph{i.e.}, $|G|\leq
				N^c$ for some $c>0$. The size of a data structure is the number
				of words (or cells) it contains.\label{fn:elem-size}} 
				at most $S$.  $\cA_1$ is computationally unbounded.
			\item \textbf{Query phase.} Denote by $Z$ the set of
				$(k-1)$-wise sums of elements from $A$: $Z=\{\sum_{i\in
				I} a_i: I\subseteq [N] \wedge |I|=k-1\}$. Given an arbitrary
				query $b\in Z$, $\cA_2$ makes at most $T$ oracle queries
				to $D_A$ and must output $I\subseteq [N]$
				with $|I|=k-1$ such
				that $\sum_{i\in I} a_{i} = b$.\footnote{\label{fn:gen}Without loss of
					generality, we can assume that $D_A$ contains a copy of $A$
					and in this case $\cA_2$ could return the tuple
				$(a_{i})_{i\in I}$ at the cost of $(k-1)$ additional queries.}
	\end{itemize}
	We say that $\cA$ is an $(S, T)$ algorithm for $\kSUMInd(G, N)$.
	Furthermore, we say that $\cA$ is \emph{non-adaptive} if the $T$ queries
	made by $\cA_2$ are non-adaptive 
	(\emph{i.e.}, the indices of the queried cells are only a function of
	$b$).
\end{definition}

\begin{remark}\label{rmk:domain} 
	An alternative definition would have the query $b$ be an arbitrary element
	of $G$ (instead of being restricted to $Z$) and $\cA_2$ return the special
	symbol $\bot$ when $b\in G\setminus Z$. Any algorithm conforming to
	Definition~\ref{def:3SUMInd} --- with undefined behavior for $b\in G\setminus
	Z$ --- can be turned into an algorithm for this seemingly more general
	problem at the cost of $(k-1)$ extra queries: given output $I\subseteq [N]$
	on query $b$, return $I$ if $\sum_{i\in I} a_{i}=b$ and return $\bot$
	otherwise.
\end{remark}

\begin{remark}
	The fact that $\kSUMInd$ is defined in terms of $(k-1)$-wise sums of
	\emph{distinct} elements from $G$ is without
	loss of generality for integers, but prevents the occurrence of
	degenerate cases in some groups. For example, consider the case of
	$\ThreeSUMInd$ for a group $G$ such that
	all elements are of order $2$ (\emph{e.g.}, $(\ZZ/2\ZZ)^{cn}$) then
	finding $(i_1,i_2)$ such that $a_{i_1}+a_{i_2}=0$ has the trivial solution
	$(i,i)$ for any $i\in[N]$.
\end{remark}

\begin{remark}\label{rmk:group-index}
	In order to preprocess the elements of some group $G$, 
	we assume an efficient way to enumerate its elements. 
	More specifically, we assume a time- and space-efficient 
	algorithm for evaluating an injective function $\Index\colon G\to [N^c]$ 
	for a constant $c$. For simplicity, we also assume 
	that the word length is at least $c\log{N}$ so that we can store 
	$\Index(g)$ for every $g\in G$ in a memory cell. For example, 
	for the standard $\ThreeSUMInd$ problem over the integers 
	from $0$ to $N^c$, one can consider the group 
	$G=(\ZZ/m\ZZ, +)$ for $m=2N^c+1$, and the trivial function 
	$\Index(a+m\ZZ)=a$ for $0\leq a<m$. For ease of exposition,
	we abuse notation and write $g$ instead of $\Index(g)$ for
	an element of the group $g\in G$. For example, $g\mod p$ for an integer $p$
	will always mean $\Index(g) \mod p$. 
\end{remark}

The standard $\ThreeSUMInd$ problem (formally introduced in \cite{GKLP17})
corresponds to the case where $G=(\ZZ, +)$ and $k=3$.  In fact, it is usually
assumed that the integers are upper-bounded by some polynomial in $N$, which is
easily shown to be equivalent to the case where $G=(\ZZ/N^c\ZZ, +)$ for some
$c>0$, and is sometimes referred to as \emph{modular} $\ThreeSUM$ when there is no
preprocessing.

Another important special case is when $G=\big((\ZZ/2\ZZ)^{cn}, +\big)$ for
some $c>0$ and $k=3$. In this case, $G$ can be thought of as the group of
binary strings of length $cn$ where the group operation is the bitwise XOR
(exclusive or).  This problem is usually referred to as $\ThreeXOR$ when there
is no preprocessing, and we refer to its preprocessing variant as
$\ThreeXORInd$. In \cite{JV16}, the authors provide some evidence that the
hardnesses of $\ThreeXOR$ and $\ThreeSUM$ are related and conjecture that
Conjecture~\ref{conj:3sum} generalizes to $\ThreeXOR$. We similarly conjecture
that in the presence of preprocessing, Conjecture~\ref{conj:dvintro}
generalizes to $\ThreeXORInd$.

Following Definition~\ref{def:3SUMInd}, the results and techniques in this
paper hold for arbitrary abelian groups and thus provide a unified treatment of
the $\ThreeSUMInd$ and $\ThreeXORInd$ problems. It is an interesting open
question for future research to better understand the influence of the group
$G$ on the hardness of the problem.

\begin{openq}
For which groups is $\kSUMInd$ significantly easier to solve, 
and for which groups does Conjecture~\ref{conj:dvintro} not hold?
\end{openq}

\subsubsection{Average-case hardness}

This paper moreover introduces a new \emph{average-case} variant of \kSUMInd 
(Definition~\ref{def:avg-3SUMInd} below) that,
to the authors' knowledge, has not been stated in prior literature.
Definition~\ref{def:avg-3SUMInd} includes an error parameter $\eps$,
as for the cryptographic applications it is useful to consider
solvers for average-case \kSUMInd that only output correct answers
with probability $\eps<1$.

\begin{definition}\label{def:avg-3SUMInd}
	The \emph{average-case} $\kSUMInd(G,N)$ problem, parametrized by an
	abelian group $G$ and integer $N\geq k-1$, is defined to be solved by a two-part
	algorithm $\cA=(\cA_1,\cA_2)$ as follows.
	\begin{itemize}
		\item \textbf{Preprocessing phase.}
			Let $A$ be a tuple of $N$ elements from $G$ drawn uniformly at
			random and with replacement\footnote{We remark that for the classical version of $\kSUM$, the uniform random distribution of the inputs is believed to be the hardest (see, \emph{e.g.}, \cite{avcase}).}. $\cA_1(A)$ outputs a data structure
			$D_A$ of size at most $S$.  $\cA_1$ has unbounded computational
			power.
		\item \textbf{Query phase.} Given a query $b$ drawn uniformly at random
			in $Z=\{\sum_{i\in I} a_i: I\subseteq [N]\wedge |I|=k-1\}$, and
			given up to $T$ oracle queries to $D_A$, $\cA_2(b)$ outputs
			$I\subseteq [N]$ with $|I|=k-1$ such that $\sum_{i\in I}
			a_{i} = b$.
	\end{itemize}

	We say that $\cA=(\cA_1, \cA_2)$ is an $(S,T,\eps)$ solver for
	$\kSUMInd$ if it answers the query correctly with probability $\eps$
	over the randomness of $\cA$, $A$, and the random query $b$.  When
	$\eps=1$, we leave it implicit and write simply $(S,T)$.
\end{definition}

\begin{remark}
In the query phase of Definition~\ref{def:avg-3SUMInd}, the query $b$
is chosen uniformly at random in $Z$ and not in $G$. As observed in
Remark~\ref{rmk:domain}, this is without loss of
generality for $\eps=1$. When $\eps<1$, the meaningful way to measure $\cA$'s
success probability is as in Definition~\ref{def:avg-3SUMInd}, since
otherwise, if $Z$ had negligible density in $G$, $\cA$ could succeed with
overwhelming probability by always outputting $\bot$.
\end{remark}

\section{Upper bound}\label{sec:ub}
We will use the following data structure first suggested by Hellman~\cite{Hel80} and then rigorously studied by Fiat and Naor~\cite{fn00}.
\begin{theorem}[\cite{fn00}]
\label{thm:fn}
For any function $F\colon \X\to \X$, and for any choice of values $S$ and $T$ such that $S^3 T\geq |\X|^3$, there is a deterministic data structure with space $\widetilde{O}(S)$ which allows inverting $F$ at every point making $\widetilde{O}(T)$ queries to the memory cells and evaluations of $F$.~\footnote{While the result in Theorem 1.1 in~\cite{fn00} is stated for a randomized preprocessing procedure, we remark that a less efficient \emph{deterministic} procedure which brute forces the probability space can be used instead.}
\end{theorem}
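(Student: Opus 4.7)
The plan is to adapt the classical Hellman chain-lookup-table construction, and then apply the Fiat--Naor ``variants'' trick to handle the fact that $F$ is an arbitrary (non-injective) function. In preprocessing, one builds a collection of tables; each table covers a portion of the domain via chains obtained by iterating $F$. Specifically, for a single table we pick start points $x_1,\ldots,x_m\in\X$, iterate $F$ for $t$ steps to get endpoints $y_i=F^t(x_i)$, and store only the pairs $(x_i,y_i)$ sorted by $y_i$, occupying $O(m)$ cells. To invert a target $y\in\X$, we compute $F(y),F^2(y),\ldots,F^t(y)$ and binary-search each value against the sorted endpoints: a hit at depth $j$ means some chain reaches $y$ after $t-j$ more applications of $F$, so we re-walk that chain from its start to recover a preimage.

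The key obstacle is that for non-injective $F$ the chains tend to merge, so a single table with $mt\gg|\X|^{2/3}$ becomes ``saturated'' and cannot reliably cover additional points. Fiat and Naor bypass this by introducing multiple \emph{variant} functions $F_1,F_2,\ldots$, each of the form $F_i = h_i\circ F$ for a hash function $h_i$ drawn from an appropriate family, and building separate chain tables under each $F_i$. Different variants spread their chains essentially independently across $\X$, so the union of all tables can cover every element of the domain. Online, to invert $y$ we try each variant in turn, iterating $F_i$ on $h_i(y)$ and looking up into the $i$th sorted endpoint list. Balancing total space $S\approx(\text{\# variants})\cdot m$ against total time $T\approx(\text{\# variants})\cdot t$ (plus the search and re-walk overheads), and requiring that the family of chains covers all of $\X$, yields the tradeoff $S^3T = \widetilde{\Theta}(|\X|^3)$.

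The hardest part is the deterministic guarantee: the theorem requires the data structure to succeed at \emph{every} point of $\X$, not merely with high probability over random chain starts and hash choices. The plan is to exploit the unbounded preprocessing time and brute-force over all possible choices of start points $\{x_i\}$ and hash functions $\{h_i\}$, selecting any configuration that provably produces chains whose union meets every element of $\X$. The standard probabilistic coverage argument shows such a configuration exists, and the extra polylogarithmic factors (from the number of variants, binary searches inside sorted tables, and bookkeeping to identify which variant produced a given hit) are absorbed into the $\widetilde{O}(\cdot)$ notation. The remaining routine task is to verify that each online inversion costs only $\widetilde{O}(T)$ probes to the stored tables plus $\widetilde{O}(T)$ evaluations of $F$, which follows directly from the chain length and variant count chosen in the balancing step.
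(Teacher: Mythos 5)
The paper treats Theorem~\ref{thm:fn} as a black-box citation to \cite{fn00}; its only ``proof'' content is the footnote observing that the randomized preprocessing of Fiat--Naor can be derandomized by brute force over the coins. Your third paragraph matches that observation exactly, and it is correct: Fiat--Naor's construction succeeds at inverting \emph{every} point simultaneously with probability $\geq 1 - 1/|\X|$ over the preprocessing randomness, so a good choice of coins exists and can be found by exhaustive search when preprocessing time is unbounded.

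Your sketch of what is inside \cite{fn00}, however, has a genuine gap. For an \emph{arbitrary} function $F$ --- which is exactly why the tradeoff is $S^3T=|\X|^3$ rather than Hellman's $S^2T=|\X|^2$ --- the ``variant'' trick $F_i = h_i \circ F$ by itself does not control chain merging. If some image $y_0$ has $\Theta(|\X|)$ preimages, then nearly every chain (under \emph{any} variant) funnels through $h_i(y_0)$ within one step: the hash functions re-randomize where the image of $F$ lands, but they do nothing about $F$'s many-to-one collisions, which happen before $h_i$ is applied. Fiat and Naor handle this with a step your sketch omits entirely: in preprocessing, identify the $\Theta(S)$ heaviest images of $F$ (those with the largest preimage counts), store one preimage of each explicitly in a lookup table, and run the chain-and-variant machinery only on the restriction of $F$ to the ``light'' part, where bounded collision rates make the chain coverage analysis go through. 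The balance between the heavy-table budget and the chain/variant parameters is precisely what yields $S^3T \geq |\X|^3$; without it, your ``balancing yields $S^3T=\widetilde{\Theta}(|\X|^3)$'' claim is unsupported, and the construction actually fails for highly non-injective $F$.
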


We demonstrate the idea of our upper bound for the case of $\ThreeSUM$. Since we are only interested in the pairwise sums of the $N$ input elements $a_1,\ldots,a_N\in G$, we can hash down their sums to a set of size $O(N^2)$. Now we define the function $f(i, j)=a_i+a_j$ for $i,j\in[N]$, and note that its domain and range are both of size $O(N^2)$. We apply the generic inversion algorithm of Fiat and Naor to $f$ with $|\X|=O(N^2)$, and obtain a data structure for $\ThreeSUMInd$.

First, in Lemma~\ref{lem:3sumindexing} we give an efficient data structure for
the ``modular'' version of $\kSUMInd(G, N)$ where the input is an integer
$p=\widetilde{O}(N^{k-1})$ and $N$ group elements $a_1,\ldots,a_N\in G$. Given
query $b\in G$ the goal is to find $(i_1,\ldots,i_{k-1})\in[N]^{k-1}$ such that
$a_{i_1}+\dots+a_{i_{k-1}}\equiv b \mod p$.\footnote{Recall from
Remark~\ref{rmk:group-index} that this notation actually means
$\Index(a_{i_1}+\dots+a_{i_{k-1}})\equiv \Index(b) \mod p$.} Then, in
Theorem~\ref{thm:3sumindexing} we reduce the general case of $\kSUMInd(G, N)$
to the modular case.

\begin{lemma}
\label{lem:3sumindexing}
For every integer $k\geq3$, real $0\leq\delta\leq k-2$, and every integer
$p=\widetilde{O}(N^{k-1})$, there is an \emph{adaptive} data structure which
uses space $S=\widetilde{O}(N^{k-1-\delta})$ and query time
$T=\widetilde{O}(N^{3\delta})$ and solves modular $\kSUMInd(G, N)$: for input
$a_1,\ldots,a_N\in G$ and a query $b\in G$, it outputs $(i_1, \ldots,
i_{k-1})\in[N]^{k-1}$ such that $a_{i_1}+\dots+a_{i_{k-1}}\equiv b \mod p$, if
such a tuple exists.
\end{lemma}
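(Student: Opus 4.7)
\medskip
\noindent\textbf{Proof proposal.} The plan is to reduce modular $\kSUMInd(G, N)$ to a function inversion problem and invoke Theorem~\ref{thm:fn} directly. Define the function $f\colon [N]^{k-1} \to \ZZ/p\ZZ$ by $f(i_1,\ldots,i_{k-1}) = (a_{i_1}+\cdots+a_{i_{k-1}}) \bmod p$. Since $p = \widetilde O(N^{k-1})$, both the domain and codomain of $f$ have size $\widetilde\Theta(N^{k-1})$; this matching of sizes is the key structural property that lets Fiat--Naor apply without loss.

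To cast $f$ as a self-map $F\colon\X\to\X$ as required by Theorem~\ref{thm:fn}, I would take $\X = [M]$ with $M = \widetilde\Theta(N^{k-1})$ chosen so that $M \geq \max(N^{k-1}, p)$, inject $[N]^{k-1}$ into $[M]$ via a standard lexicographic encoding and $\ZZ/p\ZZ$ into $[M]$ via the natural inclusion, and extend $F$ on any ``padding'' inputs (those in $[M]$ not in the image of the domain embedding) so that they map to a distinguished sentinel element outside the image of $f$. This ensures that any $F$-preimage of the encoding of $b \bmod p$ must come from a genuine tuple $(i_1,\ldots,i_{k-1}) \in [N]^{k-1}$ with $a_{i_1}+\cdots+a_{i_{k-1}} \equiv b \pmod p$. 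Evaluating $F$ at any point takes $O(k-1) = O(1)$ memory probes given the stored array $(a_1,\ldots,a_N)$, so each oracle call to $F$ inside Fiat--Naor contributes only constantly many additional probes.

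Now apply Theorem~\ref{thm:fn} to $F$ with $S = N^{k-1-\delta}$ and $T = N^{3\delta}$. These satisfy $S^3 T = N^{3(k-1)}$, which dominates $|\X|^3 = \widetilde\Theta(N^{3(k-1)})$ up to polylogarithmic factors absorbed by the $\widetilde O$ notation. Theorem~\ref{thm:fn} then produces a data structure of size $\widetilde O(N^{k-1-\delta})$ inverting $F$ in $\widetilde O(N^{3\delta})$ probes. Storing $(a_1,\ldots,a_N)$ alongside costs an additional $O(N)$ cells, absorbed into the space bound whenever $\delta \leq k-2$ since then $N^{k-1-\delta} \geq N$. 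On a query $b$, we invoke the Fiat--Naor inverter on the encoding of $b \bmod p$, parse the returned preimage back into $(i_1,\ldots,i_{k-1})$, and output it; correctness follows immediately from the construction of $F$, and failure occurs only when no such tuple exists.

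The ``hard'' analytic content is essentially already encapsulated in Theorem~\ref{thm:fn}; what remains is routine bookkeeping, namely to arrange $F$ so that (i) it is a bona fide function on a common domain and codomain of size $\widetilde\Theta(N^{k-1})$, (ii) its preimages of legitimate query points are exactly the solutions of the modular $\kSUMInd$ instance, and (iii) its evaluation oracle is simulated in $O(1)$ probes from the stored array. Each of these costs at most polylogarithmic overhead in space and query time, which the $\widetilde O$ notation absorbs, yielding the claimed bounds.
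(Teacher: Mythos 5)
Your proposal is essentially the paper's proof: both define $f(i_1,\ldots,i_{k-1}) = a_{i_1}+\cdots+a_{i_{k-1}} \bmod p$, observe that its domain and range both have size $\widetilde\Theta(N^{k-1})$ and that $f$ is evaluable in $k-1 = O(1)$ probes to the stored array, and then invoke Theorem~\ref{thm:fn} with $|\X|=\widetilde O(N^{k-1})$. The only difference is that you spell out the routine bookkeeping of coercing $f$ into a self-map $F\colon\X\to\X$ with a sentinel for padding inputs, which the paper leaves implicit; this is a fine, slightly more careful treatment (though take $M$ strictly larger than $p$, e.g.\ $M=\max(N^{k-1},p)+1$, so that a sentinel value outside the embedded codomain is guaranteed to exist).
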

\begin{proof}
Let the $N$ input elements be $a_1,\ldots,a_N\in G$. The data structure stores all $a_i$ (this takes only $N$ memory cells) along with the information needed to efficiently invert the function $f\colon [N]^{k-1}\to G$ defined below.
For $(i_1,\ldots,i_{k-1})\in [N]^{k-1}$, let 
\[
f(i_1, \ldots,
i_{k-1})=a_{i_1}+\dots+a_{i_{k-1}} \mod p \ .
\]
 Note that:
\begin{enumerate}
    \item $f$ is easy to compute. Indeed, given the input, one can compute $f$ by looking at only $k-1$ input elements.
    \item The domain of $f$ is of size $N^{k-1}$, and the range of $f$ is of size $p=\widetilde{O}(N^{k-1})$.
	\item Inverting $f$ at a point $b\in G$ finds a tuple
		$(i_1,\ldots,i_{k-1})\in [N]^{k-1}$ such that $a_{i_1}+\dots+a_{i_{k-1}} \equiv b \mod p$, which essentially solves the modular $\kSUMInd(G, N)$ problem.
\end{enumerate}
Now we use the data structure from Theorem~\ref{thm:fn} with $|\X|=\widetilde{O}(N^{k-1})$ to invert $f$. This gives us a data structure with space $\widetilde{O}(S+N)=\widetilde{O}(S)$ and query time $\widetilde{O}(T)$ for every $S^3T\geq |\X|^3=\widetilde{O}(N^{3(k-1)})$, which finishes the proof.
\end{proof}
It remains to show that the input of $\kSUMInd$ can always be hashed to a set of integers $[p]$ for some $p=\widetilde{O}(N^{k-1})$. While many standard hashing functions will work here, we remark that it is important for our application that the hash function of choice has a time- and space-efficient implementation (for example, the data structure in~\cite{fn00} requires non-trivial implementations of hash functions). Below, we present a simple hashing procedure which suffices for $\kSUMInd$; a more general reduction can be found in Lemma~17 in~\cite{cgk18}.
\begin{theorem}
\label{thm:3sumindexing}
For every integer $k\geq3$ and real $0\leq\delta\leq k-2$, there is an \emph{adaptive} data structure for $\kSUMInd(G,N)$ with space $S=\widetilde{O}(N^{k-1-\delta})$ and query time $T=\widetilde{O}(N^{3\delta})$.
\end{theorem}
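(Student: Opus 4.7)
To prove Theorem~\ref{thm:3sumindexing}, my plan is to reduce general $\kSUMInd(G, N)$ to the modular version of Lemma~\ref{lem:3sumindexing} via hashing modulo a prime. During preprocessing, I pick a prime $p$ of size $\widetilde{\Theta}(N^{k-1})$ and invoke Lemma~\ref{lem:3sumindexing} on input $a_1,\ldots,a_N$ with parameter $p$, obtaining a modular data structure of space $\widetilde{O}(N^{k-1-\delta})$ and query time $\widetilde{O}(N^{3\delta})$. In addition, the input $a_1,\ldots,a_N\in G$ is stored explicitly in $O(N)$ extra cells, which is absorbed into the space bound since $\delta\leq k-2$. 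The preprocessing selects $p$ using the unbounded power of $\cA_1$, as described below.

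Given an online query $b$, the query algorithm computes $b\bmod p$ (via the $\Index$ convention of Remark~\ref{rmk:group-index}) and feeds it to the modular data structure, which returns a candidate tuple $I=(i_1,\ldots,i_{k-1})$ satisfying $\sum_j a_{i_j} \equiv b \pmod p$. The algorithm then verifies this candidate by fetching $a_{i_1},\ldots,a_{i_{k-1}}$ from storage with $k-1=O(1)$ extra probes and checking whether $\sum_j a_{i_j} = b$ holds in $G$. If verification succeeds, return $I$; otherwise, try the next candidate from the amplified construction described next.

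The main obstacle is handling false positives, namely tuples $I$ with $\sum_j a_{i_j} \equiv b \pmod p$ but $\sum_j a_{i_j}\neq b$ in $G$. To handle this, I plan to use $O(\log N)$ independently chosen primes $p_1,\ldots,p_\ell$ of the required size, building a modular data structure for each, and at query time trying each one in turn and returning the first candidate that passes verification. A standard counting argument shows that a random prime drawn from a range of size $\widetilde{\Theta}(N^{k-1})$ makes two fixed distinct sums collide with probability $\widetilde{O}(1/N^{k-1})$, since the nonzero difference of the sums has at most $O(\log N)$ prime divisors while the range contains $\widetilde{\Theta}(N^{k-1})$ primes. Together with the $O(\log N)$-fold amplification and the derandomization afforded by the unbounded preprocessing, this guarantees worst-case correctness on all $b\in Z$ at the cost of only polylogarithmic overhead in space and time, thereby preserving the tradeoff $S=\widetilde{O}(N^{k-1-\delta})$, $T=\widetilde{O}(N^{3\delta})$. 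The chief technical difficulty will be making the amplification argument rigorous on every query simultaneously, which entails analyzing which preimage the Fiat--Naor construction returns across multiple independent hashes and union-bounding over all $b\in Z$.
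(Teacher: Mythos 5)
Your proposal is essentially the paper's own proof: hash modulo $O(\log N)$ primes drawn from an interval of size $\widetilde{\Theta}(N^{k-1})$, use a counting/union-bound argument to show that with high probability (and hence, after derandomizing with the unbounded preprocessing) every $b\in Z$ is collision-free under at least one prime, then at query time try each of the $O(\log N)$ modular data structures from Lemma~\ref{lem:3sumindexing} and verify the returned candidate against the explicitly stored input. Your worry about ``which preimage the Fiat--Naor construction returns'' is unfounded: once $b\in Z$ has a collision-free prime $p_i$, any preimage returned by the $p_i$-instance necessarily has true sum equal to $b$, so the verification step together with the union bound over $b\in Z$ is all that is needed.
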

In particular, by taking $k=3$ and $\delta=0.1$, we get a data structure which solves \ThreeSUMInd in space $S=\widetilde{O}(N^{1.9})$ and query time $T=\widetilde{O}(N^{0.3})$, and, thus, refutes Conjecture~\ref{conj:strongintro}.
\begin{proof}
Let the $N$ inputs be $a_1,\ldots,a_N\in G$. Let $Z\subseteq [N^c], |Z|<\binom{N}{k-1}$ be the set of $(k-1)$-wise sums of the inputs:
$
Z=\{a_{i_1}+\dots+a_{i_{k-1}}\colon 1\leq i_1< \ldots<i_{k-1}\leq N \}
$.

Let $I=\{N^{k-1},\ldots, 3kcN^{k-1}\log{N}\}$ be an interval of integers. By
the prime number theorem, for large enough $N$, $I$ contains at least
$2cN^{k-1}$ primes. Let us pick $n=\log{N}$ random primes $p_1,\ldots,p_n$ from
$I$. For two distinct numbers $z_1, z_2 \in Z$, we say that they have
a collision modulo $p$ if $z_1\equiv z_2\mod p$.

Let $g\in G$ be a positive query of $\kSUMInd(G,N)$, that is, $b=\Index(g)\in Z$.
First, we show that with high probability (over the choices of $n$ random
primes) there exists an $i\in[n]$ such that for every $z\in Z\setminus\{b\}$, $z\not\equiv b\mod p_i$. Indeed, for every $z\in Z\setminus\{b\}$, we
have that $(z-b)$ has at most $\log_{N^{k-1}}(N^c)=c/(k-1)$ prime factors from
$I$. Since $|Z|<\binom{N}{k-1}$, at most $c\binom{N}{k-1}/(k-1)$ primes from
$I$ divide $(z-b)$ for some $z\in Z$. Therefore, a random prime from $I$
gives a collision between $b$ and some $z\in Z\setminus\{b\}$ with probability at most 
\[
\frac{c\binom{N}{k-1}}{k-1}\cdot\frac{1}{2cN^{k-1}}
\leq 
\frac{cN^{k-1}}{(k-1)(k-1)!}\cdot\frac{1}{2cN^{k-1}}
=
\frac{1}{2(k-1)(k-1)!}\leq \frac{1}{2^k} \ .
\]
Now we have that for every $b\in Z$, the probability that there exists an
$i\in[n]$ such that $b$ does not collide with any $z\in Z\setminus\{b\}$ modulo
$p_i$, is at least $1-(2^{-k})^n=1-N^{-k}$. Therefore, with probability at
least $1-1/N$, a random set of $n$ primes has the following property: for every
$b\in Z$ there exists an $i\in[n]$ such that $b$ does not collide with any
$z\in Z\setminus\{b\}$ modulo $p_i$. Since such a set of $n$ primes exists, the
preprocessing stage of the data structure can find it deterministically.

Now we construct $n=\log{N}$ modular $\kSUMInd(G, N)$ data structures (one for each $p_i$), and separately solve the problem for each of the $n$ primes. This results in a data structure as guaranteed by Lemma~\ref{lem:3sumindexing} with a $\log{N}$ overhead in space and time. The data structure also stores the inputs $a_1,\ldots,a_N$. Once it sees a solution modulo $p_i$, it checks whether it corresponds to a solution to the original problem. Now correctness follows from two observations. Since the data structure checks whether a solution modulo $p_i$ gives a solution to the original problem, the data structure never reports false positives. Second, the above observation that for every $b\in Z$ there is a prime $p_i$ such that $b$ does not collide with other $z\in Z$, a solution modulo $p_i$ will correspond to a solution of the original problem (thus, no false negatives can be reported either).
\end{proof}
\begin{remark}
A few extensions of Theorem~\ref{thm:3sumindexing} are in order.
\begin{enumerate}
\item
The result of Fiat and Naor~\cite{fn00} also gives an efficient \emph{randomized} data structure. Namely, there is a randomized data structure with preprocessing running time $\widetilde{O}(|\X|)$, which allows inverting $F$ at \emph{every} point with probability at least $1-1/|\X|$ over the randomness of the preprocessing stage.
Thus, the preprocessing phase of the randomized version of Theorem~\ref{thm:fn}
runs in quasilinear time $\widetilde{O}(|\X|)=\widetilde{O}(N^{k-1})$ (since
sampling $n=\log{N}$ random primes from a given interval can also be done in
randomized time $\widetilde{O}(1)$). This, in particular, implies that the
preprocessing time of the presented data structure for $\ThreeSUMInd$ is
optimal under the $\ThreeSUM$ Conjecture (Conjecture~\ref{conj:3sum}). Indeed,
if for $k=3$, the preprocessing time was improved to $N^{2-\eps}$, then one
could solve $\ThreeSUM$ by querying the $N$ input numbers in (randomized or expected) time $N^{2-\eps}$.
\item
For the case of random inputs (for example, for inputs sampled
as in Definition~\ref{def:avg-3SUMInd}), one can achieve a better time-space
trade-off. Namely, if the inputs $a_1,\ldots,a_N$ are uniformly random numbers from a range of size at least $\Omega(N^{k-1})$, then for every $0\leq \delta\leq k-2$ there is a data structure with space $S=\widetilde{O}(N^{k-1-\delta})$ and query time $T=\widetilde{O}(N^{2\delta})$ (with high probability over the randomness of the input instances). This is an immediate generalization of Theorem~\ref{thm:3sumindexing} equipped with the analogue of Theorem~\ref{thm:fn} for a function~\cite{fn00} with low collision probability, which achieves the trade-off of $S^2T=|\X|^2$.
\item
For polynomially small $\eps=1/|\X|^{\alpha}$ (for constant $\alpha$), the trade-off between $S$ and $T$ can be further improved for the $\eps$-approximate solution of $\kSUMInd$, using approximate function inversion by De \emph{et al.}~\cite{DTT10}.
\end{enumerate}
\end{remark}

We have shown how to refute the strong $\SUMInd$ conjecture of \cite{GKLP17} using
techniques from space-time tradeoffs for function inversion~\cite{Hel80,fn00},
specifically the general function inversion algorithm of Fiat and
Naor~\cite{fn00}. A natural open question is whether a more specific function
inversion algorithm could be designed.

\begin{openq}
	Can the space-time trade-off achieved in Theorem~\ref{thm:3sumindexing}  be
	improved by exploiting the specific structure of the $\SUMInd$ problem?
\end{openq}

\section{Lower bound}
\label{sec:lb}

We now present our lower bound: we prove a space-time trade-off of
$S=\widetilde\Omega(N^{1+1/T})$ for any non-adaptive $(S, T)$ algorithm.
While it is weaker than Conjecture~\ref{conj:dvintro}, any
improvement on this result would break a long-standing barrier in static data
structure lower bounds: no bounds better than $T\geq
\Omega(\frac{\log{N}}{\log(S/N)})$ are known, even for the non-adaptive cell-probe
and linear models~\cite{Siegel04, patrascu08structures, PTW10, Lar12,dgw19}.

Our main lower bound (Theorem~\ref{thm:t2}) is proven with respect to 
a slight variant on Definition~\ref{def:3SUMInd}
to which our proof techniques more readily lend themselves,
defined next.

\begin{definition}\label{def:3SUMInd-elem}
The \emph{element version of the $\kSUMInd(G, N)$ problem} is exactly like the
$\kSUMInd(G, N)$ problem (Definition~\ref{def:3SUMInd-elem}),
except that a solution is a set $\{a_i\}_{i\in I}$ of group elements
summing to a given input instead of simply a set $I$ of indices.
\end{definition}

\begin{remark}
It follows from the observation in Footnote~\ref{fn:gen} that
a lower bound for $(S,T)$ algorithms for the element version of the $\kSUMInd(G, N)$ problem
implies a lower bound for $(S, T-k+1)$ algorithms for the $\kSUMInd(G, N)$ problem as defined
in Definition~\ref{def:3SUMInd-elem}. 
\end{remark}

\begin{theorem}\label{thm:t2}
	Let $k\geq 3$ and $N$ be integers, and let $G$ be an abelian group with $|G|\geq N^{k-1}$.
	Then any \emph{non-adaptive} $(S, T)$ algorithm for the element version of $\kSUMInd(G, N)$
	satisfies $S=\widetilde{\Omega}(N^{1+1/T})$.
\end{theorem}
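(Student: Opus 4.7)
The plan is a compressibility argument in the spirit of Gennaro--Trevisan. Sample the input $A\in G^N$ uniformly and, independently, sample a uniformly random subset $V\subseteq[S]$ of size $v$ (to be chosen). Non-adaptivity is used crucially: the probe set $P(b)\subseteq[S]$ of size $T$ depends only on $b$, so the set $B_V\defeq\{b\in G : P(b)\subseteq V\}$ is determined by $V$ alone, independently of $A$. From $(V,D_A|_V)$ a decoder can therefore simulate $\cA_2(b)$ for every $b\in B_V$, and each $b\in B_V\cap Z$ yields a valid linear relation $a_{i_1}+\cdots+a_{i_{k-1}}=b$ where $(i_1,\ldots,i_{k-1})=\cA_2(b)$.

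I would encode $A$ as $(V,D_A|_V,h)$ where the hint $h$ consists of a length-$|B_V|$ bit-vector marking the subset $B_V\cap Z$, followed by $(N-\rho)\log|G|$ bits identifying $A$ inside the affine subspace of $G^N$ cut out by the $|B_V\cap Z|$ valid relations, with $\rho$ the rank of that system. Since $V$ is independent of $A$, Shannon's bound on the expected encoding length gives $v\log|G|+\mathbb{E}[|B_V|]+(N-\mathbb{E}[\rho])\log|G|\geq N\log|G|$, which rearranges to
\[
\mathbb{E}[\rho]\;\leq\;v+\frac{\mathbb{E}[|B_V|]}{\log|G|}.
\]
A direct computation shows $\mathbb{E}[|B_V|]=|G|(v/S)^T$, and the hypothesis $|G|\geq N^{k-1}$ implies that for random $A$ a constant fraction of $b\in B_V$ lie in $Z$, so $\mathbb{E}[|B_V\cap Z|]=\Theta\bigl(N^{k-1}(v/S)^T\bigr)$.

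The hardest part will be the matching lower bound $\mathbb{E}[\rho]=\Omega(\min(|B_V\cap Z|,N))$. Because $\cA_2$ is adversarial, it could in principle produce highly redundant coefficient vectors $e_{i_1}+\cdots+e_{i_{k-1}}$; the leverage is that $V$ is fixed \emph{before} $A$ is sampled, so over the randomness of $A$ the tuples returned by $\cA_2$ are sufficiently spread out. I would formalize this through an iterative peeling argument on the $(k-1)$-uniform hypergraph with edges $\{\cA_2(b) : b\in B_V\cap Z\}$: show that with high probability a constant fraction of edges can be oriented so each contributes a fresh vertex, hence an independent new constraint, to the rank. This is precisely the step that forces the non-adaptivity restriction --- if $A$ were fixed first, $\cA_2$ could concentrate all its answers on a small fixed index set and defeat the rank bound.

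Given the two bounds, the final step is to exhibit a ``bad'' $v$. Combining them yields $N^{k-1}(v/S)^T \lesssim v$ whenever $|B_V\cap Z|\leq N$, and I would pick $v$ in the interval $\bigl((S^T/|G|)^{1/(T-1)},\,S(N/|G|)^{1/T}\bigr)$, which is nonempty precisely when $S^T<|G|\cdot N^{T-1}$. Any such $v$ violates the encoding inequality, so we must have $S=\widetilde{\Omega}\bigl((|G|\cdot N^{T-1})^{1/T}\bigr)=\widetilde{\Omega}\bigl(N^{(k+T-2)/T}\bigr)$; for $k=3$ this specializes to $\widetilde{\Omega}(N^{1+1/T})$ and is stronger for larger $k$, yielding the theorem. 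The $\widetilde{\Omega}$ absorbs polylogarithmic factors from the hint bit-vector and from the peeling step.
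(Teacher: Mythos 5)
Your proposal takes the same cell-sampling frame as the paper (sample $V$, bound $\mathbb{E}[|B_V|]$, compress) but then diverges: you want to lower-bound the \emph{rank} of the linear system induced by $\cA_2$'s answers, with a peeling argument supplying the rank bound, whereas the paper never argues about rank at all. The paper instead defines an index $i$ to be ``good'' when (a) some $b=g(i)\in G_V$ decomposes as $a_i + a_I$, and (b) \emph{every} $(k-1)$-decomposition of $g(i)$ uses $a_i$; condition (b) is exactly what forces $\cA_2(g(i))$ to include $i$ in its output. Combined with the WLOG that $\cA_2$ returns the values $(a_j)_{j\in I}$ and not just indices (a footnote to Definition~\ref{def:3SUMInd}), this yields \emph{one-shot} recovery of $a_i$ with no chain of substitutions, so the entire peeling problem you flag as ``the hardest part'' is designed away rather than solved. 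The encoding then stores only $D_A|_V$ and $(a_i)_{i\notin N_V}$, and a direct probabilistic computation (Eqs.~(2)--(4) of Section~\ref{sec:lb}) shows $\mathbb{E}[|N_V|]\geq N/4$ when $|G_V|\geq |G|/N$.

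There are two concrete gaps in your route beyond the admitted one. First, your bit-vector over $B_V$ costs $|B_V|$ bits, and $|B_V|\approx|G|(v/S)^T$ grows with $|G|$ while the entropy $N\log|G|$ grows only logarithmically in $|G|$. The theorem only assumes $|G|\geq N^{k-1}$: already at $|G|=N^{k}$ you would have $|B_V|=\Omega(N^{3})$ at the scale where $|B_V\cap Z|\approx N$, swamping the entropy budget and rendering the compression argument vacuous. This also undercuts the intermediate claim that ``a constant fraction of $b\in B_V$ lie in $Z$,'' which holds only when $|G|=\Theta(N^{k-1})$; the correct statement is that $\mathbb{E}[|B_V\cap Z|]=\Theta\big(N^{k-1}(v/S)^T\big)$ regardless of $|G|$, but $|B_V|$ itself scales with $|G|$ and that is what you pay for in the encoding. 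Second, the rank/peeling step is not just technically incomplete but is working over a general abelian $G$ (possibly with torsion), where ``rank of the $\{0,1\}$ coefficient matrix determines $(N-\rho)\log|G|$ residual bits'' needs justification; a peeling chain that solves exactly one fresh variable per step would sidestep this, but that is precisely the claim you have not established. Because of the first gap your claimed bound $\widetilde\Omega(N^{1+(k-2)/T})$ for $k>3$ does not follow for all $G$ permitted by the hypothesis. You would be better served by dropping the rank bookkeeping entirely and adopting the ``unique decomposition'' trick: it removes the bit-vector, removes the peeling, and delivers the $\widetilde\Omega(N^{1+1/T})$ bound uniformly in $k$.
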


Our proof relies on a compressibility argument similar to \cite{GT00,DTT10},
also known as cell-sampling in the data structure literature~\cite{PTW10}.
Roughly speaking, we show that given an $(S, T)$ algorithm $(\cA_1, \cA_2)$, we
can recover a subset of the input $A$ by storing a randomly sampled subset $V$
of the preprocessed data structure  $D_A$ and simulating $\cA_2$ on all
possible queries: the simulation succeeds whenever the queries made by $\cA_2$
fall inside $V$. Thus, by storing $V$ along with the remaining part of the
input, we obtain an encoding of the entire input. This implies that the length
of the encoding must be at least the entropy of a randomly chosen input.

\begin{proof}
  Consider an $(S, T)$ algorithm $\cA = (\cA_1, \cA_2)$ for the element
  version of $\kSUMInd(G, N)$. For conciseness, in the rest of this proof, 
  we write simply $\kSUMInd(G, N)$ to denote the element version of $\kSUMInd(G, N)$.

  We use $\cA$ to design encoding and decoding procedures for
	inputs of $\kSUMInd(G, N)$: we first sample a subset $V$ of the data
	structure cells which allows us to answer many queries, then
	we argue that we can recover a constant fraction of the input from this set,
	which yields a succinct encoding of the input.
	
	\paragraph{Sampling a subset $V$ of cells.}
	For a query $b\in G$, $\Query(b)\subseteq[S]$ denotes the set of probes made by
	$\cA_2$ on input $b$ (with $|\Query(b)|\leq T$, since $\cA_2$ makes at most $T$
	probes to the data structure). Given a subset $V\subseteq [S]$ of cells, we
	denote by $G_V$ the set of queries in $G$ which can be answered by $\cA_2$
	by only making probes within $V$: $G_V = \{b\in G\,:\, \Query(b)\subseteq V\}$.
	Observe that for a uniformly random set $V$ of size $v$:
	\begin{displaymath}
		\mathbb{E}\big[|G_V|\big]=\sum_{b\in G}\Pr[\Query(b)\subseteq V]
		\geq |G|\frac{\binom{S-T}{v-T}}{\binom{S}{v}}
		=|G|\prod_{i=0}^{T-1} \frac{v-i}{S-i}\geq
		|G|\left(\frac{v-T}{S-T}\right)^T\, ,
	\end{displaymath}
	where the last inequality uses that $a/b\geq (a-1)/(b-1)$ for $a\leq b$.
	Hence, there exists a subset $V$ of size $v$, such that:
	\begin{displaymath}
		|G_V| \geq |G|\left(\frac{v-T}{S-T}\right)^T
		\,,
	\end{displaymath}
	and we will henceforth consider such a set $V$. The size $v$ of $V$ will be
	set later so that $|G_V|\geq |G|/N$.

	\paragraph{Using $V$ to recover the input.}	
	Consider some input $A=(a_1,\dots, a_N)$ for $\kSUMInd(G, N)$. We
	say that $i\in[N]$ is \emph{good} if $a_i$ is output by $\cA_2$ given some
	query in $G_V$. Since queries in $G_V$ can be answered by only storing the
	subset of cells of the data structure indexed by $V$, our decoding
	procedure will retrieve from these cells all the good elements from $A$.
	
	For a set of indices $I\subseteq[N]$, let $a_I=\sum_{i\in I} a_i$ be the
	sum of input elements with indices in $I$. Also, for a fixed set $G_V$
	and $i\in[N]$, let $g(i)\in G_V$ by some element from $G_V$ which can be
	written as a $(k-1)$-sum of the inputs including $a_i$. If there is no such
	element in $G_V$, then let $g(i)=\bot$. Formally, \[g(i)=\min\{g\in
	G_V\colon \exists I\subseteq [N]\setminus\{i\}, |I|=k-2\colon  a_i+a_I=g
	\}\] with the convention that if the minimum is taken over an empty set, then
	$g(i)=\bot$.

	Note that $i\in[N]$ is good if:
	\begin{equation}
		\label{eq:good}
		\big( g(i) \neq \bot \big)
		\wedge \big(\forall J\subseteq [N]\setminus\{i\},|J|=k-1, \, a_J\neq g(i)\big)
		\,.
	\end{equation}
	Indeed, observe that:
	\begin{enumerate}
		\item The first part of the conjunction guarantees that
	there exists $b\in G_V$ which can be decomposed as $b=a_i+a_I$ for $I\subseteq[N]\setminus\{i\}$.
\item The second part of the conjunction guarantees that every decomposition
	$b=a_{J}, |J|=k-1$ contains the elements $a_i$.
		\end{enumerate}
	By correctness of $\cA$, $\cA_2$ outputs a decomposition of its
	input as a sum of $(k-1)$ elements in $A$ if one exists. For $i$ as in
	\eqref{eq:good}, every decomposition $b=a_I$ contains the input $a_i$, and, therefore, 
	$\cA_2(a_I) = (a_{i_1}, \ldots, a_{i_{k-1}})$, where $i\in\{i_1,\ldots, i_{k-1}\}$.
	
	We denote by $N_V\subseteq[N]$ the set of good indices, and compute its
	expected size when $A$ is chosen at random according to the
	distribution in Definition~\ref{def:avg-3SUMInd}, \emph{i.e.}, for each
	$i\in[N]$, $a_i$ is chosen independently and uniformly in $G$.
	\begin{equation}
		\label{eq:egood}
		\mathbb{E}\big[|N_V|\big]
		\geq\sum_{i=1}^N\Pr[ g(i) \neq \bot]
		\Pr[\forall J\subseteq [N]\setminus\{i\},|J|=k-1, \, a_J\neq g(i)\mid  g(i) \neq \bot]
	\end{equation}
	Let $L\subseteq[N]\setminus\{i\}$ be a fixed set of indices of size $|L|=k-3$. Then:
	\begin{align*}
		\Pr[g(i) \neq \bot]
		&= \Pr[ \exists I\subseteq [N]\setminus\{i\}, |I|=k-2\colon  a_i+a_I\in G_V] \\
		&= 1-\Pr[\forall I\subseteq [N]\setminus\{i\}, |I|=k-2\colon a_i+a_I\notin G_V]\\
		&= 1-\Pr[\forall I'\subseteq [N]\setminus\{i\}, |I'|=k-3, \forall i'\in[N]\setminus (I' \cup \{i\})\colon a_i+a_{I'}+a_{i'}\notin G_V]\\
		&\geq 1-\Pr[\forall i'\in [N]\setminus (L\cup\{i\})\colon a_i+a_L+a_{i'}\notin G_V]\\
		&\geq 1-\left(1-\frac{|G_V|}{|G|}\right)^{N-(k-2)}\, ,
	\end{align*}
	where the first inequality follows from setting $I'=L$, the second
	inequality holds because for every $i'\in [N]\setminus (L \cup \{i\})$,
	$a_{i'}$ needs to be distinct from the $|G_V|$ elements $-a_i-a_L+g$ for
	$g\in G_V$. Furthermore:
	\begin{align*}
		&\Pr[\forall J\subseteq [N]\setminus\{i\},|J|=k-1, \, a_J\neq g(i)\mid  g(i) \neq \bot]\\
		&\quad\quad=1-\Pr[\exists J\subseteq [N]\setminus\{i\},|J|=k-1, \, a_J= g(i)\mid  g(i) \neq \bot]\\
		&\quad\quad\geq 1- \sum_{\substack{J\subseteq [N]\setminus\{i\}\\
		|J|=k-1}} \Pr[a_J= g(i)\mid  g(i) \neq \bot]\\
		&\quad\quad\geq 1 - \binom{N-1}{k-1}\cdot\frac{1}{|G|}\geq \frac{1}{2} \, ,
	\end{align*}
	where the first inequality uses the union bound and the last inequality
	uses that $|G|\geq N^{k-1}$. Using the previous two derivations in
	\eqref{eq:egood}, we get:
	\begin{equation}
		\label{eq:egood-final}
		\mathbb{E}\big[|N_V|\big]\geq
		\frac{N}{2}\left(1-\left(1-\frac{|G_V|}{|G|}\right)^{N-(k-2)}\right)
		\geq \frac{N}{4} \, ,
	\end{equation}
	where the last inequality uses that $|G_V|\geq |G|/N$ and
	$(1-1/N)^{N-(k-2)}\leq 1/2$ for large enough $N$.

	\paragraph{Encoding and decoding.} It follows from \eqref{eq:egood-final}
	and a simple averaging argument that with probability at least $1/16$ over
	the random choice of $A$, $N_V$ is of size at least $N/5$. We will
	henceforth focus on providing encoding and decoding procedures for such
	inputs $A$. Specifically, consider the following pair of
	encoding/decoding  algorithms for $A$:
		\begin{itemize}
			\item $\Enc(A)$: given input $A=(a_1,\dots, a_N)$.
				\begin{enumerate}
					\item use $\cA_2$ to compute the set $N_V\subseteq[N]$ of
						good indices.
					\item store $\big(\cA_1(A)_j)_{j\in V}$ and
							$(a_i)_{i\notin N_V}$.
				\end{enumerate}
			\item $\Dec\big(\Enc(A)\big)$: for each $b\in G$, simulate
				$\cA_2$ on input $b$:
			\begin{enumerate}
				\item\label{it:1stcase} If $\Query(b)\subseteq V$, use
					$\big(\cA_1(A)_i\big)_{i\in V}$ (which was stored in
					$\Enc(A)$) to simulate $\cA_2$ and get $\cA_2(b)$. By
					definition of $N_V$, when $b$ ranges over the queries such
					that $\Query(b)\subseteq V$, this step recovers $(a_i)_{i\in
					N_V}$.
				\item Then recover $(a_i)_{i\notin N_v}$ directly from
					$\Enc(A)$.
			\end{enumerate}
		\end{itemize}

		Note that the bit length of the encoding is:
		\begin{displaymath}
			|\Enc(A)|\leq v\cdot w + (N-|N_V|)\log |G|\leq v\cdot
			w + \frac{4N}{5}\log|G|
		\end{displaymath}
		where $w$ is the word length and where the second inequality holds because
		we restrict ourselves to inputs $A$ such that $|N_V|\geq N/5$. By
		a standard incompressibility argument (see for example Fact 8.1 in
		\cite{DTT10}), since our encoding and decoding succeeds with
		probability at least $1/16$ over the random choice of $A$, we need to be
		able to encode at least $|G|^N/16$ distinct values, hence:
		\begin{equation}
			\label{eq:lb-proof}
			v\cdot w + \frac{4N}{5}\log |G|\geq N\log |G| + O(1)
		\end{equation}

		Finally, as discussed before, we set $v$ such that $|G_V|/|G|\geq 1/N$.
		For this, by the computation performed at the beginning of this proof,
		it is sufficient to have:
		\begin{displaymath}
			\left(\frac{v-T}{S-T}\right)^T\geq \frac{1}{N}
			\,.
		\end{displaymath}
		Hence, we set $v= T + (S-T)/N^{1/T}$ and since $T\leq N\leq S$
		(otherwise the result is trivial), \eqref{eq:lb-proof} implies:
		\begin{displaymath}
			S = \widetilde\Omega (N^{1+1/T})
			\qedhere
		\end{displaymath}
\end{proof}

\begin{remark}
	$\kSUMInd(\ZZ/N^c\ZZ, N)$ reduces to
	$\kSUMInd$ over the integers, so our lower bound extends to
	$\kSUMInd(\ZZ, N)$, too.  Specifically, the reduction works as follows: we
	choose $\{\bar 0,\dots, \overline{N^c-1}\}$ as the set of representatives
	of $\ZZ/N^c\ZZ$. Given some input $A\subseteq \ZZ/N^c\ZZ$ for
	$\kSUMInd(\ZZ/N^c\ZZ, N)$, we treat it as a list of integers and build
	a data structure using our algorithm for $\kSUMInd(\ZZ, N)$. Now, given
	a query $\bar b\in\ZZ/N^c\ZZ$, we again treat it as an integer and query
	the data structure at $b, b+N^c,\ldots, b+(k-2)N^c$. The correctness of the reduction follows from the observation that $\bar b = \bar
	a_{i_{1}} + \dots+\bar a_{i_{k-1}}$ if and only if
	$a_{i_1}+\dots+a_{i_{k-1}} \in\{b,b+N^c,\ldots,b+(k-2)N^c\}$.
\end{remark}

As we already mentioned, no lower bound better than $T\geq
\Omega(\frac{\log{N}}{\log(S/N)})$ is known even for the non-adaptive cell-probe
and linear models, so Theorem~\ref{thm:t2} matches the best known lower bounds
for static data structures. An ambitious goal for future research would
naturally be to prove Conjecture~\ref{conj:dvintro}. A first step in this
direction would be to extend Theorem~\ref{thm:t2} to adaptive strategies that
may err with some probability.

\begin{openq}
	Must any (possibly adaptive) $(S, T, \eps)$ algorithm for
	$\ThreeSUMInd(G, N)$ require $S=\tilde\Omega(\eps N^{1+1/T})$?
\end{openq}

\subsection{$\ThreeSUMInd$-hardness} 

Gajentaan and Overmars introduced the notion of $\ThreeSUM$-hardness and
showed that a large class of problems in computational geometry were
$\ThreeSUM$-hard \cite{go95}.  Informally, a problem is $\ThreeSUM$-hard if 
$\ThreeSUM$ reduces to it with $o(N^2)$ computational overhead. These
fine-grained reductions have the nice corollary that the
$\ThreeSUM$ conjecture immediately implies a $\widetilde\Omega(N^2)$ lower
bound for all $\ThreeSUM$-hard problems.
In this section, we consider a similar paradigm of efficient
reductions between data structure problems, leading to the following definition
of $\ThreeSUMInd$-hardness.

\begin{definition}
	A (static) \emph{data structure problem} is defined by a function $g:D\times Q\to Y$
	where $D$ is the set of data (the input to the data structure problem), $Q$ is
	the set of queries and $Y$ is the set of answers. (See, e.g.,~\cite{M99}.)
\end{definition}

\begin{definition}[$\ThreeSUMInd$-hardness]\label{def:tsi-hardness}
	A data structure problem $g$ is \emph{$\ThreeSUMInd$-hard} if, given a data
	structure for $g$ using space $S$ and time $T$ on inputs of size $N$, it is
	possible to construct a data structure for $\ThreeSUMInd$ using space
	$O(S)$ and time $T$ on inputs of size $N$.
\end{definition}

As an immediate consequence of Definition~\ref{def:tsi-hardness}, 
we get that all $\ThreeSUMInd$-hard problems admit the lower bound of
Theorem~\ref{thm:t2}: i.e., the same lower bound as $\ThreeSUMInd$.
This is stated concretely in Corollary~\ref{cor:3sum-hard}.

\begin{corollary}
	\label{cor:3sum-hard}
	Let $g$ be a $\ThreeSUMInd$-hard data structure problem. Any \emph{non-adaptive} data 
	structure for $g$ using space $S$ and time $T$ on inputs of size $N$ must
	satisfy $S=\widetilde\Omega(N^{1+1/T})$.
\end{corollary}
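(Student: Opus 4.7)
The plan is to derive the corollary as a straightforward contrapositive combination of Theorem~\ref{thm:t2} with the definition of $\ThreeSUMInd$-hardness. Suppose toward a contradiction that there exists a non-adaptive data structure $\scD_g$ for $g$ using space $S$ and query time $T$ on size-$N$ inputs with $S=o(N^{1+1/T}/\mathrm{polylog}\,N)$. By Definition~\ref{def:tsi-hardness}, the hypothesis that $g$ is $\ThreeSUMInd$-hard provides a procedure that transforms $\scD_g$ into a data structure $\scD_{\ThreeSUMInd}$ for $\ThreeSUMInd$ of space $O(S)$ and query time $T$ on size-$N$ inputs. Invoking Theorem~\ref{thm:t2} on $\scD_{\ThreeSUMInd}$ (instantiated over any abelian group $G$ of size at least $N^{k-1}$, e.g., $G=\ZZ/N^c\ZZ$) then forces $O(S)=\widetilde\Omega(N^{1+1/T})$, i.e., $S=\widetilde\Omega(N^{1+1/T})$, contradicting our assumption. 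Thus the stated lower bound must hold.

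The only subtle point is that Theorem~\ref{thm:t2} is a lower bound for \emph{non-adaptive} algorithms, so the reduction supplied by $\ThreeSUMInd$-hardness must preserve non-adaptivity. I would therefore note that Definition~\ref{def:tsi-hardness} is to be interpreted as a black-box reduction in which the query pattern of $\scD_{\ThreeSUMInd}$ on a target $b$ is obtained by running a bounded number of queries of $\scD_g$ whose probe locations depend only on the reduction's input (the target $b$), and never on previously read cells. Under this reading, non-adaptivity of $\scD_g$ transfers to $\scD_{\ThreeSUMInd}$, so Theorem~\ref{thm:t2} applies and the argument closes cleanly.

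The main (and essentially only) obstacle is therefore ensuring that the reduction notion in Definition~\ref{def:tsi-hardness} is strong enough to carry non-adaptivity through; this is a matter of how one formalizes ``construct a data structure,'' not a mathematical difficulty. All other ingredients are already in place: the compressibility lower bound of Theorem~\ref{thm:t2} holds for arbitrary abelian groups $G$ with $|G|\geq N^{k-1}$, the $S$ parameter is preserved up to constant factors by the reduction, and the $T$ parameter is preserved exactly, so the $\widetilde\Omega(N^{1+1/T})$ bound transfers verbatim.
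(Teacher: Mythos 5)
Your proposal is correct and follows exactly the route the paper takes: the paper itself treats Corollary~\ref{cor:3sum-hard} as an immediate consequence of Definition~\ref{def:tsi-hardness} together with Theorem~\ref{thm:t2}, which is precisely your contrapositive chaining. Your extra observation that Definition~\ref{def:tsi-hardness} must be read so that the reduction preserves non-adaptivity is a genuine clarification the paper leaves implicit, and it is consistent with how the paper's reductions (e.g.\ to $\ThreePOLInd$ and $\PCInd$) actually work, where the probe pattern depends only on the translated query and not on previously read cells.
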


We now give two examples\footnote{This is far from exhaustive. All the
problems from \cite{go95,bhp01} which inspired the two examples listed here
similarly admit efficient data structure reductions from $\ThreeSUMInd$.} of how to
adapt known reductions from $\ThreeSUM$ to $\ThreeSUM$-hard problems and obtain
efficient reductions between the analogous data structure problems.
Corollary~\ref{cor:3sum-hard} then implies a lower bound of
$S=\widetilde\Omega(N^{1+1/T})$ for these problems as well, matching the best
known lower bound for static data structure problems.

\newcommand{\ThreePOL}{\textsf{3POL}\xspace}
\newcommand{\ThreePOLInd}{\textsf{3POL-Indexing}\xspace}
\newcommand{\PC}{\textsf{PC}\xspace}
\newcommand{\PCInd}{\textsf{PC-Indexing}\xspace}

\paragraph{3 points on a line ($\ThreePOL$).}

Consider the following data-structure variant of the
$\ThreePOL$ problem, referred to as $\ThreePOLInd$.
The input is a set $\mathcal{X}=\{x_1,\dots, x_N\}$ of $N$ distinct points in
$\RR^2$.  Given a query $q\in\RR^2$, the goal is to find $\{i, j\}\subseteq
[N]$ such that $x_i$, $x_j$, and $q$ are collinear (or report $\bot$ if no such
$\{i,j\}$ exists).

The following observation was made in \cite{go95} and used to reduce
$\ThreeSUM$ to $\ThreePOL$: for distinct reals $a$, $b$ and $c$,
it holds that $a+b+c=0$ iff $(a, a^3)$, $(b, b^3)$, $(c, c^3)$ are collinear.
We obtain an efficient data-structure reduction from $\ThreeSUMInd$ to
$3$\textsf{POL-Indexing} by leveraging the same idea, as follows. Given
input $A=(a_1,\dots, a_N)$ for $\ThreeSUMInd$, construct $\mathcal{X}
= \big\{(a_i, a_i^3)\,:\, i\in [N]\big\}$ and use it as input to a data
structure for $3$\textsf{POL-Indexing}. Then, given a query $b$ for
$\ThreeSUMInd$, construct the query $(-b, -b^3)$ for
$3$\textsf{POL-Indexing}.  Finally, observe that an answer $\{i,j\}$ such
that $(-b, -b^3)$ is collinear with $(a_i, a_i^3)$, $(a_j, a_j^3)$ is also
a correct answer for $\ThreeSUMInd$, by the previous observation. The resulting
data structure for $\ThreeSUMInd$ uses the same space and time as the original
data structure and hence $\ThreePOLInd$ is $\ThreeSUMInd$-hard.

\paragraph{Polygon containment ($\PC$).} The problem and reduction described here are
adapted from \cite{bhp01}. Consider the following data-structure variant of the
polygon containment problem, denoted by \PCInd: the input is
a polygon $P$ in $\RR^2$ with $N$ vertices. The query is a polygon $Q$ with
$O(1)$ vertices and the goal is to find a translation $t\in\RR^2$ such that
$Q+t\subseteq P$.

We now give a reduction from $\ThreeSUMInd$ to \PCInd. Consider
input $A=\{a_1,\dots, a_N\}$ for $\ThreeSUMInd$ and assume without loss of
generality that it is sorted: $a_1<\dots<a_N$.  
Let $0<\eps<1$.
We now define the following
``comb-like'' polygon $P$: start from the base rectangle defined by opposite
corners $(0,0)$ and $(3\bar{a},1)$, where $\bar{a}$ is an upper bound on the
elements of the $\SUM$ input (\emph{i.e.}, $\forall a\in A, 
a<\bar{a}$).\footnote{Our model assumes such a bound is known; see footnote
\ref{fn:elem-size}. The reduction can also be adapted to work even if the upper
bound is not explicitly known.}
For each $i\in[N]$, add two rectangle ``teeth'' 
defined by corners $(a_i, 1)$, $(a_i+\eps, 2)$ and $(3\bar{a}-a_i-\eps, 1)$, 
$(3\bar{a}-a_i, 2)$ respectively.  Note that for each $i\in[N]$ we have one 
tooth with abscissa in $[0,\bar{a}]$ and one tooth with abscissa in 
$[2\bar{a},3\bar{a}]$, and there are no teeth in the interval 
$[\bar{a},2\bar{a}]$. We then give $P$ as input to a data structure for \PCInd.

Consider a query $b$ for $\ThreeSUMInd$. If $b\geq 2\bar{a}$ we can immediately 
answer $\bot$, since a pairwise sum of elements in $A$ is necessarily less than 
$2\bar{a}$. We henceforth assume that $b<2\bar{a}$. Define the comb $Q$ with 
base rectangle defined by corners $(0, 0)$ and $(3\bar{a}-b, 1)$ and with two 
rectangle teeth defined by corners $(0, 1)$, $(\eps, 2)$ and 
$(3\bar{a}-b-\eps, 1)$, $(3\bar{a}-b, 2)$ respectively.
It is easy to see that there exists a translation $t$ such that $Q +t\subseteq
P$ iff it is possible to align the teeth of $Q$ with two teeth of $P$.
Furthermore, the two teeth of $Q$ are at least $\bar{a}$ apart
along the $x$-axis, because $b<2\bar{a}$ by assumption, which implies 
$3\bar{a}-b>\bar{a}$. Hence, the
leftmost tooth of $Q$ needs to be aligned with a tooth of $P$ with abscissa in
$[0, \bar{a}]$, the rightmost tooth of $Q$ needs to be aligned with a tooth of 
$P$ with abscissa in $[2\bar{a}, 3\bar{a}]$, and the distance between the two 
teeth needs to be exactly $3\bar{a}-b$. In other words, there exists a 
translation $t$ such that $Q+t\subseteq P$ iff there exists 
$\{i,j\}\subseteq [N]$ such that $(3\bar{a}-a_j)- a_i = 3\bar{a}-b$, 
\emph{i.e.}, $a_i+a_j = b$. The resulting data structure for
$\ThreeSUMInd$ uses the same space and time as the data structure for
\PCInd. This concludes the proof that \PCInd is
$\ThreeSUMInd$-hard.

\section{Cryptography against massive preprocessing attacks}
\label{sec:crypto}

\subsection{Background on random oracles and preprocessing}

A line of work initiated by Impagliazzo and Rudich \cite{IR89} 
studies the hardness of a random oracle
as a one-way function. In \cite{IR89} it was shown that a random oracle is
an exponentially hard one-way function against uniform adversaries. The case of
non-uniform adversaries was later studied in \cite{Imp96, Zim98}. Specifically
we have the following result.

\begin{proposition}[\cite{Zim98}]
	\label{prop:nu-ro}
	With probability at least $1-\frac{1}{N}$ over the choice of a random
	oracle $R:\zo^n\to\zo^n$, for all oracle circuits $C$ of size at most $T$:
\begin{displaymath}
	\Pr_{x\gets\zo^n}\bigg[
		C^{R}\big(R(x)\big)\in R^{-1}\big(R(x)\big)\bigg]
	\in \widetilde{O}\left(\frac{T^2}{N}\right) \ .
\end{displaymath}
\end{proposition}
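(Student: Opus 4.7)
The plan is to prove Proposition~\ref{prop:nu-ro} via the compressibility paradigm originally due to Gennaro and Trevisan~\cite{GT00}, combined with a union bound over small oracle circuits. Writing $\delta(C,R) \defeq \Pr_x[C^R(R(x)) \in R^{-1}(R(x))]$, the goal is to show that a random $R$ satisfies $\delta(C,R) = \widetilde O(T^2/N)$ \emph{simultaneously for all} oracle circuits $C$ of size at most $T$, except on an event of probability $\leq 1/N$. The per-circuit bound alone only costs $O(T/N)$, but since there are $2^{O(T\log T)}$ circuits of size $T$ and each may depend on $R$, one cannot afford a naive union bound; instead I will argue that a random $R$ which is badly invertible by any fixed $C$ is \emph{compressible} below its $N\log N$-bit generic description length, and then count.

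I first fix a single $C$ of size $\leq T$ and let $B \defeq \{x : C^R(R(x)) \in R^{-1}(R(x))\}$; under the bad event $\delta(C,R) \geq \eps$ we have $|B|\geq \eps N$. A greedy scan over $B$ extracts a subset $B'\subseteq B$ with the non-interference property: for every $x\in B'$, running $C^R$ on input $R(x)$ (a) makes at most $T$ oracle queries, all of which avoid $B'\setminus\{x\}$, and (b) outputs a preimage $x^* \defeq C^R(R(x)) \in [N]\setminus B'$. Because each element added to $B'$ blocks at most $T+1$ future candidates (its queries plus its output), one obtains $|B'| \geq \Omega(|B|/T) = \Omega(\eps N/T)$. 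I then encode $R$ as $(B',\, R|_{[N]\setminus B'},\, \text{aux})$. To decode $R(x)$ for $x\in B'$, the decoder enumerates $y\in[N]$, simulates $C^R(y)$ by answering queries outside $B'$ from the stored table and answering the single allowed query at $x$ by $y$ itself; by (a)–(b) any correct $y=R(x)$ is identified when the simulation outputs some $x^*\notin B'$ satisfying $R(x^*)=y$. A short piece of auxiliary data (of length $O(|B'|)$ to disambiguate collisions and fix a canonical order) suffices, and the resulting encoding has length at most $N\log N - \Omega(|B'|\log(N/|B'|))$. Counting, the number of $R$'s with $\delta(C,R)\geq \eps$ is at most $2^{N\log N - \Omega(\eps N\log N/T)}$, so
\[
  \Pr_R\big[\delta(C,R)\geq \eps\big] \leq 2^{-\Omega(\eps N\log N/T)}.
\]

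Finally, union-bounding over the at most $2^{O(T\log T)}$ oracle circuits of size $\leq T$ gives
\[
  \Pr_R\big[\exists\, C\ \text{of size} \leq T : \delta(C,R)>\eps\big]
  \;\leq\; 2^{O(T\log T)-\Omega(\eps N\log N/T)}.
\]
Choosing $\eps = c\cdot T^2\log T/N$ for a sufficiently large constant $c$ makes this probability at most $1/N$, which is exactly the statement of the proposition with $\eps \in \widetilde O(T^2/N)$.

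The main obstacle I expect is making the encoding/decoding procedure tight: the decoder must recover $R(x)$ for $x\in B'$ even though calling $C^R(R(x))$ ostensibly requires $R(x)$, so one must carefully leverage (a) and (b) and control the ambiguity coming from multiple $y\in[N]$ whose simulations succeed --- it is precisely this log-factor slack in disambiguation, together with the $O(T\log T)$ bits needed to specify a circuit in the union bound, that forces the threshold $\widetilde O(T^2/N)$ rather than the single-circuit bound $O(T/N)$. Everything else (the greedy extraction, the counting step, and the parameter arithmetic) is routine once the encoding scheme is fixed.
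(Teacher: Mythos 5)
The paper does not prove Proposition~\ref{prop:nu-ro}; it is imported as a known theorem from Zimand~\cite{Zim98} and no argument for it appears in the text, so there is no in-paper proof to compare against. Your plan --- a Gennaro--Trevisan compression bound for each fixed circuit, followed by a union bound over the $2^{O(T\log T)}$ oracle circuits of size at most $T$ --- is the standard route to this kind of statement, and the outer arithmetic (the $O(T\log T)$ bits needed to name the circuit are what push the per-circuit threshold $O(T/N)$ up to $\widetilde{O}(T^2/N)$) is sound.

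The encoding step, however, has a real gap. Your non-interference property (b) demands $x^* = C^R(R(x)) \in [N]\setminus B'$, hence $x^*\neq x$. For a random function a constant fraction of images have a \emph{unique} preimage, and on those $x$ the only correct answer is $x^*=x$, so (b) is unachievable; in the extreme but instructive case that $R$ is injective on $B$ (e.g.\ $R$ a permutation, which is the cleanest case of this theorem) the greedy never gets started and $B'=\emptyset$. Your accounting that ``each element added to $B'$ blocks at most $T+1$ future candidates'' only counts elements knocked out \emph{after} something is added and silently assumes the ``$x^*=x$'' elements are negligible, which is false. A correct encoding must either drop (b) --- exploiting the identity $R(x^*)=y$ so the decoder can answer the one troublesome query and \emph{trust} the simulated output rather than verify it against the stored table, as in the permutation version of the argument --- or explicitly account for how many $x$ have $x^*(x)=x$, and the latter is the crux of why the function case is genuinely harder than the permutation case. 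Relatedly, once (b) is relaxed, your decoder's acceptance test (``the simulation outputs some $x^*\notin B'$ with $R(x^*)=y$'') can have many false positives: each accepting $y$ is tied to a distinct $x^*\notin B'$, so up to $N-|B'|$ values of $y$ can pass, and nothing in the sketch bounds this by $O(1)$ per $x$ as the ``$O(|B'|)$ auxiliary bits'' claim implicitly requires. Finally, a smaller arithmetic point: storing $B'$ as a set together with $R$ off $B'$ saves $|B'|\log|B'| - O(|B'|)$ bits, not $\Omega(|B'|\log(N/|B'|))$, which matters exactly in the small-$T$ regime where $|B'|=\widetilde{O}(T)$ is tiny and the savings must still beat $T\log T + \log N$; a compression argument alone is delicate there and one typically supplements it with a direct concentration bound for small $T$.
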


In Proposition~\ref{prop:nu-ro}, the choice of the circuit occurs \emph{after}
the random draw of the oracle: in other words, the description of the circuit
can be seen as a non-uniform advice which depends on the random oracle.
Proposition~\ref{prop:u-ro} is a slight generalization where the adversary is
a uniform Turing machine independent of the random oracle, with oracle access to
an advice of length at most $S$ depending on the random oracle. While the two
formulations are equivalent in the regime $S\leq T$, one advantage of this
reformulation is that $S$ can be larger than the running time $T$ of the
adversary.

\begin{proposition}[Implicit in \cite{DTT10}]
	\label{prop:u-ro}
	Let $\cA$ be a uniform oracle Turing machine whose number of oracle queries
	is $T:\zo^n\to\NN$. For all $n\in\NN$ and $S\in\NN$, with probability at least
	$1-\frac{1}{N}$ over the choice of a random oracle $R:\zo^n\to\zo^n$:
\begin{displaymath}
	\forall\, P\in\zo^S,\;
	\Pr_{x\gets\zo^n}\bigg[
		\cA^{R,P}\big(R(x)\big)\in R^{-1}\big(R(x)\big)\bigg]
	\in O\left(\frac{T(S+n)}{N}\right) \ .
\end{displaymath}
\end{proposition}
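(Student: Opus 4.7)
The plan is to establish the bound via a compression (incompressibility) argument in the style of Yao and Gennaro--Trevisan, as developed for preprocessing attacks on random oracles in \cite{DTT10}. Call a random oracle $R$ \emph{bad} if some advice $P\in\zo^S$ makes $\cA^{R,P}$ invert $R$ on a uniformly random input with probability exceeding $\eps := c\cdot T(S+n)/N$ for a sufficiently large constant $c$. I will derive a contradiction from the assumption that the density of bad oracles exceeds $1/N$, by exhibiting for each bad $R$ an injective encoding of length strictly less than $\log(N^{N}/N)=(N-1)n$ bits.

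For each bad $R$ fix canonically a witnessing advice $P=P(R)$, and iterate $y=1,2,\dots,N$ in increasing order, greedily building a ``saved'' set $Y\subseteq [N]$: at step $y$, simulate $\cA^{R,P}(y)$; if the simulation outputs some $x$ with $R(x)=y$, $x\notin Y$, and every $R$-query in the simulation lies either outside the final $Y$ or in the already-processed prefix $\{x'\in Y:R(x')<y\}$, then add $x$ to $Y$. The encoding is the triple $(P,Y,R|_{[N]\setminus Y})$ together with $O(|Y|)$ auxiliary bits used to break ties when the decoding step could in principle assign the same $x\in Y$ to two distinct values of $y$. The decoder reads the encoded pieces, then iterates $y=1,\dots,N$ in the same order, answering $R$-queries from $R|_{[N]\setminus Y}$ or from the already-reconstructed $Y$-prefix; by construction, whenever $y=R(x)$ for some $x\in Y$, the simulation completes successfully and recovers $R(x)$.

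The encoding length is at most
\begin{displaymath}
S + \log\binom{N}{|Y|} + (N-|Y|)n + O(|Y|) \;=\; Nn + S - |Y|\log|Y| + O(|Y|),
\end{displaymath}
so the injectivity bound $|E(R)|\geq(N-1)n$ forces $|Y|\log|Y|=O(S+n)$. A counting argument on the greedy step yields $|Y|=\Omega(\eps N/T)$: the set of $x$'s on which $\cA^{R,P}$ outputs \emph{exactly} $x$ (not merely some preimage of $R(x)$) has size $\Omega(\eps N)$ with high probability over $R$, since a random function has few collisions, while each previously placed element of $Y$ ``contaminates'' at most $T$ further candidates through its queries. Substituting back produces a contradiction for $c$ sufficiently large, establishing $\eps=O(T(S+n)/N)$. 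The main technical obstacle is decoding uniqueness --- a priori several $y$'s could simulate to the same $x\in Y$ without touching undecoded cells --- which I would handle either by requiring $R(x)$ to be the lex-smallest such $y$ in the greedy step, or by appending a single marker bit per candidate $y$ encountered during decoding; either fix costs $O(|Y|)$ bits and is absorbed in the lower-order term of the encoding bound.
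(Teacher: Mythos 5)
The paper gives no proof of Proposition~\ref{prop:u-ro}: it is stated as ``implicit in~\cite{DTT10}'' and cited, not reproven. Your choice of a Gennaro--Trevisan/De--Trevisan--Tulsiani compression argument is therefore the right family of techniques and matches the source the paper points to. However, there are concrete gaps in the execution.

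First, the greedy rule is circular: you admit $x$ into $Y$ only if ``every $R$-query in the simulation lies either outside the \emph{final} $Y$ or in the already-processed prefix,'' but the final $Y$ is precisely what the greedy pass is constructing. In particular, the rule as written rejects any $y$ whose inverter queries its own output $x$ (that query is in $Y$ but not in the prefix), which would empty $Y$ for very natural adversaries. The standard, non-circular formulation maintains a monotonically growing \emph{blocked set} $B$ of all cells touched so far (queried \emph{and} output), accepts $y$ exactly when the output is a fresh cell not already in $B$, and then adds the output together with all cells queried at step $y$ to $B$; this makes decodability a theorem (any cell queried at an accepted step $y$ that later turns out to lie in $Y$ would have been in $B$ when the later step was processed and hence would not have been accepted), rather than a condition imposed by fiat that one cannot verify at encoding time.

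Second, the counting argument ``the set of $x$'s on which $\cA^{R,P}$ outputs exactly $x$ has size $\Omega(\eps N)$'' is not justified and is where the function (as opposed to permutation) case actually bites. The quantity $\eps N$ counts domain points $x$ and weights each inverted image value $y$ by $|R^{-1}(y)|$, whereas the greedy step yields at most one saved cell per distinct inverted $y$. An adversary biased toward heavy fibers loses a multiplicative $\Theta(\log N/\log\log N)$ in this translation for a random $R$, and this must be controlled explicitly (e.g.\ by first arguing that the mass coming from heavy fibers is negligible, or by passing through the uniform-$y$ distribution); waving at ``few collisions'' is not enough to recover the stated $O(T(S+n)/N)$.

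Third, the decoding-uniqueness patch is not priced correctly. You propose ``a single marker bit per candidate $y$ encountered during decoding'' and claim this costs $O(|Y|)$ bits, but the number of candidate values of $y$ --- those whose simulation is answerable from the stored data yet whose output lands in $Y$ without being a true preimage --- is not bounded by $|Y|$; a trivially constant adversary already creates $\Theta(N)$ such candidates. The usual resolutions are either to store the set of accepted image values $G$ alongside $Y$ (and then account for the extra $\log\binom{N}{|Y|}$ in the encoding bound), or to use the blocked-set formulation above, under which the decoder iterates only over the stored $G$ and no spurious candidates arise. Either way, the ``absorbed in lower-order terms'' claim needs to be replaced with an actual bound.
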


In Proposition~\ref{prop:u-ro}, the advice $P$ can be thought of as the result
of a preprocessing phase involving the random oracle. Also, no assumption is
made on the computational power of the preprocessing adversary but it is simply
assumed that the length of the advice is bounded.

\begin{remark}
	Propositions~\ref{prop:nu-ro} and \ref{prop:u-ro} assume a deterministic
	adversary. For the regime of $S>T$ (which is the focus of this work), this assumption is without loss of generality since
	a standard averaging argument shows that for a randomized adversary, there
	exists a choice of ``good'' randomness for which the adversary achieves at
	least its expected success probability. This choice of randomness can be
	hard-coded in the non-uniform advice, yielding a deterministic adversary.
\end{remark}

Note, however, that Proposition~\ref{prop:u-ro} provides no guarantee when
$S\geq N$. In fact, in this case, defining $P$ to be any inverse mapping
$R^{-1}$ of $R$ allows an adversary to invert $R$ with probability one by
making a single query to $P$.
So, $R$ itself can no longer be used as a one-way function when $S\geq
N$ --- but one can still hope to use $R$ to define a new function $f^R$ that is
one-way against an adversary with advice of size $S\geq N$. This idea motivates
the following definition.

\begin{definition}
	\label{def:our-model}
Let $R:\zo^n\to\zo^n$ be a random oracle.
A \emph{one-way function in the random oracle model with $S$ preprocessing}
is an efficiently computable oracle function $f^R:\zo^{n'}\to\zo^{m'}$
such that for any two-part adversary
$\cA=(\cA_1,\cA_2)$ satisfying $|\cA_1(\cdot)|\leq S$ and where $\cA_2$ is PPT,
the following probability is negligible in $n$:\footnote{A negligible
function is one that is in $o(n^{-c})$ for all constants $c$.}
\begin{equation}\label{eqn:owf}
	\Pr_{R,x\gets\zo^n}\left[f^R\left(\cA_2^{R,\cA_1(R)}\left(f^R(x)\right)\right)=f^R(x)\right]\
	.
\end{equation}
We say that $f$ is an $(S,T,\eps)$-one-way function if the probability in
\eqref{eqn:owf} is less than $\eps$ and $\cA_2$ makes at most $T$ random
oracle queries.
\end{definition}

The adversary model in Definition~\ref{def:our-model} is very similar to the
$1$-BRO model of \cite{BFM18}, differing only in having a restriction on
the output size of $\cA_1$. As was noted in \cite{BFM18}, without this
restriction (and in fact, as soon as $S\geq 2^{n'}$ by the same argument as
above), no function $f^R$ can achieve the property given in
Definition~\ref{def:our-model}. \cite{BFM18} bypasses this impossibility
by considering the restricted case of two independent oracles with two
independent preprocessed advices (of unrestricted sizes). Our work bypasses
it in a different and incomparable way, by considering the case of a
single random oracle with bounded advice.

\subsection{Constructing one-way functions from $\kSUMInd$}\label{sec:3sum-owf}

Our main candidate construction of a OWF (Construction~\ref{constr:owf})
relies on the hardness of average-case \kSUMInd.
First, we define what hardness means, then give the constructions and proofs.

\begin{definition}
Average-case \kSUMInd is \emph{$(G,N,S,T,\eps)$-hard}
if the success probability\footnote{Over
	the randomness of $\cA$, $A$, and
	the average-case \kSUMInd query. (Recall: $A$ is \kSUMInd's input.)}
of any $(S,T)$ algorithm $\cA=(\cA_1,\cA_2)$
in answering average-case \kSUMInd(G,N) queries
is at most $\eps$.
\end{definition}

\begin{construction}\label{constr:owf}
	For $N\in\NN$, let $(G,+)$ be an abelian group
	and let
	$R:[N]\to G$ be a random oracle. Our candidate OWF construction has two
	components:
	\begin{itemize}
		\item the function $f^R:[N]^{k-1}\to G$ defined by
			$ f^R(x) = \sum_{i=1}^{k-1} R(x_i)$ for  $x\in[N]^{k-1}$; and
		\item the input distribution, uniform over $\{x\in[N]^{k-1}\,:\,
			x_1\neq\dots\neq x_{k-1}\}$.
	\end{itemize}
\end{construction}

\begin{remark}[Approximate sampling]
	We depart from the standard definition of a OWF by using a nonuniform
	input distribution in our candidate construction. This makes it easier to
	relate its security to the hardness of \kSUMInd. As long as the input
	distribution is efficiently samplable, a standard construction can be used
	to transform any OWF with nonuniform input into a OWF which operates on
	uniformly random bit strings. Specifically, one simply defines a new OWF
	equal to the composition of the sampling algorithm and the original OWF,
	(see \cite[Section 2.4.2]{G01}).

	In our case, since $N!/(N-k+1)!$ is not guaranteed to be a power of $2$,
	the input distribution in Construction~\ref{constr:owf} cannot be sampled
	exactly in time polynomial in $\log N$. However, using rejection sampling,
	it is easy to construct a sampler taking as input $O(\lceil\log N\rceil^2)$
	random bits and whose output distribution is $1/N$-close  in statistical
	distance to the input distribution. It is easy to propagate  this
	exponentially\footnote{Recall that $N=2^n$ and that following
	Definition~\ref{def:our-model}, $n$ is the security parameter. Terms like
``exponential'' or ``negligible'' are thus defined with respect to $n$.} small sampling error without affecting the conclusion of
	Theorem~\ref{thm:crypto} below. A similar approximate sampling occurs when
	considering OWFs based on the hardness of number theoretic problems, which
	require sampling integers uniformly in a range whose length is not
	necessarily a power of two.
\end{remark}

\begin{remark}
	Similarly, the random oracle $R$ used in the construction is not a random
	oracle in the traditional sense since its domain and co-domain are not bit
	strings. If $|G|$ and $N$ are powers of two, then $R$ can be implemented
	exactly by a standard random oracle $\zo^{\log N}\to\zo^{\log|G|}$. If not,
	using a random oracle
	$\zo^{\poly(\lceil\log|G|\rceil)}\to\zo^{\poly(\lceil\log|G|\rceil)}$, and
	rejection sampling, it is possible to implement an oracle $R'$ which is
	$1/N$ close to $R$ in statistical distance. We can similarly propagate this
	$1/N$ sampling error without affecting the conclusion of
	Theorem~\ref{thm:crypto}.
\end{remark}

\begin{theorem}\label{thm:crypto}
	Consider a sequence of abelian groups $(G_N)_{N\geq 1}$ such that
	$|G_N|\geq N^{k-1+c}$ for some $c>0$ and all $N\geq k-1$, and a function
	$S:\NN\to\RR$. Assume that for all polynomial $T$ there exists a negligible
	function $\eps$ such that average-case $\kSUMInd$ is $(G_N, N, S(n),
	T(n), \eps(n))$-hard for all $N\geq 1$ (recall that $n=\log N$). Then the
	function $f$ defined in Construction~\ref{constr:owf} is a one-way function
	in the random oracle model with $S$ preprocessing.
\end{theorem}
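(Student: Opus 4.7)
The plan is to prove the contrapositive. Suppose some two-part adversary $\cA=(\cA_1,\cA_2)$ inverts $f$ with non-negligible probability $\delta$, where $|\cA_1(R)|\leq S(n)$ and $\cA_2$ is PPT making at most $T(n)=\poly(n)$ oracle queries in total (to both $R$ and to the advice). I will build an $(S,T)$ algorithm $\cB=(\cB_1,\cB_2)$ for average-case $\kSUMInd(G_N,N)$ whose success probability is at least $\delta-\negl$, contradicting the assumed $(S,T,\eps)$-hardness of $\kSUMInd$ for negligible $\eps$.

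The reduction itself is syntactic. Given a uniformly random input $A=(a_1,\dots,a_N)\in G_N^N$, I identify it with the oracle $R_A\colon[N]\to G_N$ defined by $R_A(i)=a_i$; since $A$ is uniform, $R_A$ has exactly the distribution of the random oracle used in Construction~\ref{constr:owf}. The preprocessing step $\cB_1$ simulates the (unbounded) $\cA_1$ with oracle $R_A$ to produce $P=\cA_1^{R_A}$ of length at most $S(n)$ bits, and stores $D=(A,P)$; the extra $N\log|G_N|$ bits needed for $A$ cost only $O(N)$ cells in the word-RAM model, which I absorb into $S$. On a query $b\in Z$, the online algorithm $\cB_2$ simulates $\cA_2^{R_A,P}(b)$, translating each random-oracle query $i$ into a read of $a_i$ within $D$ and each advice query into a read of the appropriate bit of $P$ within $D$; each access consumes one probe to $D$, for a total of at most $T(n)$ probes. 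Whenever $\cA_2$ outputs a tuple $x=(x_1,\dots,x_{k-1})$ of distinct indices satisfying $\sum_j R_A(x_j)=b$, the set $\{x_1,\dots,x_{k-1}\}$ is a valid \kSUMInd\ answer, and $\cB_2$ returns it; otherwise $\cB_2$ aborts.

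The main obstacle I anticipate is the distribution mismatch between the two games. In the OWF game, $\cA_2$ is fed $y=f^R(x)$ for $x$ uniform over distinct $(k-1)$-tuples, whose law weights each $z\in Z$ by its multiplicity $c(z)=|\{I\subseteq[N]:|I|=k-1,\sum_{i\in I}R(i)=z\}|$, whereas in the $\kSUMInd$ game the query $b$ is uniform over the \emph{set} $Z$. I would bridge this by showing that the two distributions are negligibly close in expectation over $R_A$. Setting $M=\binom{N}{k-1}$, the expected number of pairs of distinct $(k-1)$-subsets whose sums collide is at most $\binom{M}{2}/|G_N|\leq M^2/|G_N|\leq N^{2(k-1)-(k-1+c)}=N^{k-1-c}$; hence $\shortexpect_R[M-|Z|]\leq N^{k-1-c}$, and a short computation bounds the expected statistical distance between the two query distributions by $O((M-|Z|)/M)=O(N^{-c})=O(2^{-cn})$, which is negligible in $n=\log N$. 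This is precisely where the hypothesis $|G_N|\geq N^{k-1+c}$ enters the argument, and it is here that one would have to work harder (or amplify) for smaller groups.

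Putting the pieces together, $\cB$'s success probability on its native distribution is at least $\cA$'s inversion probability on the OWF distribution, minus this expected statistical distance, so it is at least $\delta-\negl$ and therefore non-negligible---yielding the desired contradiction with the hardness of $\kSUMInd$. The remaining technicalities---approximate sampling of uniform distinct tuples, and implementing a random function $[N]\to G_N$ via a standard binary random oracle through rejection sampling---are already flagged in the remarks following Construction~\ref{constr:owf} and introduce only $O(1/N)$ additional error, which is also negligible and does not affect the conclusion.
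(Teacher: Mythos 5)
Your proof takes essentially the same route as the paper: reduce a $\kSUMInd$ solver from the assumed OWF inverter by identifying the input tuple $A$ with the random oracle $R_A$, and bridge the distribution mismatch between the OWF challenge (uniform over distinct $(k-1)$-tuples, hence weighted by multiplicity) and the $\kSUMInd$ query (uniform over the set $Z$) via a statistical-distance bound that uses $|G_N|\geq N^{k-1+c}$.

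The one substantive difference is in how you bound that distance. The paper's Lemma~\ref{lem:dist} first proves $\sdist{\mathbf{X}_{1|A}-\mathbf{X}_{2|A}}\le 1-|Z|/\binom{N}{k-1}$ only \emph{under the assumption} $|Z|\ge\frac12\binom{N}{k-1}$ (the assumption is needed to fix the signs inside the absolute values in its derivation), and then combines this conditional bound with Markov's inequality and an optimization over a threshold $\delta$, landing on the weaker $O(N^{-c/2})$. You instead want to pass directly to expectations, claiming $\shortexpect[\sdist{\cdot}] = O\!\left(\shortexpect[(M-|Z|)]/M\right) = O(N^{-c})$. This is in fact correct and tighter, but it requires a short argument that the pointwise inequality $\sdist{\mathbf{X}_{1|A}-\mathbf{X}_{2|A}}\le (M-|Z|)/M$ holds \emph{unconditionally}, not just for inputs $A$ with few collisions. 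This is true (for instance, take the distinguishing set $A^*=\{g:c_g/M>1/|Z|\}\subseteq\{g:c_g\ge 2\}$ and compute $\sdist{\cdot}=\sum_{g\in A^*}(c_g/M-1/|Z|)\le\frac{1}{M}\sum_{g:c_g\ge1}(c_g-1)=(M-|Z|)/M$), but you should state this rather than lean on the paper's conditional derivation, which does not cover the small-$|Z|$ case. With that filled in, your argument is cleaner than the paper's, dispensing with the Markov split and the free parameter $\delta$, and it yields the stronger bound.

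Two small accounting points that are worth being explicit about (both are also somewhat implicit in the paper): (i) storing $A$ inside the $\kSUMInd$ data structure relies on the paper's convention (footnote to Definition~\ref{def:3SUMInd}) that $D_A$ may contain a copy of $A$ ``for free''; and (ii) in Definition~\ref{def:our-model} the parameter $T$ counts random-oracle queries, while queries to the preprocessed string are bounded only by $\cA_2$'s polynomial running time. Your $\cB_2$ converts both kinds of queries into probes, so the probe count of the resulting solver is the sum of the two, which is still $\poly(n)$ and therefore covered by the ``for all polynomial $T$'' quantifier in the theorem statement; just don't identify it with the OWF's $T$ alone.
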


The function $f^R$ in Construction~\ref{constr:owf} is designed precisely so
that inverting $f^R$ on input $x$ is equivalent to solving \kSUMInd for the
input $A=\big(R(1),\dots, R(N)\big)$ and query $\sum_{i=1}^{k-1} a_{x_i}$.
However, observe that the success probability of a OWF inverter is defined for
a random input distributed as $\sum_{i\in I} a_i$ where $I\subseteq [N]$ is
a uniformly random set of indices of size $k-1$. In contrast, in average-case
\kSUMInd, the query distribution is uniform over $\{ \sum_{i\in I} a_i \,:\,
	I\subseteq N, |I|=k-1\}$.  These two distributions are not identical
	whenever there is a collision: two sets $I$ and $I'$ such that $\sum_{i\in
		I} a_i = \sum_{i\in I'} a_i$. The following two lemmas show that
		whenever $|G|\geq N^{k-1+c}$ for some $c>0$, there are few enough
		collisions that the two distributions are negligibly close in
		statistical distance, which is sufficient to prove
		Theorem~\ref{thm:crypto}.

\begin{lemma}
	\label{lem:dist}
	Let $N\geq k-1$ be an integer and let $G$ be an abelian group with $|G|\geq
	N^{k-1+c}$ for some $c>0$. Let
	$\mathbf{A}=(\mathbf{a}_1,\dots,\mathbf{a}_N)$ be a tuple of $N$ elements
	drawn with replacement from $G$. Define the following two random variables:
	\begin{itemize}
		\item $\mathbf{X}_1= \sum_{i\in \mathbf{I}} \mathbf{a}_i$ where
			$\mathbf{I}\subseteq[N]$ is a uniformly random set of size $k-1$.
		\item $\mathbf{X}_2$: uniformly random over $\{\sum_{i\in I} \mathbf{a}_i\,:\,
			I\subseteq[N], |I|=k-1\}$.
	\end{itemize}
	Then the statistical distance is $\sdist{(\mathbf{A}, \mathbf{X}_1)-(\mathbf{A}, \mathbf{X}_2)}=
	O\big(1/\sqrt{N^c}\big)$.
\end{lemma}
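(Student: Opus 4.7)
The plan is to condition on $\mathbf{A}$ and bound the conditional statistical distance by a ``collision surplus'' of the sum map $I\mapsto\sum_{i\in I}\mathbf{a}_i$. Set $M=\binom{N}{k-1}$; for each $v\in G$ let $n_\mathbf{A}(v)$ be the number of $(k-1)$-subsets $I\subseteq[N]$ with $\sum_{i\in I}\mathbf{a}_i=v$; let $Z_\mathbf{A}=\{v:n_\mathbf{A}(v)\geq 1\}$; and define $C(\mathbf{A}):=M-|Z_\mathbf{A}|$. Conditioned on $\mathbf{A}$, $\mathbf{X}_1$ assigns mass $n_\mathbf{A}(v)/M$ to each $v$ while $\mathbf{X}_2$ is uniform on $Z_\mathbf{A}$. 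As a first step I would show the pointwise bound
\[
	\sdist{(\mathbf{X}_1\mid\mathbf{A})-(\mathbf{X}_2\mid\mathbf{A})} \;\leq\; \frac{C(\mathbf{A})}{M}.
\]
Since $|Z_\mathbf{A}|\leq M$, we have $1/|Z_\mathbf{A}|\geq 1/M$, so $\bigl(n_\mathbf{A}(v)/M-1/|Z_\mathbf{A}|\bigr)^+\leq (n_\mathbf{A}(v)-1)^+/M$ for every $v$; summing (and using that the statistical distance equals the sum of positive parts of the density difference) yields the bound. Averaging over $\mathbf{A}$ gives $\sdist{(\mathbf{A},\mathbf{X}_1)-(\mathbf{A},\mathbf{X}_2)}\leq\mathbb{E}[C(\mathbf{A})]/M$.

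Next I would bound $\mathbb{E}[C(\mathbf{A})]$ via a pairwise collision calculation. The elementary inequality $(n-1)^+\leq\binom{n}{2}$ gives $C(\mathbf{A})\leq\sum_v\binom{n_\mathbf{A}(v)}{2}$, which counts unordered pairs of distinct $(k-1)$-subsets with equal sum, so by linearity of expectation
\[
	\mathbb{E}[C(\mathbf{A})]\;\leq\;\binom{M}{2}\cdot\Pr_{\mathbf{A}}\!\left[\sum_{i\in I}\mathbf{a}_i=\sum_{i\in I'}\mathbf{a}_i\right]
\]
for any fixed pair of distinct subsets $I\neq I'$. For such a pair the symmetric difference $I\Delta I'$ is nonempty; picking any $i^{*}\in I\Delta I'$ and conditioning on $(\mathbf{a}_j)_{j\neq i^{*}}$, the collision event reduces to a single linear equation in the uniform variable $\mathbf{a}_{i^{*}}\in G$, so it holds with probability exactly $1/|G|$. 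Thus $\mathbb{E}[C(\mathbf{A})]\leq M^2/(2|G|)$, and combined with the first step this gives $\sdist{(\mathbf{A},\mathbf{X}_1)-(\mathbf{A},\mathbf{X}_2)}\leq M/(2|G|)\leq N^{k-1}/(2N^{k-1+c})$, of order $O(1/N^c)$.

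This is actually a touch stronger than the stated $O(1/\sqrt{N^c})$ bound; the stated bound can also be recovered more crudely via truncation plus Markov, by noting that for any threshold $t$ the conditional distance is at most $t/M$ when $C(\mathbf{A})\leq t$ while $\Pr[C(\mathbf{A})>t]\leq\mathbb{E}[C(\mathbf{A})]/t$, and balancing at $t=\sqrt{M\cdot\mathbb{E}[C(\mathbf{A})]}$. I expect the main obstacle to be the first step: identifying that the pointwise excess mass of $\mathbf{X}_1$ over $\mathbf{X}_2$ is controlled by $n_\mathbf{A}(v)-1$, so that the statistical distance is controlled by the combinatorial quantity $C(\mathbf{A})=M-|Z_\mathbf{A}|$. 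Once this reduction is in hand, the remaining second-moment step is a routine pairwise argument that leverages only the independence and uniformity of the $\mathbf{a}_i$ over $G$.
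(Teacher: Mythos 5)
Your proof is correct and in fact yields the stronger bound $O(1/N^c)$ rather than $O(1/\sqrt{N^c})$. The essential ingredients are the same as in the paper --- condition on $\mathbf{A}$, express the conditional statistical distance in terms of $1-|Z_{\mathbf{A}}|/M$, and bound $\mathbb{E}[M-|Z_{\mathbf{A}}|]$ by counting pairwise collisions using that $|G|\geq N^{k-1+c}$ (the paper proves essentially your collision bound as a separate Lemma, getting $M^2/|G|$ with a union-bound argument instead of your $\binom{M}{2}/|G|$). The real difference is in the first step: the paper writes the conditional statistical distance as $\tfrac{1}{2}\sum_{g\in Z}\bigl|\tfrac{1}{|Z|}-\tfrac{c_g}{M}\bigr|$ and, in order to resolve the signs inside the absolute values, needs the extra assumption $|Z|\geq M/2$. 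This forces a two-regime argument --- split on $|Z|\le(1-\delta)M$, apply Markov to the bad event, and balance at $\delta=1/\sqrt{N^c}$ --- which loses a square root. Your use of the positive-part characterization $\sum_v\bigl(p_1(v)-p_2(v)\bigr)^+$, combined with the inequality $1/|Z_{\mathbf{A}}|\geq 1/M$, makes the pointwise bound $C(\mathbf{A})/M$ hold unconditionally, so you can take a direct expectation and avoid the lossy truncation. Your remark at the end correctly recovers the paper's weaker bound by the truncation-plus-Markov route, confirming the two proofs are closely related and that the gap is purely in how the conditional bound is established.
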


\begin{proof}
	First, by conditioning on the realization of $\mathbf{A}$:
	\begin{equation}
		\label{eq:stat-dist}
		\sdist{(\mathbf{A},\mathbf{X}_1)-(\mathbf{A},\mathbf{X}_2)}
		=\sum_{A\in G^N}\Pr[\mathbf{A}=A]\sdist{\mathbf{X}_{1|A}-\mathbf{X}_{2|A}}
		\,,
	\end{equation}
	where $\mathbf{X}_{i|A}$ denotes the distribution of $\mathbf{X}_{i}$
	conditioned on the event $\mathbf{A}=A$ for $i\in\{1,2\}$.

	We now focus on a single summand from \eqref{eq:stat-dist} corresponding to
	the realization $\mathbf{A}=A$ and define $Z= \{\sum_{i\in I} a_i\,:\, I\subseteq
		[N], |I|=k-1\}$, the set of $(k-1)$-sums and for $g\in G$, $c_g
		= \left|\left\{I\subseteq [N]\,:\, |I|=k-1 \wedge \sum_{i\in I}a_i
		= g\right\}\right|$ is the number of $(k-1)$-sets of indices whose
		corresponding sum equals $g$. Then we have:
		\begin{displaymath}
			\sdist{\mathbf{X}_{1|A} - \mathbf{X}_{2|A}} = \frac{1}{2}\sum_{g\in Z}
			\left|\frac{1}{|Z|} - \frac{c_g}{{N\choose k-1}}\right| \,.
		\end{displaymath}
	Observe that $c_g\geq 1$ whenever $g\in Z$. We now assume that $|Z|\geq
	\frac{1}{2}{N\choose k-1}$ (we will later only use the following derivation
	under this assumption). Splitting the sum on $c_g>1$:
	\begin{displaymath}
		\sdist{\mathbf{X}_{1|A} - \mathbf{X}_{2|A}}
		= \frac{1}{2}\sum_{g\,:\,c_g=1} \left(\frac{1}{|Z|}
		- \frac{1}{{N \choose k-1}}\right)
		+ \frac{1}{2}\sum_{g\,:\, c_g > 1} \left(\frac{c_g}{{N\choose k-1}} - \frac{1}{|Z|}\right)\,,
	\end{displaymath}
	where we used the trivial upper bound $|Z|\leq {N\choose k-1}$ and the
	assumption that $|Z|\geq\frac{1}{2}{N\choose k-1}$ to determine the sign of
	the quantity inside the absolute value. We then write:
	\begin{align*}
		\sdist{\mathbf{X}_{1|A} - \mathbf{X}_{2|A}}
		&= \frac{1}{2}\sum_{g\,:\,c_g=1} \left(\frac{1}{|Z|}
		- \frac{1}{{N\choose k-1}}\right)
		+ \frac{1}{2}\sum_{g\,:\, c_g > 1} \left(\frac{c_g-1}{{N\choose k-1}}
			+\frac{1}{{N\choose k-1}}- \frac{1}{|Z|}\right)\\
		&\leq \frac{1}{2}\sum_{g\,:\,c_g\geq 1} \left(\frac{1}{|Z|}
		- \frac{1}{{N\choose k-1}}\right)
		+ \frac{1}{2}\sum_{g\,:\, c_g> 1} \frac{c_g-1}{{N\choose k-1}}\\
		&= \frac{1}{2}\sum_{g\,:\,c_g\geq 1} \left(\frac{1}{|Z|}
		- \frac{1}{{N\choose k-1}}\right)
		+ \frac{1}{2}\sum_{g\,:\, c_g\geq 1} \frac{c_g-1}{{N\choose k-1}}
		= \left(1-\frac{|Z|}{{N\choose k-1}}\right)\,,
	\end{align*}
	where the inequality uses again that $|Z|\leq {N\choose k-1}$, and the last
	equality uses that $\sum_{g:c_g\geq 1} c_g = {N\choose k-1}$ and that
	$Z=\{g\,:\, c_g\geq 1\}$.

	We now consider some $\delta\leq 1/2$ which will be set at the end of the
	proof and split the sum in \eqref{eq:stat-dist} on $|Z|\leq
	(1-\delta){N\choose k-1}$:
	\begin{align*}
		\sdist{(\mathbf{A}, \mathbf{X}_1)- (\mathbf{A}, \mathbf{X}_2)}
		&\leq \Pr\left[|Z|\leq {N\choose k-1}(1-\delta)\right]
		+ \delta\cdot \Pr\left[|Z|>{N\choose k-1}(1-\delta)\right]\\
		&\leq \Pr\left[|Z|\leq {N\choose k-1}(1-\delta)\right]
		+ \delta
		\,,
	\end{align*}
	where we used the trivial upper bound $\sdist{\mathbf{X}_{1|A}
	- \mathbf{X}_{2|A}}\leq 1$ when $|Z|\leq (1-\delta){N\choose k-1}$ and the
	upper bound $\sdist{\mathbf{X}_{1|A} - \mathbf{X}_{2|A}}< \delta$  when $|Z|>
	(1-\delta){N\choose k-1}$ by the previous derivation.

	We now use Markov's inequality and Lemma~\ref{lem:size} below to upper
	bound the first summand:
	\begin{align*}
		\sdist{(\mathbf{A},\mathbf{X}_1)-(\mathbf{A}, \mathbf{X}_2)}
		&\leq \frac{1}{\delta {N\choose k-1}}\left({N\choose k-1}
		- \mathbb{E}\big[|Z|\big]\right) + \delta\\
		&\leq \frac{1}{\delta |G|}{N \choose k-1} + \delta
		\leq \frac{1}{\delta(k-1)! N^c} + \delta
		\,.
	\end{align*}
	where the last inequality uses that $|G|\geq N^{k-1+c}$ by assumption.
	Finally, we set $\delta = 1/\sqrt{N^{c}}$ to get the desired conclusion.
\end{proof}

\begin{lemma}
	\label{lem:size}
	Let $N\geq k-1$ be an integer and let $G$ be an abelian group of size at
	least $N$. Let $\mathbf{A}=(\mathbf{a}_1,\dots,\mathbf{a}_N)$ be a tuple of
	$N$ elements drawn with replacement from $G$. Define $Z = \big\{ \sum_{i\in
	I} \mathbf{a}_i\,:\, I\subseteq [N]\wedge |I|=k-1\}$ to be the set of
		$(k-1)$-sums of coordinates of $A$, then:
	\begin{displaymath}
		{N\choose k-1} - \mathbb{E}\big[|Z|\big] \leq \frac{1}{|G|}{N\choose k-1}^2\,.
	\end{displaymath}
\end{lemma}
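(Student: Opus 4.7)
The plan is to rewrite the quantity $\binom{N}{k-1} - |Z|$ as a sum counting ``collisions,'' then bound the expectation by counting colliding pairs of $(k-1)$-subsets via linearity of expectation.

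First I would enumerate the $(k-1)$-subsets of $[N]$ as $I_1, \dots, I_M$ with $M=\binom{N}{k-1}$, and write $\mathbf{X}_{I_j} = \sum_{i\in I_j}\mathbf{a}_i$. For $g\in G$, let $c_g = |\{j : \mathbf{X}_{I_j}=g\}|$ (note $c_g$ is a random variable). Then $\sum_g c_g = M$ deterministically, and $|Z|=\sum_g \mathbf{1}[c_g\geq 1]$, so
\begin{equation*}
M - |Z| \;=\; \sum_{g:\,c_g\geq 1}(c_g-1).
\end{equation*}
Using the elementary inequality $c-1\leq \binom{c}{2}$ valid for all integers $c\geq 1$, I would bound the right-hand side by $\sum_g \binom{c_g}{2}$, which counts precisely the number of unordered pairs $\{I, I'\}$ of distinct $(k-1)$-subsets with $\mathbf{X}_I=\mathbf{X}_{I'}$.

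Next I would take expectations and apply linearity. For any two distinct $(k-1)$-subsets $I\neq I'$, since $|I|=|I'|=k-1$, the symmetric difference is nonempty; pick any $j\in I\setminus I'$ and condition on $(\mathbf{a}_i)_{i\neq j}$. The event $\mathbf{X}_I=\mathbf{X}_{I'}$ then forces $\mathbf{a}_j$ to equal a specific element of $G$, which happens with probability exactly $1/|G|$ since $\mathbf{a}_j$ is uniform on $G$ and independent of the other coordinates. Hence
\begin{equation*}
M - \mathbb{E}[|Z|] \;\leq\; \mathbb{E}\!\left[\sum_g \binom{c_g}{2}\right] \;=\; \sum_{\{I,I'\}:\,I\neq I'} \Pr[\mathbf{X}_I=\mathbf{X}_{I'}] \;=\; \binom{M}{2}\cdot\frac{1}{|G|} \;\leq\; \frac{M^2}{|G|},
\end{equation*}
which is exactly the desired bound.

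I do not anticipate any genuine obstacle: the argument is a short second-moment-style computation. The only small care is in the combinatorial step $c-1\leq \binom{c}{2}$ for $c\geq 1$ (equivalently, starting from the lexicographic-order trick $M-|Z|\leq \sum_{j}\sum_{j'<j}\mathbf{1}[\mathbf{X}_{I_j}=\mathbf{X}_{I_{j'}}]$ would give the same conclusion) and in verifying that $\Pr[\mathbf{X}_I=\mathbf{X}_{I'}]=1/|G|$ holds for every pair of distinct equal-size subsets, which follows from the ``one free coordinate'' argument above using that $\mathbf{A}$ is sampled i.i.d. uniform over $G$.
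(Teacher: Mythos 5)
Your proof is correct and takes essentially the same approach as the paper: both reduce $\binom{N}{k-1}-\mathbb{E}[|Z|]$ to a count of colliding pairs of $(k-1)$-subsets, each of which collides with probability exactly $1/|G|$ by the ``one free coordinate'' observation. The only difference is cosmetic bookkeeping — you use the identity $M-|Z|=\sum_{g:c_g\geq 1}(c_g-1)\leq\sum_g\binom{c_g}{2}$, while the paper bounds $|Z|\geq\sum_I(1-X_I)$ for per-subset collision indicators $X_I$ and union-bounds $\mathbb{E}[X_I]$; both yield the $M^2/|G|$ bound.
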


\begin{proof}
	For each $(k-1)$-set of indices $I\subseteq [N]$, we define the random
	variable $X_I$ to be the indicator that the sum $\sum_{i\in I}\mathbf{a}_i$
	collides with $\sum_{i\in I'} \mathbf{a}_i$ for some $(k-1)$-set of indices
	$I'\neq I$:
	\begin{displaymath}
		X_I = \mathbf{1}\left\{\exists I'\subseteq [N]\,:\, |I'|=k-1 \wedge
			I'\neq I \wedge \sum_{i\in I} \mathbf{a}_i = \sum_{i\in I'} \mathbf{a}_i
		\right \}
		\,.
	\end{displaymath}
	Then, using a union bound and since the probability of a collision is
	$1/|G|$:
	\begin{displaymath}
		\mathbb{E}\big[X_I\big] \leq \sum_{I'\neq I}
		\Pr\left[\sum_{i\in I}\mathbf{a}_i = \sum_{i\in I'} \mathbf{a}_i\right] \leq
		\frac{{N\choose k-1}}{|G|}
		\,.
	\end{displaymath}
	On the other hand, there are at least as many elements in $Z$ as $(k-1)$-sets of
	indices $I\subseteq [N]$ which do not collide with any other $(k-1)$-set:
	\begin{displaymath}
		|Z| \geq \sum_{\substack{I\subseteq [N]\\|I|=k-1}} (1-X_I)
		= {N\choose k-1} - \sum_{\substack{I\subseteq [N]\\|I|=k-1}} X_I
		\,.
	\end{displaymath}
	Combining the previous two inequalities concludes the proof.
\end{proof}

We are now ready to prove Theorem~\ref{thm:crypto}.

\begin{proof}[Proof (Theorem~\ref{thm:crypto})]
	Throughout the proof, we fix $N$ and write $G,S,T$ to denote
	$G_N,S(n),T(n)$ respectively, leaving the parameter $n$ implicit.  Suppose,
	for contradiction, that $f$ is not a one-way function in the random oracle
	model with $S$ preprocessing.  Then there exists $\cA=(\cA_1,\cA_2)$ such
	that $|\cA_1(\cdot)|\leq S$ and $\cA_2$ is PPT, which inverts $f$ with
	probability at least $\delta$ for some non-negligible $\delta$:
	\begin{equation}\label{eq:break}
		\Pr_{R,\mathbf{x}}\left[f^R\left(\cA_2^{R,\cA_1(R)}\left(f^R(\mathbf{x})\right)\right)=f^R(\mathbf{x})\right]\geq\delta\, .
\end{equation}
where $R:[N]\to G$ is a random oracle and $\mathbf{x}\in[N]^{k-1}$ is a random input to
$f^R$ distributed as defined in Construction~\ref{constr:owf}. Then, we use
$\cA$ to build an $(S,T)$ solver $\cA'=(\cA'_1,\cA'_2)$ for $\kSUMInd(G,N)$ as
follows.  Given input $A= (a_1,\dots,a_N)$ for $\kSUMInd(G,N)$, $\cA'_1$
defines random oracle $R:[N]\to G$ such that $R(i) = a_i$ for $i\in [N]$ and
outputs $\cA_1(R)$ --- this amounts to interpreting the tuple $A$ as a function
mapping indices to coordinates. $\cA_2'$ is identical to $\cA_2$.  By
construction, whenever $\cA_2$ successfully inverts $f^R$ ({\it i.e.}, outputs
$x\in[N]^{k-1}$ such that $f^R(x)=b$ for input $b$), then the output of
$\cA_2'$ satisfies $\sum_{i=1}^{k-1} a_{x_i}= b$.

It follows from \eqref{eq:break} that $\cA'$ as described thus far solves
average-case $\kSUMInd(G, N)$ with success probability $\delta$ when given
as input a query distributed as $f^R(\mathbf{x})$. By construction, the
distribution of $f^R(\mathbf{x})$ is identical to the distribution of $\sum_{i\in I}
a_i$ for a uniformly random set $I\subseteq [N]$ of size $k-1$, let $X_1$ denote this
distribution. However, average-case $\kSUMInd(G, N)$ is defined with
respect to a distribution of queries which is uniform over $\{\sum_{i\in I}
a_i\,:\, I\subseteq [N]\wedge |I|=k-1\}$, let us denote
this distribution by $X_2$. By Lemma~\ref{lem:dist}, we have that $\sdist{(A,
X_1) - (A, X_2)} = O(1/\sqrt{N^{c}})$, hence $\cA_2$ solves $\kSUMInd(G,
N)$ for the correct query distribution $X_2$ with probability at least $\delta-
O(1/\sqrt{N^c})$ which is non-negligible since $\delta$ is non-negligible.
Denoting by $T$ the running time of $\cA_2$, we just proved that $\cA'$  is an
$(S, T,\delta-O(1/\sqrt{N^c}))$ adversary for average-case $\kSUMInd(G,
N)$, which is a contradiction.
\end{proof}

We conjecture that $\ThreeSUMInd$ is $(G, N, S, T, \eps)$-hard with $\eps
= \frac{ST}{N^2}$ when $G = (\ZZ/N^c\ZZ, +)$ (the standard $\ThreeSUMInd$
problem) and $G = ((\ZZ/2\ZZ)^{cn}, \oplus)$ (the $\ThreeXORInd$ problem) for
$c> 2$. If this conjecture is true, the previous theorem implies the existence of
(exponentially strong) one-way functions in the random oracle model as long the
preprocessing satisfies $S\leq N^{2-\delta}$ for $\delta> 0$. As per the
discussion below Definition~\ref{def:our-model}, Theorem~\ref{thm:crypto} is
vacuous in the regime where $S=\widetilde\Omega(N^2)$.

\subsection{Cryptography with preprocessing and data structures}
\label{sec:cryptods}

In this section we show that the construction in Section~\ref{sec:3sum-owf} is
a specific case of a more general phenomenon.  Specifically,
Theorem~\ref{thm:eq} below states that the existence of one-way functions in
the random oracle model with preprocessing is equivalent to the existence of
a certain class of hard-on-average data structure problems.
The next two definitions formalize the definitions of a data structure problem
and a solver for a data structure problem.

\begin{definition}
	An \emph{$(S,T,\eps)$-solver} for a data structure problem $g:D\times Q\to Y$
	is a two-part algorithm $\cB=(\cB_1,\cB_2)$ such that:
	\begin{itemize}
		\item $\cB_1$ takes as input $d\in D$ and computes a data structure $\phi(d)$
		such that $|\phi(d)|\leq S$; and
		\item $\cB_2$ takes as input query $q\in Q$, makes at most $T$ queries to $\phi(d)$,
		and outputs $y\in Y$.
	\end{itemize}
	We say that a given execution of $\cB$ \emph{succeeds} if $\cB_2$ outputs $y=g(d,q)$.
\end{definition}

Theorem~\ref{thm:eq} considers
a special class of data structure problems for which a query can be efficiently
generated given its answer, as defined next.

\begin{definition}\label{def:eff-query-gen}
	Let $g:D\times Q\to Y$ be a static data structure problem
	and let $h:D\times Y\to Q$.
	Then $h$ is an \emph{efficient query generator for $g$} if
	$h$ is
	computable in time $\poly(\log{|Q|}, \log{|Y|})$
	and
	\begin{equation}\label{eqn:eff-query-gen}
		\forall d\in D,\, y\in Y,\; g\big(d, h(d, y)\big) = y
		\,.
	\end{equation}
	For any $h$ which is an efficient query generator for $g$,
	we say that \emph{$(g,h)$ is $(S, T, \eps)$-hard}
	if for query distribution $q=h(d, y)$
	where $d\in D,y\in Y$ are uniformly random,
	no $(S, T)$-solver succeeds with probability more than $\eps$.\footnote{For simplicity we consider the uniform distributions on $D$ and $Y$, but all definitions and results easily generalize to arbitrary distributions.}
\end{definition}

\begin{remark}
	For the $\ThreeSUMInd$ problem, $h$ is the function that takes $d
	= (a_1,\ldots,a_n)$ and a pair of indices $y=(i,j)$ and outputs $a_i+a_j$.
	Constructing a corresponding function $g$ for this $h$ is equivalent to
	solving the $\ThreeSUMInd$ problem.
\end{remark}
\begin{remark}
	Let $g,h$ be defined as in Definition~\ref{def:eff-query-gen}.
	Then because $g$ is a function and $h$ satisfies~\eqref{eqn:eff-query-gen}, it holds that
	for any given $d\in D$, the function $h(d,\cdot)$ is injective. That is, for any $d\in D, y,y'\in Y$,
	\begin{equation}\label{eqn:uniqueness}
	h(d,y)=h(d,y') \quad\Rightarrow\quad y=y'\ .
	\end{equation}
\end{remark}

\begin{theorem}
\label{thm:eq}
	There exists a $(S, T, \eps)$-hard data structure with efficient query generation iff
	there exists a $(S, T,\eps)$-hard OWF in the random oracle model with preprocessing.

	More specifically, there is an efficient explicit transformation:
	(1) from any $(S, T, \eps)$-hard data structure with efficient query generation
	to a $(S, T,\eps)$-hard OWF in the random oracle model with preprocessing; and
	(2) from any $(S, T,\eps)$-hard OWF in the random oracle model with preprocessing
	to an explicit construction of a $(S, T, \eps)$-hard data structure.
	For the second transformation, the resulting data structure is always in ${\bf QuasiP}$
	(with respect to its input size), and is in fact in ${\bf P}$
	whenever the input/output size of the underlying OWF is linear in the
	input/output size of the random oracle.
\end{theorem}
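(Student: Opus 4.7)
I would prove the equivalence by constructing explicit transformations in both directions, exploiting a common identification: the random oracle $R$ in the OWF setting plays the role of the data $d$ in the data structure setting, while the efficient query generator $h(d, \cdot)$ corresponds to the forward evaluation $f^R(\cdot)$ of the OWF. Under this identification, inverting $f^R$ on input $q = h(R, y)$ is equivalent to computing $g(R, q) = y$. The remark following Definition~\ref{def:eff-query-gen} already establishes that $h(d, \cdot)$ is injective for every fixed $d$, so the candidate OWFs derived from DS problems are injective and each point in the image has a unique preimage.

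\textbf{The two directions.} For the DS-to-OWF direction, given $(g, h)$, define $f^R(y) := h(R, y)$ after encoding elements of $D$ as truth tables of $R$. Efficiency of $h$ (in $\poly(\log|Q|, \log|Y|)$ time) yields that $f^R$ is PPT. Any OWF inverter $(\cA_1, \cA_2)$ with an $S$-bit advice and $T$ total oracle queries becomes a DS solver $(\cB_1, \cB_2)$: $\cB_1(d)$ outputs a data structure bundling $\cA_1(R)$ together with $d$ itself (permitted by the DS convention of including a copy of the input), and $\cB_2(q)$ simulates $\cA_2(q)$, redirecting each of its oracle queries (to $R$ or to $\cA_1(R)$) into the corresponding region of $\phi(d)$. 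For the OWF-to-DS direction, given $f^R : \{0,1\}^{n'} \to \{0,1\}^{m'}$, set $D$ to the set of truth tables of $R$, $Y = \{0,1\}^{n'}$, $Q = \{0,1\}^{m'}$, $h(d, y) := f^d(y)$, and $g(d, q) :=$ lexicographically smallest preimage of $q$ under $f^d$ (or $\bot$ if none exists). Then $h$ is computable in $\poly(n, n', m')$ time with random access to $d$, which is $\poly(\log|Y|, \log|Q|)$ whenever $n' = \poly(n)$, so $h$ is a valid efficient query generator; the symmetric simulation converts any DS solver for $(g, h)$ back into a OWF inverter. The ${\bf QuasiP}$/${\bf P}$ classification follows from brute-force evaluation of $g$: enumerating $y \in \{0,1\}^{n'}$ and checking $f^d(y) = q$ takes $2^{n'} \cdot \poly(n, n', m')$ time, which is $2^{\poly(n)} = 2^{\mathrm{polylog}(|d|)}$ (i.e., ${\bf QuasiP}$) when $n' = \poly(n)$, and is $2^{O(n)} = \poly(|d|)$ (i.e., ${\bf P}$) when $n' = O(n)$.

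\textbf{Main obstacle.} The principal technical delicacy is aligning the two cost models. In the OWF framework, $T$ counts total oracle queries (to both $R$ and the advice), whereas in the DS framework $T$ counts cell probes into a single preprocessed structure $\phi(d)$. The reductions reconcile this by bundling $R$ and the OWF advice into $\phi(d)$; the resulting DS has total size essentially $|d| + S$, which preserves $S$ up to constant factors in the regime where $S$ dominates $|d|$ (the regime of interest for massive-preprocessing attacks). A subsidiary subtlety in the OWF-to-DS direction is that non-injective OWFs do not immediately produce a function $g$ satisfying $g(d, h(d, y)) = y$ for every $y$; the lex-smallest choice only matches for $y$ that are themselves canonical preimages. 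The gap is bounded by the collision probability of a random function $f^R$ at a uniform input, which is negligible in our parameter regime and can be absorbed into $\eps$.
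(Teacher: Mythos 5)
Your proposal takes essentially the same approach as the paper: both directions hinge on identifying $f^R(\cdot)$ with $h(R,\cdot)$ after encoding $R$'s truth table as $d$, both choose $g(d,q)=\min\{y : f^d(y)=q\}$ for the OWF\,$\Rightarrow$\,DS direction, and both obtain the ${\bf QuasiP}/{\bf P}$ classification by brute-force enumeration of $\zo^{n'}$.

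The two subtleties you flag are genuine, and you handle the first more carefully than the paper does. In the DS\,$\Rightarrow$\,OWF direction, the paper's $\cB_2$ redirects only $\cA_2$'s probes to the advice $\cA_1(R)$ and does not explain how to answer $\cA_2$'s direct queries to $R$ --- but $\cB_2$ only has probe access to $\phi(d)$, not to $d$. Your fix of bundling $d$ into $\phi(d)$ is correct but inflates the DS size to $S+|d|$, so the ``$(S,T,\eps)$-hard $\Rightarrow$ $(S,T,\eps)$-hard'' parameter-preservation in the theorem is only exact up to additive $|d|$ in the space. On the second subtlety, you are right that when $f^d$ is not injective, the lex-smallest-preimage $g$ violates the defining identity \eqref{eqn:eff-query-gen}: indeed the remark after Definition~\ref{def:eff-query-gen} notes that \eqref{eqn:eff-query-gen} forces $h(d,\cdot)$ to be injective, yet nothing in Definition~\ref{def:our-model} forces $f^R$ to be injective, so the paper's claim that ``$g$ is, by construction, an inverse of $h$'' is not justified in general. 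However, your proposed repair --- absorbing collisions into $\eps$ --- does not fully close this gap: an arbitrary hard OWF can have constant collision probability (e.g., one that discards part of its input), and \eqref{eqn:eff-query-gen} is a worst-case requirement over all $(d,y)$, not an average-case one. The clean fix, which keeps the hardness reduction intact, is to weaken \eqref{eqn:eff-query-gen} to $h\big(d, g(d, h(d,y))\big) = h(d,y)$, i.e., require only that $g$ return \emph{some} preimage; under that relaxation the rest of both your argument and the paper's go through without needing injectivity.
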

\begin{proof}
	We show the two implications in turn.\footnote{Throughout this proof, we assume
	the domain and range of the data structure problem and OWF are bitstrings.
	The proof generalizes to arbitrary domains and ranges. 
	}
	\begin{itemize}
		\item \textbf{DS $\Rightarrow$ OWF.}
			Let $g:\zo^{\tilde{N}}\times\zo^{m'}\to\zo^{n'}$ be a data structure problem, and
			let $h:\zo^{\tilde{N}}\times\zo^{n'}\to\zo^{m'}$ be an efficient query generator for $g$
			such that $(g,h)$ is $(S, T, \eps)$-hard.
			Let $R:\zo^n\to\zo^n$ be a random oracle, such that $\tilde{N}=n2^n$.
			We define an oracle function $f^R:\zo^{n'}\to\zo^{m'}$ as follows:
			\[f^R(x)=h(\hat{R},x)\ ,\]
			where $\hat{R}$ denotes the binary representation of $R$.

			$f$ is a $(S, T,\eps)$-hard OWF in the random oracle model with preprocessing,
			because it is efficiently computable and hard to invert, as proven next.
			Since $h$ is efficiently computable, $f$ runs in time $\poly(n',m')$.

			It remains to show that $f$ is $(S,T,\eps)$-hard to invert.
			Suppose, for contradiction, that this is not the case: namely, that there is
			a two-part adversary $\cA=(\cA_1,\cA_2)$ such that
			\begin{equation}\label{eqn:DS-to-OWF}
				\Pr_{x\gets\zo^{n'}}\left[h\left(R,\cA_2^{\cA_1(R)}\left(h(R,x)\right)\right)=h(R,x)\right]>\eps\ ,
			\end{equation}
			and $\cA_1$'s output size is at most $S$, $\cA_2$ makes at most $T$ queries to $\cA_1(R)$,
			and the probability is also over the sampling of the random oracle $R$.

			We use $\cA$ to build $(\cB_1,\cB_2)$, an $(S,T)$-solver for $g$, as follows.
			On input $d\in \zo^{\tilde{N}}$, $\cB_1$ simply outputs $\phi(d)=\cA_1(d)$.
			On input $q\in\zo^{m'}$, $\cB_2$ runs $\cA_2^{\cA_1(R)}(q)$;
			for each query $\zeta$ that $\cA_2$'s makes to $\cA_1(R)$,
			$\cB_2$ simply queries $\phi(d)$ on $\zeta$ and returns the response to $\cA_2$.

			It follows from \eqref{eqn:uniqueness} and \eqref{eqn:DS-to-OWF} that
			\[\Pr_{\substack{d\gets\zo^{\tilde{N}}\\y\gets\zo^{n'}}}\left[\cB_2^{\phi(d)}(h(d,y))=y\right]\geq\eps\ .\]
			This contradicts the $(S,T,\eps)$-hardness of $(g,h)$.
		\item \textbf{OWF $\Rightarrow$ DS.}
			Let $f^R:\zo^{n'}\to\zo^{m'}$ be a
			$(S, T,\eps)$-hard OWF in the random oracle model with preprocessing,
			for a random oracle mapping $n$ bits to $n$ bits.
			We design a data structure problem $g:\zo^{\tilde{N}}\times\zo^{m'}\to\zo^{n'}$
			and an efficient query generator $h$ for $g$
			such that $\tilde{N}=n2^n$ and $(g,h)$ is $(S, T, \eps)$-hard, as follows.
			\begin{itemize}
				\item $h(d,y)=f^d(y)$.
				\item $g(d,q)=\min\{y\in Y ~:~ f^d(y)=q\}$.\footnote{
				For the purpose of this proof,
				$g(d,\cdot)$ can be any inverse of $f^d$ that is computable in time $O(2^{n'})$.
				We use the concrete example of $g(d,q)=\min\{y\in Y ~:~ f^d(y)=q\}$
				for ease of exposition.}
			\end{itemize}
			$h$ is computable in time $\poly(n',m')$, as required by Definition~\ref{def:eff-query-gen},
			because $f^d$ is efficiently computable (in its input size).
			Furthermore, $h$ satisfies \eqref{eqn:eff-query-gen} since $g$ is,
			by construction, an inverse of $h$.

			Next, we show that $(g,h)$ is $(S, T, \eps)$-hard.
			Suppose the contrary, for contradiction.
			Then there exists an $(S,T)$-solver $\cB=(\cB_1,\cB_2)$ for $g$
			that succeeds with probability greater than $\eps$
			on query distribution $q=h(d,y)=f^d(y)$ where $d,y$ are uniformly random.
			Then $\cB$ is quite literally an inverter for the OWF $f$,
			where $d$ corresponds to the random oracle and $q$ corresponds to
			the challenge value to be inverted: by assumption, $\cB$ satisfies
			\[\Pr_{\substack{d\gets(\zo^{n}\to\zo^{n})\\y\gets\zo^{n'}}}\left[f^d\left(\cB_2^{\cB_1(d)}\left(f^d(y)\right)\right)=f^d(y)\right]>\eps\ .\]
			This contradicts the $(S, T,\eps)$-hardness of $f$.

			Finally, $g$ is computable in ${\bf DTIME}[2^{n'}\cdot\poly(n')]$,
			since it can be solved by exhaustively searching all $y\in\zo^{n'}$
			and outputting the first (i.e., minimum) such that $f^d(y)=q$.
			Note that $n',m'\in\poly(n)$ since $n',m'$ are the input and output sizes
			of a OWF with oracle access to a random oracle mapping $n$ bits to $n$ bits.
			Hence, $g$ is computable in time quasipolynomial in $|d|=\tilde{N}=n2^n$,
			i.e., the size of $g$'s first input.
			In particular, $g$ is computable in time $\poly(\tilde{N})$
			whenever $n',m'\in O(n)$.
			\qedhere
	\end{itemize}
\end{proof}

\begin{remark}
As an example, a one-way function  $f^R\colon\zo^{5n}\to\zo^{5n}$ in the random oracle model with preprocessing $S=2^{3n}$ would give an \emph{adaptive} data structure lower bound for a function with $N$ inputs, $N^5$ outputs, space $S=\Omega(N^3/\poly\log(N))$ and query time $T=\poly\log(N)$. Finding such a function is a big open problem in the area of static data structures~\cite{Siegel04, patrascu08structures, PTW10, Lar12,dgw19}.

\end{remark}

\subsubsection{Cryptography with preprocessing and circuit lower bounds}\label{sec:cktlb}
Although the existence of cryptography in the random oracle model with preprocessing does not have such strong implications in complexity as the existence of regular cryptography, in Theorem~\ref{thm:circuits} we show that it still has significant implications in circuit complexity.

A long-standing open problem in computational complexity is to find a function $f\colon\zo^n\to\zo^n$ which cannot be computed by binary circuits of linear size $O(n)$ and logarithmic depth $O(\log{n})$~\cite[Frontier~3]{Valiant, AB2009}.\footnote{The same question is open even for series-parallel circuits~\cite{Valiant}. A circuit is called series-parallel if there exists a numbering $\ell$ of the circuit's nodes s.t. for every wire $(u,v), \ell(u)<\ell(v)$, and no pair of arcs $(u,v), (u',v')$ satisfies $\ell(u)<\ell(u')<\ell(v)<\ell(v')$.}
We now show that a weak one-way function with preprocessing would resolve this question.

First we recall the classical result of Valiant~\cite{Valiant} asserting that every linear-size circuit of logarithmic depth can also be efficiently computed in the common bits model.
\begin{definition}
A function $f=(f_1,\ldots,f_m)\colon\zo^n\to\zo^m$ has an $(s,t)$-solution in the common bits model if there exist $s$ functions $h_1,\ldots,h_s\colon\zo^n\to\zo$, such that each $f_i$ can be computed from $t$ inputs and $t$ functions $h_i$.
\end{definition}
\begin{theorem}[\cite{erdos1975sparse,Valiant,C08,V09}]\label{thm:valiant}
Let $f\colon\zo^{n}\to\zo^{n}$. For every $c, \eps>0$ there exists $\delta>0$ such that
\begin{enumerate}
\item If $f$ can be computed by a circuit of size $cn$ and depth $c\log{n}$, then $f$ has an $(\delta n/\log\log{n},n^\eps)$-solution in the common bits model.
\item If $f$ can be computed by a circuit of size $cn$ and depth $c\log{n}$, then $f$ has an $(\eps n, 2^{\log{n}^{1-\delta}})$-solution in the common bits model.
\item If $f$ can be computed by a series-parallel circuit of size $cn$ (and unbounded depth), then $f$ has an $(\eps n, \delta)$-solution in the common bits model.
\end{enumerate}
\end{theorem}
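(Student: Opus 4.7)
The plan is to prove all three items via the graph-theoretic depth-reduction framework originated by Valiant and refined by Erd\H{o}s--Graham--Szemer\'edi. I regard a circuit computing $f$ as a DAG $G$ in which sources are inputs, sinks are outputs, and each non-source gate has fan-in two; the circuit of items 1 and 2 has $\Theta(n)$ edges and depth $d=c\log n$, while for item 3 the DAG is series-parallel.

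The central tool is a depth-reduction lemma stating that, for every parameter $r\geq 2$, there is a set $E'\subseteq E(G)$ such that $G-E'$ has depth at most $d/r$, with $|E'|$ controlled by $r$ in two regimes: the weak bound $|E'|=O(|E|/\log r)$ sufficing for item 2, and the refined bound $|E'|=O(|E|\log\log r/\log r)$ needed for item 1. Given such $E'$, I define the common bits $h_1,\dots,h_s$ (with $s=|E'|$) as the Boolean values carried by the edges in $E'$ on input $x$; each $h_j\colon\zo^n\to\zo$ is computed by an induced subcircuit. To evaluate an output $f_i$, trace backwards from $f_i$ in $G-E'$ until reaching either a circuit input or a vertex whose incoming edge lies in $E'$ (whose value is already among the $h_j$). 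Since $G-E'$ has depth at most $d/r$ and fan-in two, the induced backward tree has at most $2^{d/r}$ leaves, so $f_i$ is a function of at most $2^{d/r}$ values drawn from inputs and common bits, which is exactly what the common-bits model requires.

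The three items then follow by choosing $r$ differently. For item 1, pick $r$ a large enough constant so that $2^{d/r}=2^{c(\log n)/r}\leq n^\eps$, i.e.\ $r\geq c/\eps$; invoking the refined lemma yields $|E'|=O(n\log\log r/\log r)$, which for $r=r(c,\eps)$ chosen as a sufficiently slow-growing function of $\log n$ gives $|E'|\leq \delta n/\log\log n$ and backward trees of size at most $n^\eps$. For item 2, set $r=2^{(\log n)^\delta}$; the weak lemma gives $|E'|=O(n/(\log n)^\delta)\leq\eps n$ for large $n$, and backward trees of size $2^{d/r}=2^{O((\log n)^{1-\delta})}$. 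Item 3 is proved separately by induction over the series-parallel parse tree: at each internal node one either splits the block into constantly many independent sub-blocks or extracts a single common bit corresponding to the serial cut, and a careful accounting across the parse tree yields an $(\eps n,\delta)$ solution with $\delta=\delta(c,\eps)$ a constant.

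The main obstacle is establishing the refined $O(|E|\log\log r/\log r)$ edge-removal bound required by item 1. Standard proofs assign each vertex a random level in $[0,\log r)$ and remove every edge whose endpoint levels violate a modular constraint that forces all remaining paths to be short; the weak bound is immediate from linearity of expectation, but the $\log\log r$ improvement requires a more delicate union-bound/layering argument in the spirit of \cite{erdos1975sparse}. The series-parallel case in item 3 sidesteps this entirely and is handled by induction on the decomposition tree, which is technically simpler but requires careful tracking of the constant $\delta$ through the recurrence.
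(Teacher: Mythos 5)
The paper does not prove Theorem~\ref{thm:valiant}; it is cited verbatim from Valiant, Erd\H{o}s--Graham--Szemer\'edi, and the surveys \cite{C08,V09}, and is used as a black box in Theorem~\ref{thm:circuits}. So there is no paper proof to compare against, but I can still assess your reconstruction.

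Your overall plan --- depth-reduce the circuit DAG by deleting a small edge set $E'$, treat the Boolean values on those edges as the common bits, and observe that each output then depends on at most $2^{\text{(new depth)}}$ inputs and common bits --- is exactly the standard Valiant argument, and the series-parallel sketch for item~3 is in the right spirit. The concrete problem is the form of the depth-reduction lemma you rely on. You assert, for depth reduction by a factor $r$, the edge bounds $|E'|=O(|E|/\log r)$ (``weak'') and $|E'|=O(|E|\,\log\log r/\log r)$ (``refined''), with neither bound referencing the depth $d$ of the original DAG. This is not Valiant's lemma and it cannot give items~1 and~2. Valiant's lemma states: in a DAG with $m$ edges and depth $d=2^k$, one may delete at most $m\,i/k$ edges to reduce the depth to $2^{k-i}$; equivalently, to reduce depth by a factor $r$ one deletes $O\!\bigl(m\log r/\log d\bigr)$ edges. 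The dependence on $\log d$ is what produces the $\log\log n$ in item~1: with $d=c\log n$ we have $\log d=\Theta(\log\log n)$, and taking $r$ a sufficiently large \emph{constant} (so $2^{d/r}=n^{c/r}\le n^{\eps}$) gives $|E'|=O(n/\log\log n)$. For item~2 one instead lets $i=\Theta(\log\log n)$ so that $|E'|\le\eps n$; the surviving depth is then $d/2^{i}=(\log n)^{1-\Theta(1)}$, giving the $2^{(\log n)^{1-\delta}}$ arity. With your stated bounds these computations do not go through: the ``weak'' bound forces $r$ to be constant to get $|E'|\le\eps n$, yielding polynomial rather than quasi-polylogarithmic arity in item~2, and your ``refined'' bound, being \emph{larger} than the weak one for $r>4$ (since $\log\log r>1$), is not a refinement at all and does not yield $\delta n/\log\log n$ for any reasonable choice of $r$. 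Replacing your edge-removal estimates with the correct $O(m\log r/\log d)$ bound repairs the argument and makes the rest of your outline sound.
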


Now we show that a weak OWF in the random oracle model with preprocessing (for certain settings of parameters) implies a super-linear circuit lower bound. This proof employs the approach used in~\cite{dgw19,viola2018lower,cgk18,ramamoorthy2019equivalence}.
For ease of exposition, in the next theorem we assume that the preprocessing is measured in bits (i.e., the word size $w$ is a constant number of bits). For this reason, a trivial inverter for a function $f^R:\zo^{n'}\to\zo^{n'}$ requires space $n'2^{n'}$. This assumption is not crucial, and the result easily generalizes to any $w$, in which case the amount of preprocessing is decreased by a factor of $w$.
\begin{theorem}
\label{thm:circuits}
Let $f^R:\zo^{n'}\to\zo^{n'}$ be a $(S, T,\eps)$-hard OWF in the random oracle model with preprocessing,
			for a random oracle  $R:\zo^{n}\to\zo^{n}$, where $n'=O(n)$. We construct a function $G\in\mathbf{P}$ such that:
\begin{enumerate}
\item If $S\geq \omega\left(\frac{n'2^{n'}}{\log{n'}}\right), T\geq2^{\delta n}$ and $\eps=1$ for a constant $\delta>0$, then $G$ cannot be computed by a circuit of linear size and logarithmic depth.
\item If $S\geq \delta n' 2^{n'}, T\geq2^{n^{1-o(1)}}$ and $\eps=1$ for a constant $\delta>0$, then $G$ cannot be computed by a circuit of linear size and logarithmic depth.
\item If $S\geq \delta n' 2^{n'}, T\geq\omega(1)$ and $\eps=1$ for a constant $\delta>0$, then $G$ cannot be computed by a series-parallel circuit of linear size.
\end{enumerate}
\end{theorem}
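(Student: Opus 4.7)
The approach is a single reduction handling all three cases: if $G$ admits the forbidden type of circuit, Valiant's common-bits theorem (Theorem~\ref{thm:valiant}) converts it into an inverter for $f^R$ that contradicts the $(S,T,1)$-hardness. I define $G$ to be the ``full inversion table'' of $f^R$: for $R\in\zo^{n\cdot 2^n}$ (the truth table of the random oracle) and $y\in\zo^{n'}$, let $G^R(y)=\min\{x\in\zo^{n'}:f^R(x)=y\}$ (with a canonical default if $y$ has no preimage), and set
\[
    G : \zo^{n\cdot 2^n} \longrightarrow \zo^{n'\cdot 2^{n'}},\qquad
    G(R) \;=\; \bigl(G^R(y)\bigr)_{y\in\zo^{n'}}.
\]
Since $n'=O(n)$ and $f^R$ is polynomial-time evaluable given $R$'s truth table, brute force computes $G$ in time $2^{O(n)}=\poly(n\cdot 2^n)$, so $G\in\mathbf P$.

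Suppose for contradiction that $G$ admits a circuit of the form forbidden in case $i\in\{1,2,3\}$. Applying Theorem~\ref{thm:valiant}(i) to $G$ (padding to a square function if needed), for any $\eps_V>0$ there is a matching $\delta_V>0$ and common-bit functions $h_1,\dots,h_{s_V}$ such that each output bit of $G$ is a fixed function of at most $t_V$ input bits and at most $t_V$ common bits. The inverter takes $\cA_1(R)=(h_1(R),\dots,h_{s_V}(R))$, of size $s_V$; on query $y$, $\cA_2$ reconstructs the block $G^R(y)$ bit by bit using this circuit-structural recipe, reading $t_V$ advice bits (free) and making at most $t_V$ oracle queries to $R$ per output bit. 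Hence $\cA_2$ makes $n'\cdot t_V$ oracle queries in total, and since $G^R(y)$ is by construction a valid preimage of $y$, the inverter succeeds with probability~$1$.

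Using $\log\log(n'2^{n'})=\Theta(\log n')$, Theorem~\ref{thm:valiant} yields, respectively: in case~(i), $s_V=\Theta(n'2^{n'}/\log n')$ and $n't_V=2^{\eps_V n+O(\log n)}$; in case~(ii), $s_V=\eps_V\cdot n'2^{n'}$ and $n't_V=2^{n^{1-o(1)}}$; in case~(iii), $s_V=\eps_V\cdot n'2^{n'}$ and $n't_V=O(n')$. Choosing $\eps_V$ small enough relative to the hardness constant $\delta$ makes $s_V\leq S$ and $n't_V\leq T$ in each case: case~(i) uses that $S\geq\omega(n'2^{n'}/\log n')$ dominates any constant multiple of $n'2^{n'}/\log n'$ and that $\eps_V<\delta$ matches the query bound; case~(iii) uses that $T\geq\omega(1)$ accommodates the $O(n')$-query inverter. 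In every case the inverter contradicts $(S,T,1)$-hardness, completing the proof.

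The main technical obstacle is calibrating Valiant's constants $(\eps_V,\delta_V)$ against the statement's constants in each case and propagating the $\log\log$ factor correctly in case~(i), where the advice budget is tight only up to the $\log n'$ slack coming from Valiant's common-bits bound. A secondary subtlety is that the common-bits decoder is a non-uniform procedure whose circuit-structural description is absorbed into the non-uniform specification of $\cA_2$, as is standard in the non-uniform preprocessing model of~\cite{dgw19,viola2018lower,cgk18,ramamoorthy2019equivalence}.
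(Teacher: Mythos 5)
Your proposal follows the same high-level blueprint as the paper's proof: build a function $G\in\mathbf{P}$ from the inversion table of $f^R$, apply Valiant's common-bits theorem (Theorem~\ref{thm:valiant}), and convert the resulting decomposition into an $(S,T)$ preprocessing inverter that contradicts $(S,T,1)$-hardness. The structural difference is in how $G$ is parameterized. The paper sets $g(d,q)=\min\{y:f^d(y)=q\}$, partitions the $2^{n'}$ targets into $\ell=n'2^{n'}/\tilde{N}$ blocks of $\tilde{N}/n'$ each (with $\tilde{N}=n2^n$), and defines $G(d,i)=g_i(d,1)\|\cdots\|g_i(d,\tilde{N}/n')$, so that $G:\zo^{\tilde{N}+\log\ell}\to\zo^{\tilde{N}}$ is square from the start; Valiant is then applied at scale $\tilde{N}$ and the block index contributes the factor $\ell$ that multiplies $s$ to give $s\cdot\ell=O(n'2^{n'}/\log n')$. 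You instead output the entire $n'2^{n'}$-bit inversion table in one shot and pad the input, applying Valiant at the larger scale $n'2^{n'}$ directly, which yields the same $s_V=\Theta(n'2^{n'}/\log n')$ without the block bookkeeping. The two calculations coincide, and your route is arguably cleaner; the paper's route avoids the padding step and keeps a single $\tilde{N}$-bit output register. One loose end that you inherit verbatim from the paper: in case~(iii), the constructed inverter makes $\Theta(n')$ oracle queries ($n'$ output bits at $O(1)$ reads each), so ``$T\geq\omega(1)$'' alone does not close the contradiction --- one really wants $T=\Omega(n')$. Your sentence that ``$T\geq\omega(1)$ accommodates the $O(n')$-query inverter'' is not literally correct, but the paper itself only spells out case~(i) (where the $n'$ factor is harmlessly absorbed into the exponent of $2^{\delta n}$) and leaves the same calibration for case~(iii) implicit, so your proposal tracks the paper's level of rigor there.
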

\begin{proof}
Let $\tilde{N}=n2^n$, and let
$g:\zo^{\tilde{N}}\times\zo^{n'}\to\zo^{n'}$ be defined as
\[
g(d,q)=\min\{y\in \zo^{n'} ~:~ f^d(y)=q\} \ .
\]
Let $\ell\coloneqq \frac{n'2^{n'}}{\tilde{N}}$, and let us define $\ell$ data structure problems $g_i\colon\zo^{\tilde{N}}\times [\tilde{N}/n']\to\zo^{n'}$ for $i\in[\ell]$ as follows:
\[
g_i(d,q) = g(d,q+(i-1)\cdot\tilde{N}/n') \ ,
\]
where we identify a binary string from $\zo^{n'}$ with an integer from $[2^{n'}]$. Finally, we define $G\colon\zo^{\tilde{N}+\log{\ell}}\to\zo^{\tilde{N}}$ as
\[
G(d, i) = g_i(d,1) \| \ldots \| g_i(d,\tilde{N}/n')
\ .
\]
We claim that $G$ cannot be computed by a circuit of linear size and logarithmic depth (a series-parallel circuit of linear size, respectively). The proofs of the three statements of this theorem follow the same pattern, so we only present the proof of the first one.

Assume, for contradiction, that there is a circuit of size $O(\tilde{N})$ and depth $O(\log{\tilde{N}})$ that computes $G$. By Theorem~\ref{thm:valiant}, $G$ has an $(s, t)$-solution in the common bits model, where $s=O(\tilde{N}/\log\log{\tilde{N}})=O(2^n n/\log{n})$ and $t=\tilde{N}^{\delta/2}<2^{\delta n}$. Since each output of $g$ is a part of the output of $G(\cdot,i)$ for one of the $\ell$ values of $i$, we have that $g$ has an $(s\cdot\ell, t)$-solution in the common bits model. In particular, $g$ can be computed with preprocessing $s\cdot\ell=O(n'2^{n'}/\log{n})$ and $t=2^{\delta n}$ queries to the input. This, in turn, implies a $(n'2^{n'}/\log{n}, 2^{\delta n})$-inverter for $f^R$.

Finally, we observe that the function $G$ can be computed by $\tilde{N}/n'$ evaluations of $g$, and $g$ is trivially computable in time $2^{n'}\cdot\poly(n')$. Therefore, $G\in {\bf DTIME}[\tilde{N}\cdot 2^{n'}]={\bf DTIME}[2^{O(n)}]={\bf DTIME}[\tilde{N}^{O(1)}]={\bf P}$.
\end{proof}
\begin{remark}
We remark that $\eps=1$ is the strongest form of the theorem, \emph{i.e.}, the premise of the theorem only requires a function $f^R$ which cannot be inverted on \emph{all} inputs. Also, it suffices to have $f^R$ which cannot be inverted by \emph{non-adaptive} algorithms, \emph{i.e.}, algorithms where $\cA_2$ is non-adaptive (see Definition~\ref{def:our-model}). 
\end{remark}

\section*{Acknowledgments} Many thanks to Erik Demaine for sending us
a manuscript of his one-query lower bound with Salil Vadhan~\cite{DV01}. We
also thank Henry Corrigan-Gibbs and Dima Kogan for useful discussions.

The work of AG is supported by a Rabin Postdoctoral Fellowship. The work of TH
is supported in part by the  National  Science Foundation  under grants CAREER
IIS-1149662, CNS-1237235 and CCF-1763299, by the Office of Naval Research under
grants YIP N00014-14-1-0485 and N00014-17-1-2131, and by a Google Research
Award. 
The work of SP is supported by the MIT Media Lab's Digital Currency Initiative,
and its funders; and an earlier stage of SP's work was funded 
by the following grants: NSF MACS (CNS-1413920), 
DARPA IBM (W911NF-15-C-0236), Simons Investigator award agreement dated June 5th, 2012,
and the Center for Science of Information (CSoI), an NSF Science and Technology Center, 
under grant agreement CCF-0939370.
The work of VV is supported in part by NSF Grants CNS-1350619, CNS-1718161
and CNS-1414119, an MIT-IBM grant, a Microsoft Faculty Fellowship and
a DARPA Young Faculty Award.

\bibliographystyle{alpha}
\bibliography{bib}

\end{document}